\DeclareMathAlphabet{\mymathbb}{U}{BOONDOX-ds}{m}{n}
\newcommand{\soft}{\widetilde}
\newcommand\poly{\mathrm{poly}} 
\newcommand\polylog{\mathrm{polylog}} 
\newcommand{\con}{\text{c-}}
\newcommand{\Con }{\text{c-}}
\newcommand{\norm}[1]{\lVert #1 \rVert}
\newcommand{\abs}[1]{\left | #1\right |}
\theoremstyle{definition}
\newtheorem{thm}{Theorem}[section]
\newtheorem{lemma}[thm]{Lemma}
\newtheorem{coro}[thm]{Corollary}
\newtheorem{problem}[thm]{Problem}
\newtheorem{definition}[thm]{Definition}
\begin{document}

\title{
Tight Bound for Estimating Expectation Values
from a System of Linear Equations
}

\date{\today}

\author{Abhijeet Alase}
\author{Robert R. Nerem}
\author{Mohsen Bagherimehrab}
\author{Peter H\o{}yer}
\author{Barry C.\ Sanders}
\affiliation{Institute for Quantum Science and Technology, University of Calgary, 2500 University Drive NW, Calgary, Alberta T2N 1N4, Canada}

\begin{abstract}
The System of Linear Equations  Problem (SLEP) is specified by a complex invertible matrix~$A$, 
the condition number $\kappa$ of $A$,  a vector~$\bm{b}$, a Hermitian matrix $M$ and an accuracy $\epsilon$,
and the task is to estimate $\bm{x}^\dagger M \bm{x}$,
where $\bm{x}$ is the solution vector to the equation~$A\bm{x} = \bm{b}$.
We aim to establish a lower bound on the complexity of the  
end-to-end quantum algorithms for SLEP with respect to~$\epsilon$, 
and  devise a  quantum algorithm that saturates this bound. 
To make lower bounds attainable,  we consider query complexity in the setting in which a block encoding of $M$ is given, i.e.,  a unitary black box $U_M$ that contains $M/\alpha$ as a block for some $\alpha \in \mathbb R^+$.
We show{, by constructing a quantum algorithm and deriving a lower bound,}  that the quantum query complexity for SLEP in this setting
is $\Theta(\alpha/\epsilon)$.
Our lower bound is established by reducing the problem of estimating the mean of a 
black box function to SLEP.
Our~$\Theta(\alpha/\epsilon)$ result tightens and proves the common assertion 
of polynomial accuracy dependence ($\poly(1/\epsilon)$) for SLEP { without making any complexity-theoretic assumptions}, 
and shows that improvement beyond linear dependence 
on accuracy is not possible if $M$ is provided via block encoding. 
\end{abstract}

\maketitle


\section{Introduction}
\label{sec:intro}
Systems of linear equations are essential to nearly every area of science, engineering and mathematics, being particularly valuable in numerical solutions to differential equations and machine-learning applications~\cite{WBL12,Berry2014,LMR2014,RML2014,BWPRWL2017}.
For the System of Linear Equations Problem (SLEP), a size~$N$,
an $N \times N$ complex invertible matrix~$A$, an upper bound $\kappa$ on the condition number  of $A$, a vector~$\bm{b}$, a Hermitian matrix $M$ and an accuracy $\epsilon$ are given,
and the task is to estimate $\bm{x}^\dagger M \bm{x}$,
where $\bm{x}$ is the complex solution vector to the equation~$A\bm{x} = \bm{b}$.
If the number of equations~$N$ is large, such as for big-data applications, solving SLEP is intractable for the best-known classical algorithms, which require polynomial resources in $N$. The landmark quantum algorithm by Harrow, Hassidim and Lloyd (HHL algorithm)~\cite{HHL09}, together with its improvements~\cite{Amb12,CKS17,DYY2020}, generate, to accuracy $\epsilon$, a quantum state $\ket {\bm x}$  encoding the solution $\bm x$
using time and space resources that are polylogarithmic in $N$ and~$1/\epsilon$, which is denoted by $\polylog(N,1/\epsilon)$. SLEP can be solved by estimating 
the expectation value $\bm{x}^\dagger M \bm{x}$ using such quantum algorithms for generating $\ket{\bm{x}}$.  Although it is commonly asserted that estimating $\bm x ^\dagger M \bm x$ requires $\poly(1/\epsilon)$ resources~\cite{ L2021,Gilyn2019, Chia2020}, 
this assertion is proven only under the assumption that $A$ is given by oracles~\cite{HHL09}.
To this end, we establish a lower bound for the complexity of SLEP with respect to $\epsilon$ if $M$ (instead of $A$) is given by an oracle, and we saturate this bound by constructing an end-to-end quantum algorithm for SLEP.

We now summarize the state-of-the-art for the hardness of SLEP in terms of~$\epsilon$.
Hardness results were proved for a restricted version of SLEP in which $b$ is fixed to be a vector with a $1$ in its first entry and zeros elsewhere, and $M$ is fixed to be the diagonal matrix containing a 1 in the first $N/2 = 2^{n-1}$ diagonal entries and zeros in all other diagonal entries~\cite{HHL09}.
We refer to this restriction of SLEP with fixed $M$ and $b$ as F-SLEP ({ See Table \ref{tab:problems} 
for a list of problems considered in this paper}).
Under a reasonable complexity-theoretic assumption ($\textbf{BQP} \neq \textbf{PP}$), a $\polylog(1/\epsilon)$ dependence was ruled out
for any quantum algorithm for F-SLEP that achieves $\poly(\log N,\kappa)$ dependence.
Furthermore, any quantum algorithm solving F-SLEP with $\poly(\log N,\kappa)$ dependence cannot achieve runtime faster than $\mathcal O(1/\epsilon)$ if $A$ is provided by an oracle. As F-SLEP is a restriction of SLEP, these hardness results extend to SLEP as well. However, complexity lower bounds that do not rely on complexity-theoretic assumptions are not known in the case that $A$ is fixed. Moreover, whether a quantum algorithm with subpolynomial accuracy dependence can solve SLEP is not known in this case. 

We next discuss previously developed end-to-end quantum algorithms for SLEP, which mirror the literature on SLEP lower bounds in their focus on restricted cases. For F-SLEP, $M$ is the matrix representation of the observable $\ket{0}\bra{0}\otimes\mathds{1}_{n-1}$
in the computational basis.
Therefore, the expectation value
$\bm{x}^\dagger M \bm{x}$ can be computed by repeatedly preparing the solution state $\ket{\bm{x}}$ and measuring the first qubit in
the computational basis, then recording the fraction of measurements that yield a $0$ outcome~\cite{HHL09}.
Efficient quantum algorithms were developed for  
estimating $|x_i|^2$ or $|x_i-x_j|^2$ for some indices~$i,j$, with $x_i$ the $i$th entry of $\bm x$~\cite{Wan17}.
These cases, which are solved using amplitude estimation, are  restrictions of SLEP to instances where entries of $M$ are zero except for a single~1 on the diagonal, or except for a single~1 and a single $-1$ on the diagonal, respectively.
In a quantum algorithm developed for data fitting, fit accuracy is computed by solving SLEP in the case $M = A$, i.e., the case that $\braket{\bm{x}|M|\bm{x}} \propto \abs{\bm{b}\cdot\bm{x}}$~\cite{WBL12}, where the expectation value is computed by the swap test~\cite{BCW2001}.   
As these algorithms apply only to a few restrictions of SLEP, computing expectation values in general could still be 
prohibitively expensive~\cite{Aaronson2015}; therefore, unclear complexities for general expectation-value estimation obfuscate potential applications of quantum algorithms for SLEP.

We now discuss our claims and the setting in which they are developed. We construct our lower bound and quantum algorithm in the block-encoding framework which provides an alternative over the sparse-access model in which matrices are provided by oracles for locating and evaluating their non-zero entries~\cite{ATS2003, BAC+07, BCCKS2014,CW2012,WBHS2011, BCK15}. A block encoding~$U_H$ of a Hermitian matrix $H$ is a unitary black box $U_H$ that contains $H/\alpha_H $ in the upper-left block for some $\alpha_H \in \mathbb R^+$~\cite{ LC19,CGJ19, Gilyn2019}.
We consider SLEP in the setting in which $M$ is provided by block encoding $U_M$, a version of SLEP we refer to as B-SLEP. We show 
that any quantum algorithm for B-SLEP must make $\Omega(\alpha_M/\epsilon)$ queries to $U_M$. 
We then construct a quantum algorithm that saturates this bound, thereby showing that our lower bound for B-SLEP is tight, meaning that B-SLEP has query complexity $\Theta(\alpha_M/\epsilon)$.
En route to deriving our lower bound for B-SLEP, we evaluate the complexity of estimating the expectation value of a block encoded matrix $M$ with respect to some given quantum state. In particular we show that the query complexity of this problem is also in $\Theta(\alpha_M/\epsilon)$. 
We also derive a query-complexity lower bound for the restriction of SLEP 
to the case in which $A = \mathds{1}_n$ and $M$ is provided via sparse-access. This restriction is equivalent to the problem of estimating the expectation value of a matrix $M$ in the sparse-access setting. 
Our lower bound for this problem is $\Omega(d_M \beta_M/\epsilon)$, 
where the sparsity~$d_M$ of $M$ and a bound~$\beta_M$ on the max-norm of $M$ 
are supplied as inputs to the problem. 
We show that this lower bound is tight by constructing an explicit quantum algorithm.

Methodologically, we utilize the block-encoding framework to limit restrictions on $M$ and to make lower bounds attainable. The key step for deriving our lower bound is reducing the problem of computing the mean of a black-box function to the problem of computing an expectation value of a block-encoded Hermitian matrix (B-EVHM). 
This lower bound is extended to B-SLEP by showing that B-EVHM reduces to B-SLEP. 
Our algorithm saturating this lower bound is constructed by combining a known algorithm for
B-EVHM~\cite{Ral2020} with techniques for generating $\ket{\bm{x}}$.

Our complexity bound for B-SLEP makes rigorous the common assertion of $\poly(1/\epsilon)$ dependence for SLEP and shows that improvement beyond linear-accuracy dependence is impossible if $M$ is given by oracles. 
Our end-to-end quantum algorithm for solving systems of linear equations allows for a variable $M$ as input, whereas previous algorithms consider a restricted $M$. Explicit complexities for SLEP are important to assessing if the measurement step is prohibitively resource-expensive for a given application, thereby addressing the limitation raised by  Aaronson's fourth caveat for quantum linear-equation solvers~\cite{Aaronson2015}. 

An outline of our work is as follows. In \S\ref{sec:background} we give background relevant to our results including a discussion of matrix encodings, quantum algorithms for linear algebra, and  quantum algorithms for estimating expectation values. Next, in \S\ref{sec:approach}, we describe our methodology and define B-SLEP rigorously. We present our lower bounds and quantum algorithms in \S\ref{sec:results}  and discuss these results in  \S\ref{sec:discussion}. Finally, we give our conclusions in  \S\ref{sec:conclusion}.

\begin{table}[t]
\label{tab:problems}
{
\begin{tabular}{|l|l|l|l|}
\hline
\textbf{Abbreviation} & \textbf{Problem Name}                                          & \textbf{Number}         & \textbf{Comment}           \\ \hline
B-SLEP       & Block-access System of Linear Equations Problem       & \ref{pro:SLE1}   & SLEP with block-access to $A$,$M$\\ \hline
F-SLEP       & Fixed System of Linear Equations Problem              & \ref{prob:FSLEP} & SLEP in sparse-access setting\\ 
&&& with the restriction $M=\ket{0}\bra{0}\otimes\mathds{1}_{n-1}$                                 \\ \hline
S-EVHM       & Sparse-access Expectation Value of a Hermitian Matrix & \ref{pro:S-EVHM} & Expectation value of a sparse $M$ \\ \hline
B-EVHM       & Block-access Expectation Value of a Hermitian Matrix  & \ref{pro:B-EVHM} & Expectation value of a block-encoded $M$\\ \hline
S-QLSP       & Sparse-access Quantum Linear System Problem   & \ref{prob:SQLSP} & Output is an approximation of $\ket{\bm{x}}$ \\ \hline
\end{tabular}
\caption{Abbreviation, name and defining feature of all restrictions of SLEP and relevant computational problems considered in this paper.}}
\end{table}

\section{Background}
\label{sec:background}
In this section, we first review the quantum linear systems problem, which is the computational problem of generating
the solution state $\ket{\bm x}$ given $A$, $\bm b$, and $\kappa$, in the sparse-access setting. 
We then review the block-encoding framework of Hermitian matrices and relevant techniques for matrix arithmetic
in this framework. Next, we state results relevant for deriving  lower bound on accuracy dependence for the system of linear equations problem. Finally, we review
known quantum algorithms for computing the expectation value of an unknown state in the context of system of linear equations or
otherwise.

We start with some preliminaries. Denote the set of integers and positive integers by $\mathbb{Z}$ and $\mathbb{Z}^+$ respectively,
and for $N \in \mathbb Z^+$ define
\begin{equation}
    [N] := \{0,1,\dots,N-1 \}.
\end{equation}
Similarly, our convention for indexing rows and columns of a matrix is zero-based. 
We denote the real and complex fields by $\mathbb{R}$ and $\mathbb{C}$ respectively, and the set of positive real numbers by $\mathbb{R}^+$.  A number $ \tilde z$ is an $\epsilon$-additive approximation to another number $z$ if $|z-\tilde z| < \epsilon$.  Although we do not define a specific bit-string representation for real numbers and complex numbers, our results hold for any representation in which a number $z$ has an $r$-bit representation encoding a number $z'$ such that 
$|z-z'|/|z| \in \mathcal O(2^{-r})$, and basic arithmetic operations on representations can be performed to 
$r$ bits of precision in $\poly(r)$ time. 
These condition holds for standard 
representations of real and complex numbers~\cite{Brent1976}. For a vector $\ket{\psi} \in \mathscr{H}$ 
in a finite-dimensional Hilbert space
$\mathscr{H}$, we define $\|\ket{\psi}\| := \sqrt{\braket{\psi |\psi}}$, and for operators $V: \mathscr H \to \mathscr H$ define $\|V\|$ to be the corresponding operator norm. For a matrix $M \in \mathbb C^{N \times N}$, denote $\|M\|_{\mathrm{max}} := \max_{i,j \in [N]}\{|M_{ij}|\}$ as the max norm.  For any $n \in \mathbb Z^+$ we define $\mathds 1_n$ to be the $n$-qubit identity operator. If $n$ is clear from context, we may omit the subscript.

\subsection{Quantum linear system problem in the sparse-access setting}
\label{sec:QLSP}
We now explain the sparse-access model for access to Hermitian matrices. Next, we review a formulation 
of the quantum linear systems problem based on this model.

\subsubsection{Sparse access to Hermitian matrices}
\label{sec:sparseaccess}
The quantum linear systems problem~\cite{HHL09,CKS17} accepts an $N\times N$ matrix $A$
and an $N$-dimensional vector  $\bm{b}$ as inputs.
To devise a quantum algorithm for generating the solution state in $\polylog(N)$ time,
 $A$ and $\bm{b}$ must be encoded in such a way that they can be accessed efficiently. Most notably, such an efficiency requirement rules out encoding these inputs into a string containing a list of entries.

The standard formulation of the quantum linear systems problem utilizes unitary black boxes to encode $A$ and $\bm b$~\cite{HHL09, CKS17}. The vector $\bm{b}$ is given
by a black-box  $U_{\bm{b}} \in \mathcal{U}(\mathscr{H}_2^{\otimes n})$ ($n$-qubit unitary operator) such that $U_{\bm{b}}\ket{0^n} = \ket{\bm{b}}$, where $\ket{\bm b} := \sum_i b_i\ket i/\| \sum_i b_i \ket{i}\|$ and $\{\ket{i}\}$ are computational basis states.
The matrix $A$ is described by sparse-access oracles, i.e., oracles for functions that provide the column index
and the value of the non-zero entries of the matrix $A$ in each given row~\cite{ATS2003}. We next explain
in detail the sparse-access setting that we use in this paper. 

For any $n \in \mathbb{Z}^+$, an \textit{$n$-qubit Hermitian matrix} $H$ is a
$2^n \times 2^n$ Hermitian matrix.
An $n$-qubit Hermitian matrix $H$ is \textit{$d_H$-sparse} for $d_H \in [2^n]$ if each row of $H$ has at most $d_H$ non-zero entries. We call $d_H$ the \textit{sparsity parameter} of~$H$. 
In the sparse-access setting, the matrix values $\{H_{j k}\}$ are encoded into both standard unitary oracles~\cite{BCK15}
\begin{equation}
O_{H_{\rm val}}|j\rangle|k\rangle|z\rangle=|j\rangle|k\rangle\left|z \oplus H_{j k}\right\rangle, 
\end{equation}
\begin{equation}
 O_{H_{\rm loc}}|j\rangle|l\rangle=|j\rangle|H_{\rm loc}(j, l)\rangle,
\end{equation}
where the function $H_{\rm loc}$ maps a row index $j$ and a number $l \in[d_H]$ to the column 
index $H_{\rm loc}(j,l)$ of the $l^{\text{th}}$ non-zero element in row $j$. Observe that the
oracle ${O}_{H_{\rm loc}}$ computes the location of entries in-place, i.e.\ in the same quantum register as~$l$.
The oracle ${O}_{H_{\rm val}}$ accepts a row $j$ and column $k$ index and returns the value $H_{j k}$ in some binary format. The pair of oracles ${O}_{H_{\rm val}}$ and ${O}_{H_{\rm loc}}$ are a sparse-access encoding of $H$.  For convenience, we make the following definition. 
\begin{definition}
 \textit{Sparse access} to a Hermitian matrix  $H$ is a 4-tuple $(d_H, \beta_H,O_{H_{\rm val}},O_{H_{\rm loc}})$,
 where  $\beta_H \ge \|H\|_{\max}$ is an upper bound on the max-norm of $H$, $d_H$ is the sparsity parameter
 of $H$, and where $ O_{H_{\rm val}}$ and  $O_{H_{\rm loc}}$ are unitary black boxes that form a sparse-access encoding of $H$.
\end{definition}
\noindent
We frequently turn to this definition in the description of inputs for computational problems in the sparse-access model.

We always assume that, if any unitary black box $O \in \mathcal{U}(\mathscr{H}_2^{\otimes n})$ is given, then unitary black boxes $O^\dagger, \con O,$ and $\con O^\dagger$ are also given where $\con O := \ket{0}\bra{0}\otimes\mathds{1}_n + \ket{1}\bra{1}\otimes O$.  Consequentially, if we are reporting complexities, we equate the cost of a query to $O^\dagger, \con O,$ and $\con O^\dagger$ to the cost of single query to $O$. The motivation for this assumption is that oracles model subroutines that have unknown internal structure and, given a description of a circuit for $O$, circuits for each $O^\dagger, \con O,$ and $\con O^\dagger$ can always be constructed using a procedure independent of the specific circuit for $O$~\cite{AK2007}. In particular, the circuit for  $\con O$ is found by replacing each gate $G$ in the circuit for $O$ with $\con G$. The circuit for $O^\dagger$ is found by reversing the circuit for $O$ and replacing each gate $G$ with its inverse $G^\dagger$. Finally, the circuit for  $\con O^\dagger$ is found by replacing each gate $G$ in the circuit for $O^\dagger$ with $\con G$.

\subsubsection{Complexity and limitations of the HHL algorithm}
We now state the quantum linear system problem rigorously. Next we review the complexity of the HHL algorithm~\cite{HHL09}, which solves the quantum linear systems problem. We then discuss limitations and caveats of this quantum algorithm. 

Given a  system of linear equations 
\begin{equation} \label{eq:Ax = b}
A\bm x = \bm b,
\end{equation} define the solution state $\ket{\bm x}$ to be the quantum state $\ket{\bm x} := \sum_i x_i\ket i/\| \sum_i x_i \ket{i}\|$. The following problem statement is adapted from Ref.~\cite{CKS17}.

\begin{problem}[Sparse-access Quantum Linear System Problem (S-QLSP)]
\label{prob:SQLSP}
Given $n$ qubits, 
a $\kappa \ge 1$,
sparse access $(d_A,\beta_A, O_{A_{\rm val}},O_{A_{\rm loc}})$ to a $2^n\times 2^n$ invertible Hermitian matrix $A$ 
 satisfying $\norm{A} \le 1$ and $\norm{A^{-1}} \le  \kappa$,
an accuracy $\epsilon \in (0,2]$, and an $n$-qubit unitary black box $U_{\bm b}$,
generate with probability at least $2/3$, and return ``failure'' otherwise,
a state $\ket{\tilde{\bm x}}$ such that $\| \ket{\tilde{\bm x}}-\ket{\bm x} \| \le \epsilon$ 
where  $\ket{\bm{x}}$ is the solution state to the system of linear equations $A\bm x = \bm b$ 
 and $\bm{b}$ is the $2^n$-dimensional vector with entries $b_i = \braket{i|U_{\bm{b}}|0^n}$.
\end{problem}
\noindent {Tables~\ref{tab:io} and~\ref{tab:syms} list input and output for the computational problems
discussed in this paper.}

\begin{table}[t]
{
\begin{tabular}{|l|l|l|}
\hline
\textbf{Abbreviation} & \textbf{Input}                                          & \textbf{Output (approximation of)}                    \\
\hline
B-SLEP       & $(n,\kappa,(\alpha_A,a_A,0, U_A),(\alpha_M,a_M,0, U_M),\epsilon,U_{\bm b})$ & $\bm{x}^\dagger M\bm{x}$   \\
\hline
F-SLEP       & $(n,\kappa,(d_A, \beta_A,O_{A_{\rm val}},O_{A_{\rm loc}}),\epsilon,U_{\bm b})$              &        $\bm{x}^\dagger (\ket{0}\bra{0}\otimes\mathds{1}_{n-1})\bm{x}$                           \\
\hline
S-EVHM       & $(n,(d_M, \beta_M, O_{M_{\rm val}},O_{M_{\rm loc}}),\epsilon,V)$ & $\braket{{0^n}|V^\dagger M V|{0^n}}$ \\ \hline
B-EVHM       & $(n,(\alpha_M, a_M,0, U_M),\epsilon,V)$ & $\braket{{0^n}|V^\dagger M V|{0^n}}$ \\ \hline
S-QLSP       & $(n,\kappa,(d_A,\beta_A, O_{A_{\rm val}},O_{A_{\rm loc}}),\epsilon,U_{\bm b})$ & $\ket{\bm{x}}$ \\ \hline
\end{tabular}
\caption{Input and output for the computational problems listed in Table \ref{tab:problems}. \label{tab:io}}}
\end{table}

\begin{table}[t]
{
\begin{tabular}{|l|l|}
\hline
\textbf{Symbol} & \textbf{Description}  \\ \hline
$n$ & Number of qubits \\ \hline
$\kappa$ & Upper bound on the condition number \\ \hline
$A$, $M$ & Hermitian matrices of size $2^n \times 2^n$\\ \hline
$\bm{x}$, $\bm{b}$ & Vectors of length $2^n$ \\ \hline
$\alpha_A$ ($\alpha_M$) & Scaling factor for the block encoding of $A$ ($M$)\\ \hline
$a_A$ ($a_A$) & Ancilla qubits for block encoding of $A$ ($M$)\\ \hline
$U_A$ ($U_M$) & Unitary block encoding of $A$ ($M$)\\ \hline
$d_A$ ($d_M$) & Sparsity of $A$ ($M$)\\ \hline
$\beta_A$ ($\beta_M$) & Upper bound on the max-norm of $A$ ($M$)\\ \hline
$O_{A_{\rm val}}$ ($O_{M_{\rm val}}$) & Oracle that returns the entry of $A$ ($M$) in a given row and column\\ \hline
$O_{A_{\rm loc}}$ ($O_{M_{\rm loc}}$) & Oracle that returns the non-zero entries of $A$ ($M$) in a given row\\ \hline
$\epsilon$ & Additive error tolerance\\ \hline
$U_{\bm b}$ & Arbitrary black-box unitary. The vector $\bm{b}$ is generated by $U_{\bm b}$\\ \hline
$V$ & Arbitrary black-box unitary\\ \hline
\end{tabular}
\caption{Description of symbols used in Table \ref{tab:io}.\label{tab:syms}}}
\end{table}

The crucial step in the HHL algorithm is a call to the phase-estimation algorithm, which
in turn makes use of a quantum algorithm for sparse Hamiltonian simulation. The complexity of the HHL algorithm
is $\poly(\log N, \kappa, 1/\epsilon)$, where the $\poly(1/\epsilon)$ dependence comes from the phase-estimation step. This complexity has been improved to $\poly(\log N, \kappa, \log(1/\epsilon))$
by replacing phase estimation with quantum-walk techniques for implementing $A^{-1}$~\cite{CKS17}.
These techniques have been generalized to the larger class of block-encoded matrices~\cite{LC19},
which we discuss in the next section.

The HHL algorithm has four key limitations on its ability to solve linear equations in practice~\cite{Aaronson2015}. First, this algorithm only maintains an exponential speedup over classical techniques for instances of S-QLSP  for which $\kappa \in \mathcal O(\log(N))$. Next, the algorithm only applies to systems of linear equations in which the oracle $U_{\bm b}$ can be efficiently constructed. Similarly, the algorithm is only applicable if $A$ can be encoded by sparse-access oracles, and the speedup is lost if $d_A \notin \polylog(N)$. 
Finally, the output of HHL is a quantum state $\ket {\bm x}$ encoding the solution $\bm x$. As a result, the HHL algorithm is only useful for finding quantities that can be efficiently  computed using $\ket {\bm x}$. The purpose of our work is to address this final caveat by evaluating the complexity of estimating the quantities of the form $\bm x^\dagger M \bm x$ given $M$.

\subsection{Block-encoding techniques for the quantum linear system problem }
\label{sec:backgroundqwalk}

Quantum-walk techniques for implementing $H^{-1}$ and Hamiltonian simulation (implementing 
$\text{e}^{\text{i}H}$) are applicable not only to sparse matrices, but to all matrices
that can be encoded efficiently as blocks of larger unitary matrices.
This realization has fueled development of quantum algorithms for linear algebra
in which input matrices are provided through a block encoding~\cite{LC19,CGJ19, Gilyn2019}. We also derive our results in the block-encoding setting. In this section, we review block encoding
of Hermitian matrices and describe techniques for matrix arithmetic that we use in the derivation of our results.

\subsubsection{Block encoding for Hermitian matrices}
\label{sec:block encodings}
Block encoding is a way of specifying a matrix input to a computational problem, and therefore provides an
alternative to sparse-access encoding reviewed in \S\ref{sec:sparseaccess}. 
Intuitively, a block encoding of a Hermitian matrix $H$ is a unitary matrix $U_H$ which contains $H$ as an upper-left block.
We now adapt a rigorous definition of block encoding~\cite{LC19}.
\begin{definition}\label{def:block encoding}
Let $H$ be an $n$-qubit Hermitian matrix, $\alpha_H, \delta_H > 0 $ and $a_H \in \mathbb{Z}^+$. 
A $U_H  \in \mathcal{U}(\mathscr{H}_2^{\otimes (n+a_H)})$ is an $(\alpha_H,a_H, \delta_H)$-block encoding of $H$ if 
\begin{equation}
    \norm{ H  - \alpha_H \bra{0^{a_H}}   U_H \ket{0^{a_H}} }\leq \delta_H,
\end{equation}
where $ \bra{0^{a_H}}   U_H \ket{0^{a_H}}  := \sum_{i,j \in [2^n]}\left(\bra{0^{a_H}}\bra{i} U_H \ket{0^{a_H}}\ket{j} \right) \ket{i}\bra{j}$.
\end{definition}

As $\alpha_H,a_H$ and $ \delta_H$ are  provided along with $U_H$ as inputs to a problem, we use the following convention. 
\begin{definition}
\textit{Block access} to a Hermitian matrix $H$ is a tuple $(\alpha_H,a_H, \delta_H, U_H)$, where $U_H$ is a unitary black box that is a $(\alpha_H,a_H, \delta_H)$ block encoding of $H$.
\end{definition}
\noindent 
If block access $(\alpha_H,a_H, \delta_H, U_H)$ to $H$ is given, we assume that
$U_H^\dagger$, $\con U_H$ and $\con U_H^\dagger$ are also given.

If a Hermitian matrix is given by sparse-access oracles, then a previously-developed algorithm can be used to construct
a block-encoding for the same Hermitian matrix~\cite{LC19}. 
We now state a previously-known lemma~\cite{LC19} that gives the query complexity of obtaining a block encoding 
from a sparse-access encoding. We adapt this lemma to include the complexity in terms of 2-qubit gates.
\begin{lemma}\label{lem:block encoding sparse access}
Given a number of qubits $n \in \mathbb{Z}^+$ and sparse access  $(d_H,\beta_H, O_{H_{\rm val}},O_{H_{\rm loc}})$ to an $n$-qubit Hermitian matrix $H$, 
 block access $\left(d_H\beta_H, 2, \epsilon, U_H\right)$ to $H$ can be constructed such that $U_H$  can be implemented 
with $\mathcal{O}(1)$ queries to sparse-access oracles for $H$ and 
$\mathcal{O}\left(n+\log ^{2.5}\left(\frac{d_H\beta_H}{\epsilon}\right)\right)$ additional 2-qubit gates.
\end{lemma}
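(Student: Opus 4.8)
The plan is to give an explicit construction of $U_H$ from the four sparse-access oracles along the lines of the standard sparse-to-block-encoding reduction~\cite{LC19}, and then to account separately for the oracle queries and the auxiliary two-qubit gates. Write $\alpha_H:=d_H\beta_H$, and note that $\beta_H\ge\norm{H}_{\max}$ guarantees $\abs{H_{jk}}/\beta_H\le 1$, so every amplitude below is well defined. The target is a unitary with $\alpha_H\bra{0^{a_H}}\bra{i}U_H\ket{0^{a_H}}\ket{j}=H_{ij}$ up to an additive error that I will drive below $\epsilon$; equivalently, the selected block should equal $H/\alpha_H$.

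First I would assemble a state-preparation subroutine $T$ that maps a row label $\ket j$ together with fresh ancillas to a ``row state'' carrying the entries of $H$: (i) apply a uniform-superposition (diffusion) operator on an index register to obtain $\tfrac{1}{\sqrt{d_H}}\sum_{l\in[d_H]}\ket l$; (ii) call $O_{H_{\rm loc}}$ once to replace each $\ket l$ in place by the column label $\ket{H_{\rm loc}(j,l)}$; and (iii) call $O_{H_{\rm val}}$ once to write $H_{j,H_{\rm loc}(j,l)}$ into a scratch register, apply a controlled rotation that sets the $\ket 0$ amplitude of a flag qubit to $\sqrt{\abs{H_{j,H_{\rm loc}(j,l)}}/\beta_H}$ (the modulus carried by the rotation angle, and half of $\arg H_{j,H_{\rm loc}(j,l)}$ carried by a relative phase), then uncompute the scratch with a second $O_{H_{\rm val}}$ call. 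I would then take $U_H=T^\dagger\,(\mathrm{SWAP})\,T$, where $\mathrm{SWAP}$ exchanges the system register with the index register, and where the index register, flag qubit, and scratch register are all returned to $\ket{0^{a_H}}$ on the encoded block. Since $T$ and $T^\dagger$ each invoke the oracles a constant number of times, this costs $\mathcal O(1)$ queries.

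Correctness follows by evaluating the block entry $\bra{0^{a_H}}\bra{i}U_H\ket{0^{a_H}}\ket{j}=\bra{\Psi_i}\mathrm{SWAP}\ket{\Psi_j}$ with $\ket{\Psi_j}:=T\ket{0^{a_H}}\ket{j}$. Matching the system label forces the surviving index of $\ket{\Psi_j}$ to equal $i$ and matching the index label forces that of $\ket{\Psi_i}$ to equal $j$, while both flag qubits are projected onto $\ket 0$; the two diffusions contribute a factor $1/d_H$ and the two rotations a factor $1/\beta_H$, leaving $\sqrt{\abs{H_{ij}}}\,\sqrt{\abs{H_{ji}}}$ times a product of phases. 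Because the left state enters conjugated and $H$ is Hermitian ($H_{ji}=\overline{H_{ij}}$, hence $\abs{H_{ji}}=\abs{H_{ij}}$), the moduli multiply to $\abs{H_{ij}}$, and the symmetric half-angle phase assignment makes the phases combine to $\arg H_{ij}$. The block entry is therefore exactly $H_{ij}/\alpha_H$, so exact arithmetic yields an exact block encoding and the residual $\delta_H=\epsilon$ stems entirely from realizing the rotation with finitely many gates.

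The gate count splits into two pieces, and I expect the second to be the main obstacle. The diffusion operator, the $\mathrm{SWAP}$ of two $\mathcal O(n)$-qubit registers, and the register bookkeeping together use $\mathcal O(n)$ two-qubit gates. The controlled rotation in step (iii), however, must reversibly compute $\arcsin\sqrt{\abs{H_{jk}}/\beta_H}$ (and the phase) from the binary string returned by $O_{H_{\rm val}}$, and the delicate point is fixing the bit precision $b$. Each block entry is a product of two amplitudes, so a precision-$b$ rotation induces per-entry error $\mathcal O(\alpha_H 2^{-b})$; since $\tilde H-H$ inherits the $d_H$-sparsity one has $\norm{\tilde H-H}\le d_H\norm{\tilde H-H}_{\max}=\mathcal O(d_H^2\beta_H 2^{-b})$, and demanding this be at most $\epsilon$ forces $b=\mathcal O(\log(d_H\beta_H/\epsilon))$. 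A reversible circuit evaluating $\arcsin\sqrt{\,\cdot\,}$ to $b$ bits costs $\mathcal O(b^{2.5})=\mathcal O(\log^{2.5}(d_H\beta_H/\epsilon))$ gates. Summing the two contributions gives the claimed $\mathcal O\!\left(n+\log^{2.5}(d_H\beta_H/\epsilon)\right)$ auxiliary gates together with $\mathcal O(1)$ oracle queries, completing the proof.
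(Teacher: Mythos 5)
Your overall architecture (state preparation $T$ built from a diffusion step, one call to $O_{H_{\rm loc}}$, and $O_{H_{\rm val}}$ plus a controlled rotation; then $U_H = T^\dagger\,\mathrm{SWAP}\,T$; then a bit-precision analysis giving $\mathcal O\bigl(n+\log^{2.5}(d_H\beta_H/\epsilon)\bigr)$ gates and $\mathcal O(1)$ queries) is the same as the paper's, and your query and gate accounting is sound. However, your correctness argument has a genuine gap: with $\mathrm{SWAP}$ acting only on the system and index registers, nothing projects the flag qubit onto $\ket 0$. The ancilla projections $\bra{0^{a_H}}\cdots\ket{0^{a_H}}$ act on the input side of $T$ and the output side of $T^\dagger$, not in the middle where the flag lives, so the block entry you compute is actually
\begin{equation*}
\bra{\Psi_i}\mathrm{SWAP}\ket{\Psi_j}
=\frac{1}{d_H}\left(\overline{a_{ij}}\,a_{ji}+\overline{b_{ij}}\,b_{ji}\right),
\qquad
a_{jk}\propto\sqrt{\frac{|H_{jk}|}{\beta_H}},\quad
b_{jk}=\sqrt{1-\frac{|H_{jk}|}{\beta_H}},
\end{equation*}
because the flag components of bra and ket overlap on both $\ket 0$ and $\ket 1$. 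The garbage term $\overline{b_{ij}}\,b_{ji}=1-|H_{ij}|/\beta_H$ is of order one, so after multiplying by $\alpha_H=d_H\beta_H$ the encoded matrix is off by entries of size up to $\beta_H$ — not by $\epsilon$. This is precisely what the paper's construction is engineered to avoid: there the prepared state is $\ket{\psi_j}=\ket{0}\otimes\ket{j}\otimes\ket{\varphi_j}$ with an \emph{extra} leading qubit in $\ket 0$, and $U_{\rm swap}$ exchanges the $(n+1)$-qubit block (leading qubit $+$ row register) with the $(n+1)$-qubit block (flag $+$ column register). After that swap the ket's flag qubit sits where the bra has $\bra 0$, and the bra's flag is matched against the ket's leading $\ket 0$, which is what kills both $b$ branches. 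The fix to your construction is therefore: in addition to swapping system with index, swap the flag qubit with one further ancilla qubit held in $\ket 0$.

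A second, smaller error: your phase convention encodes the wrong matrix. With the symmetric assignment of $+\arg(H_{jk})/2$ on both sides, the surviving term is $\overline{a_{ij}}\,a_{ji}\propto|H_{ij}|\,\e^{-\mathrm{i}\arg H_{ij}}=\overline{H_{ij}}$ (using $\arg H_{ji}=-\arg H_{ij}$ for Hermitian $H$), so your unitary block-encodes $\overline{H}=H^{\mathrm T}$ rather than $H$; for complex Hermitian $H$ these differ. The paper's amplitude $\sqrt{H_{jk}^{*}/\beta_H}$, i.e.\ phase $-\arg(H_{jk})/2$, is the correct choice, giving $\overline{a_{ij}}\,a_{ji}=H_{ij}/\beta_H$. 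Once these two points are repaired, your precision analysis (rotation precision $b\in\mathcal O(\log(d_H\beta_H/\epsilon))$, $\mathcal O(b^{2.5})$ gates for the reversible arcsine, $\mathcal O(n)$ gates for diffusion and swaps) matches the paper's and completes the proof.
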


\begin{proof}
Define $\{\ket{\psi_{j}}:=\ket{0}\otimes \ket{j} \otimes \ket{\varphi_j} 
\in \mathscr{H}_2 \otimes \mathscr{H}_2^{\otimes n} \otimes \mathscr{H}_2^{\otimes n+1}: j \in[2^n]\}$, where 
\begin{align}
\label{eq:walkstates}
\ket{\varphi_j} :=  \frac{1}{\sqrt{d_H}} \sum_{\substack{k = H_{\rm loc}(j,\ell)\\ \ell \in [d_H]}}
\left(\sqrt{\frac{H_{j k}^{*}}{\beta_H}}\ket{0}\ket{k}+
\sqrt{1-\frac{\left|H_{j k}\right|}{\beta_H}}\ket{1}\ket{k}\right)
\quad
\forall\, j \in [2^n],
\end{align}
with $\sqrt{\bullet}$ denoting the principal square root \footnote{Note that special care must be taken 
in choosing the sign of the square root if $H_{jk} \in \mathbb{R}^-$~\cite{BC12}.}.
Let $T \in \mathcal{U}(\mathscr{H}_2 \otimes \mathscr{H}_2^{\otimes n} \otimes \mathscr{H}_2^{\otimes (n+1)})$ 
be a unitary operator with action
\begin{equation}
T:\ket{0^{n+2}}\ket{j}\mapsto \ket{\psi_{j}} \quad \forall\, j \in [2^n],
\end{equation}
which prepares the state $\ket{\psi_{j}}$ conditional on the third register being in the state $\ket{j}$.
Let 
\begin{equation}
R_T := 2 T T^{\dagger}-\mathds{1}=\sum_{j=1}^{2^n}|j\rangle\langle j| \otimes\left(2\left|\varphi_{j}\right\rangle\left\langle\varphi_{j}\right|-1\right)
\end{equation}
be the controlled Householder reflection~\cite{Hou1958}, i.e.\ a reflection about $\left|\varphi_{j}\right\rangle$ conditional 
on the state $\ket{j}$ of the third register. 
Let $U_{\rm swap} \in \mathcal{U}(\mathscr{H}_2 \otimes \mathscr{H}_2^{\otimes n} \otimes \mathscr{H}_2^{\otimes n+1})$
be the operator that swaps the first $n+1$ qubits with the remaining $n+1$ qubits, respectively. Then a 
 $\left(d_H\beta_H, n+2, \epsilon\right)$ block encoding of $B$ is given by the unitary operator (Lemma 6, \cite{LC19})
\begin{equation}\label{eq:walkoperator}
    U_H = T^\dagger U_{\rm swap}R_T T.
\end{equation}
We next explain how to implement this unitary operator. 

A procedure for preparing $\left|\varphi_{j}\right\rangle$ from the $\ket{0}$ state 
is key to performing $T$ and $R_T$. 
A reflection about $\left|\varphi_{j}\right\rangle$ is achieved by performing inverse state preparation, 
reflecting about $|0\rangle,$ and then performing state preparation. Note that implementing $T$ is equivalent to preparing the 
state $\left|\varphi_{j}\right\rangle$ conditional on the third register being in the
state $\ket{j}$. 
Following the procedure given in the proof of Lemma~10~\cite{BCK15}, 
the operator $T$ can be implemented as follows.
First, a Hadamard gate is performed on 
$\lceil\log{d_H}\rceil$ qubits, and then the oracle $H_{\rm loc}$ is applied to obtain the state
\begin{equation}
    \frac{1}{\sqrt{d_H}} \sum_{\ell \in [d_H]}\ket{0}\ket{H_{\rm loc}(j,l)}.
\end{equation}
The final step consists of using the  oracle $O_{H_{\rm val}}$ and applying controlled rotation to the ancilla qubit
to prepare the desired state. The unitary operators 
$T$, $T^\dagger$ and $W$ can be implemented up to error at most $\epsilon$ in operator norm using $\mathcal O(1)$ 
queries to the oracles encoding $H$ and $\mathcal O\left(n +\log ^{2.5}\left(d_H\beta_H / \epsilon\right)\right)$ additional $2$-qubit gates~\cite{BC12}. As $U_{\rm swap}$ requires $\mathcal{O}(n)$ 2-qubit gates, the complexities follow readily.

\end{proof}

\noindent This lemma concludes our discussion of block access to a Hermitian matrix.

\subsubsection{Matrix arithmetic in the block-encoding setting}
We now review the matrix arithmetic techniques that we use for constructing a quantum algorithm for
solving B-SLEP and for constructing our lower bound. 
Quantum algorithms for solving a system of linear equations can be reduced to using a block encoding of 
a matrix~$A$ to construct a block encoding of $A^{-1}$. The block encoding of $A^{-1}$ can be obtained by an algorithm for quantum signal processing~\cite{LC19},
which maps a block encoding of a given matrix $H$ 
to that of $\mathcal P(H)$ for some given polynomial $\mathcal P$ of fixed degree and fixed coefficients.
We also review a result that provides a polynomial
for approximating the inversion function $\operatorname{inv}: x \mapsto 1/x$.

The complexity of quantum signal processing is given 
in the following theorem.
\begin{thm}[\cite{Gilyn2019}] \label{thm:P(H)} Given block access $(\alpha_A, a_A, \delta_A, U_A)$ to a 
Hermitian matrix $A$ of any size, an error $\sigma > 0$ and
a polynomial $\mathcal P: \mathbb R \to \mathbb R$ of degree $d$ satisfying $|\mathcal P(x)| \leq 1/2$ for all $x \in [-1,1]$, 
then block access $(1,a_A+2, 4d\sqrt{\delta_A/\alpha_A}  + \sigma, U_{\mathcal P(A/\alpha_A)})$ to  $\mathcal P(A/\alpha_A)$ 
can be constructed such that $U_{\mathcal P(A/\alpha_A)}$ makes
 $2d+1$ queries to $U_A$ and $\mathcal O(a_Ad)$
additional 2-qubit gates.
\end{thm}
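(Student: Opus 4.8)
The plan is to recognize this theorem as a specialization of the quantum singular value transformation (QSVT) machinery of Ref.~\cite{Gilyn2019} to the Hermitian eigenvalue-transform case, and to assemble it from two ingredients: an exact QSVT construction that realizes $\mathcal P$ applied to the block that $U_A$ actually encodes, and a robustness bound that absorbs the block-encoding error $\delta_A$. First I would normalize the problem. Writing $\Pi := \ket{0^{a_A}}\!\bra{0^{a_A}} \otimes \mathds 1$, the unitary $U_A$ encodes the matrix $\hat A := \Pi U_A \Pi$, and Definition~\ref{def:block encoding} gives $\norm{A/\alpha_A - \hat A} \le \delta_A/\alpha_A$, where $A/\alpha_A$ is Hermitian with eigenvalues in $[-1,1]$ (up to the same $\delta_A/\alpha_A$ slack), so $\mathcal P$ may be applied to its spectrum. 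My target is $\mathcal P(A/\alpha_A)$, but the circuit naturally produces the (singular-value, equivalently eigenvalue) transform $\mathcal P(\hat A)$; the final step will control the discrepancy between these two.

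Next I would build an exact block encoding of $\mathcal P(\hat A)$. The standard route splits $\mathcal P = \mathcal P_{\mathrm e} + \mathcal P_{\mathrm o}$ into its even and odd parts, each of degree at most $d$ and, by the hypothesis $\abs{\mathcal P(x)} \le 1/2$ together with $\mathcal P_{\mathrm e}(x) = \tfrac12(\mathcal P(x)+\mathcal P(-x))$ and the analogous formula for $\mathcal P_{\mathrm o}$, each bounded by $1/2$ on $[-1,1]$. For a definite-parity real polynomial the QSVT framework guarantees phase factors $\phi_0,\dots,\phi_d$ so that the alternating sequence of $U_A$ and $U_A^\dagger$ interleaved with the projector-controlled rotations $\e^{\mathrm i \phi_k (2\Pi - \mathds 1)}$ block-encodes that polynomial of $\hat A$ using $d$ queries; each rotation is a multiply-controlled single-qubit gate on the $a_A$ flag qubits, contributing $\mathcal O(a_A)$ gates and one extra ancilla. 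Because the sequence realizes the polynomial as the real part of the encoded block, I would extract the Hermitian part and combine the two parity components through a linear-combination-of-unitaries step; the $\abs{\mathcal P}\le 1/2$ normalization is exactly what keeps the combined subnormalization at $\alpha = 1$. Reading the resource counts off this construction yields the claimed $2d+1$ queries to $U_A$, the $a_A + 2$ total ancilla qubits (the two extra qubits host the phase rotation and the real-part/LCU selector), and the $\mathcal O(a_A d)$ additional two-qubit gates. The classical computation of the $\phi_k$ to finite precision and the synthesis of the rotations introduce an operator-norm error that I would lump into the parameter $\sigma$.

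Finally I would bound the error incurred by using $\hat A$ in place of $A/\alpha_A$. The ideal circuit block-encodes $\mathcal P(\hat A)$ exactly (up to the $\sigma$ above), so it remains to estimate $\norm{\mathcal P(\hat A) - \mathcal P(A/\alpha_A)}$. The naive Lipschitz estimate scales linearly in $\delta_A/\alpha_A$ with a constant $\max_{[-1,1]}\abs{\mathcal P'}$, which by Markov brothers' inequality can be as large as $d^2$, so it is too crude. The robustness lemma of Ref.~\cite{Gilyn2019} instead gives the sharper $\norm{\mathcal P(\hat A) - \mathcal P(A/\alpha_A)} \le 4 d\sqrt{\delta_A/\alpha_A}$, where the square root and the single factor of $d$ come from how a perturbation of the encoded block propagates through the $d$-step alternating sequence rather than through the polynomial pointwise. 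Adding this to the phase-factor error and invoking the triangle inequality gives total error $4d\sqrt{\delta_A/\alpha_A} + \sigma$, establishing block access $(1, a_A+2, 4d\sqrt{\delta_A/\alpha_A}+\sigma, U_{\mathcal P(A/\alpha_A)})$. I expect the robustness step to be the main obstacle: securing the $\sqrt{\delta_A/\alpha_A}$ scaling rather than a weaker $d^2\,\delta_A/\alpha_A$ bound is precisely where the argument must exploit the internal structure of the QSVT sequence, and checking that $\mathcal P$ meets the boundedness and parity hypotheses of both the exact construction and its robust version is where the $1/2$ bound and the even/odd split earn their keep.
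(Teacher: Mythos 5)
The paper offers no proof of this statement: it is imported, with notation adapted, from Ref.~\cite{Gilyn2019}, and the text following the theorem only remarks that the original proof constructs a circuit for $U_{\mathcal P(A/\alpha_A)}$. Your reconstruction follows that original proof essentially step for step --- the even/odd parity split of $\mathcal P$ preserving the $1/2$ bound, the alternating QSVT phase sequences combined via a linear-combination-of-unitaries step to account for the $2d+1$ queries, $a_A+2$ ancillas and $\mathcal O(a_A d)$ gates, and the robustness lemma of that same reference supplying the $4d\sqrt{\delta_A/\alpha_A}$ term in place of a cruder $d^2\,\delta_A/\alpha_A$ Lipschitz--Markov bound --- so it is essentially the same approach as the cited source.
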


\noindent The original proof of Theorem \ref{thm:P(H)} constructs a circuit for implementing $U_{P(A/\alpha_A)}$. Circuits for 
$U^\dagger_{P(A/\alpha_A)}$, $\con U_{P(A/\alpha_A)}$ and $\con U^\dagger_{P(A/\alpha_A)}$ can be constructed using the 
procedure explained in the preliminaries. Note that $\con \con U_A$
can be constructed using $\con U_A$ and two Toffoli gates.

We use the following corollary about approximation of the function inv$(x)$ with polynomials.
\begin{coro}[\cite{Gilyn2019}]\label{thm:poly approx of 1/x}
For any $\epsilon, \delta \in (0,1/2]$, a polynomial $\mathcal P:\mathbb R \to \mathbb R$ with odd degree $\mathcal O\left(\frac{1}{\delta}\log\left(\frac{1}{\epsilon}\right)\right)$ exists such that for all $x \in [-1,1] \setminus [-\delta, \delta]$ then  $\left|\mathcal P(x)\right|\leq1$
and
\begin{equation}
    |\mathcal P(x) - f(x)| < \epsilon,
\end{equation}
where $f(x) = \frac{3\delta}{4}\operatorname{inv}(x)$.
\end{coro}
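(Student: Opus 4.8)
The plan is to reduce the statement to a pure approximation-theory fact and then absorb the normalization factor $\tfrac{3\delta}{4}$. First I would observe that the uniform bound $|\mathcal P(x)|\le 1$ is imposed only on the punctured domain $[-1,1]\setminus[-\delta,\delta]$, exactly the set on which the approximation guarantee is required. Since $|f(x)|=\tfrac{3\delta}{4}|1/x|\le\tfrac{3\delta}{4\delta}=\tfrac34$ for $|x|\in[\delta,1]$, the bound $|\mathcal P(x)|\le 1$ there is an automatic consequence of $|\mathcal P(x)-f(x)|<\epsilon\le\tfrac14$ by the triangle inequality. Hence no control of $\mathcal P$ inside the gap $(-\delta,\delta)$ is needed, and the whole content is the construction of an \emph{odd} polynomial of the stated degree that approximates $f$ on $[\delta,1]$; by oddness of both $\mathcal P$ and $\operatorname{inv}$ the estimate then extends to $[-1,-\delta]$ for free.

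Second, I would build the polynomial. Writing $\mathcal P=\tfrac{3\delta}{4}\,\mathcal P_0$, it suffices to find an odd $\mathcal P_0$ with $\tfrac{3\delta}{4}|\mathcal P_0(x)-1/x|<\epsilon$ on $[\delta,1]$. The starting point is the smoothed window $g(x):=\dfrac{1-(1-x^2)^{b}}{x}$, which is manifestly an odd polynomial of degree $2b-1$ and whose deviation from $1/x$ is $|g(x)-1/x|=\dfrac{(1-x^2)^{b}}{|x|}\le\dfrac{e^{-b\delta^2}}{\delta}$ on $[\delta,1]$. Thus $\tfrac{3\delta}{4}|g(x)-1/x|\le\tfrac34 e^{-b\delta^2}$, and the choice $b=\Theta\!\big(\tfrac{1}{\delta^2}\log\tfrac{1}{\epsilon}\big)$ makes this at most $\epsilon/2$. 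Naively $g$ has degree $\Theta(\tfrac{1}{\delta^2}\log\tfrac1\epsilon)$, too large by a factor $1/\delta$; the gain comes from truncating the Chebyshev expansion of $g$.

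Third, and this is the crux, I would expand $g$ in Chebyshev polynomials $T_k$ using the closed form $g(x)=\sum_{j=0}^{b-1}c_j\,T_{2j+1}(x)$, where the coefficients $c_j$ are (signed) binomial tail sums $\propto 2^{-2b}\sum_{i>j}\binom{2b}{b+i}$. A Chernoff/Hoeffding bound on the tail of the $\mathrm{Binomial}(2b,\tfrac12)$ distribution gives $\sum_{j>J}|c_j|\le e^{-\Theta(J^2/b)}$, so $\tfrac{3\delta}{4}\sum_{j>J}|c_j|\le\epsilon/2$ already for $J=\Theta\!\big(\sqrt{b\log\tfrac1\epsilon}\big)=\Theta\!\big(\tfrac1\delta\log\tfrac1\epsilon\big)$. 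Defining $\mathcal P_0:=\sum_{j\le J}c_j\,T_{2j+1}$, which is odd of degree $2J+1=\mathcal O(\tfrac1\delta\log\tfrac1\epsilon)$, the truncation error is bounded uniformly on $[-1,1]$ by $\sum_{j>J}|c_j|$ because $|T_k(x)|\le1$ there. Combining the window and truncation estimates yields $\tfrac{3\delta}{4}|\mathcal P_0(x)-1/x|<\epsilon$ on $[\delta,1]$, and $\mathcal P=\tfrac{3\delta}{4}\mathcal P_0$ then has the claimed odd degree and both required properties on $[-1,1]\setminus[-\delta,\delta]$.

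I would single out the Chebyshev-tail estimate of the coefficients $c_j$ as the one genuinely technical step: it is precisely what converts the elementary but suboptimal $1/\delta^2$ window into the sharp $1/\delta$ bound, and it is the part most prone to off-by-constant slips in the exponent. An expedient alternative, since the statement is attributed to Gilyén et al., is to invoke their ready-made odd polynomial approximation of $\operatorname{inv}$ of degree $\mathcal O(\tfrac1\delta\log\tfrac1\epsilon)$ directly and restrict the proof to the rescaling by $\tfrac{3\delta}{4}$ together with the triangle-inequality verification of $|\mathcal P(x)|\le 1$ on the punctured domain described in the first paragraph.
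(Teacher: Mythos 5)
The paper contains no proof of this corollary to compare against: it is imported verbatim from Gilyén et al.~\cite{Gilyn2019} (their matrix-inversion polynomial, which in turn adapts the Childs--Kothari--Somma construction~\cite{CKS17}), and the paper's ``proof'' is the citation itself. Your proposal is therefore a reconstruction of the proof in the cited literature, and as such it follows the standard route correctly: the window $\frac{1-(1-x^2)^b}{x}$, the Chebyshev expansion with binomial-tail coefficients, the truncation at $J=\Theta\bigl(\sqrt{b\log(1/\epsilon)}\bigr)$, and the oddness argument extending everything to $[-1,-\delta]$ are all right. Two quantitative points need repair, though. First, your triangle-inequality argument for $|\mathcal P(x)|\le 1$ assumes $\epsilon\le\frac14$, but the statement allows $\epsilon\in(0,\frac12]$; fix this by constructing $\mathcal P$ for accuracy $\min(\epsilon,\frac14)$, which changes the degree only by a constant factor since $\log(1/\epsilon)\ge\log 2$ on the allowed range. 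Second, your tail estimate $\sum_{j>J}|c_j|\le e^{-\Theta(J^2/b)}$ silently discards a polynomial-in-$b$ prefactor: the honest bound is $\sum_{j>J}|c_j|\lesssim \frac{b}{J}\,e^{-J^2/b}\lesssim\sqrt{b}\,e^{-J^2/b}$ for $J\ge\sqrt{b}$. Absorbing that prefactor crudely into the exponent forces $J^2/b\gtrsim\log b\approx 2\log(1/\delta)$ and yields degree $\mathcal O\bigl(\frac{1}{\delta}\log\frac{1}{\delta\epsilon}\bigr)$ --- the unnormalized CKS bound --- rather than the claimed $\mathcal O\bigl(\frac{1}{\delta}\log\frac{1}{\epsilon}\bigr)$. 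The cure is precisely the $\frac{3\delta}{4}$ normalization you set aside as bookkeeping: since $\frac{3\delta}{4}\sqrt{b}=\Theta\bigl(\sqrt{\log(1/\epsilon)}\bigr)$, the prefactor costs only a doubly-logarithmic correction and $\delta$ stays out of the logarithm. This cancellation is the real reason the corollary is stated for $f=\frac{3\delta}{4}\operatorname{inv}$ rather than for $\operatorname{inv}$, and it deserves to be explicit rather than hidden in a $\Theta$.

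One structural remark. Your claim that no control of $\mathcal P$ inside the gap $(-\delta,\delta)$ is needed is true for the statement as literally written, and your polynomial satisfies it; but the truncated CKS polynomial, after the $\frac{3\delta}{4}$ rescaling, grows to size $\Theta\bigl(\sqrt{\log(1/\epsilon)}\bigr)$ inside the gap, so it is \emph{not} bounded by $1$ there. The result actually proved in Ref.~\cite{Gilyn2019} additionally multiplies by an approximate rectangle function to enforce $|\mathcal P(x)|\le 1$ on all of $[-1,1]$, and that global bound is what the present paper implicitly relies on when it feeds Corollary~\ref{thm:poly approx of 1/x} into Theorem~\ref{thm:P(H)} (which demands $|\mathcal P(x)|\le 1/2$ on the whole interval) inside the proof of Corollary~\ref{coro:A-1}. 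So your construction proves the corollary as stated but would not support its downstream use; the rectangle-function step is the missing ingredient if you want the version the paper actually needs.
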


\noindent This corollary concludes our discussion of block-encoding techniques for matrix arithmetic. 

\subsection{Complexity of solving a system of linear equations}
\label{sec:F-SLEP}
We now review the results that describe the hardness of solving restrictions of SLEP with respect to accuracy.
We also state a previously known query-complexity lower bound for computing the mean of a black-box function 
that we later use to derive a new lower bound for SLEP. 

\subsubsection{Previous results on the accuracy dependence of system of linear equations}
\label{sec:backgroundlowerbound}
It is often asserted that quantum algorithms for solving SLEP must have $\poly(1/\epsilon)$ accuracy dependence due to the cost of  estimating $\bra {\bm x} M \ket {\bm x}$~\cite{HHL09,Amb12,Gilyn2019,Chia2020,L2021}. However, this assertion
is proven to hold only under certain restrictions. One of the existing results shows that $\polylog(1/\epsilon)$ 
dependence of the time complexity is not achievable under complexity-theoretic assumptions. Another  previous hardness result shows that any quantum algorithm with $\polylog{N}$ dependence has runtime in $\mathcal O(1/\epsilon)$; however, as this result is derived by bounding queries to oracles for $A$, it is not applicable if $A$ is fixed. We now review these two results in detail. 

We adopt the standard big-O definition for single variables and the following definition for multiple variables~\cite{BB1996}.
\begin{definition}\label{def: multivar big O}
Let $f,g:\{\mathbb{R}^+\}^m \to \mathbb{R}^+$. Then $f(x_1,x_2,\dots,x_m) \in \mathcal O(g(x_1,x_2,\dots,x_m))$ if there exist constants $\eta, C \in \mathbb R^+$ such that for all $x_1,x_2,\dots,x_m >\eta$, 
\begin{equation}
    f(x_1,x_2,\dots,x_m) \leq Cg(x_1,x_2,\dots,x_m).
\end{equation}
\end{definition}
\noindent We define multivariate $\Omega$ analogously.

We now rigorously define F-SLEP, a previously studied problem~\cite{HHL09} which we first reviewed in \S\ref{sec:intro}.
\begin{problem}[Fixed $M$ and $b$ System of Linear Equations Problem (F-SLEP)]
\label{prob:FSLEP}
Given $N\in\mathbb{Z}^+$, a $\kappa \geq 1$, sparse access $(d_A, \beta_A, 
O_{A_{\rm val}},O_{A_{\rm loc}})$ to an $N \times N$ matrix $A$ with  
$\|A\| \leq 1$ and $\left\|A^{-1}\right\| \leq \kappa$, and an accuracy $\epsilon\in (0,1]$, 
return an $\epsilon$-additive approximation of $\langle \bm{x}|M| \bm{x}\rangle$ with probability at least $2 / 3$, 
where $\ket{\bm{x}}$ is the unit vector proportional to $A^{-1}\ket{\bm{0}}$ and $M$ is the diagonal matrix containing a 1 in the first $N/2$ diagonal entries and a zero in all other entries.
\end{problem} 

Two hardness results are known for F-SLEP~\cite{HHL09}. 
The first result establishes a lower bound on the oracle queries to $A$, and its proof relies on
a query-complexity lower bound for computing the parity of a Boolean expression.
\begin{thm}[\cite{HHL09}]
\label{lem:HHLlowerbound}
A quantum algorithm can solve F-SLEP using $\nu$ queries to oracles for $A$ with
$\nu \in N^{c_1}\poly(\kappa)/ \epsilon^{c_2}$ only if $c_1+c_2 \geq 1$.
\end{thm}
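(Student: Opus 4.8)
The plan is to prove the contrapositive by a query-preserving reduction from PARITY. Assuming an F-SLEP solver whose query complexity is $\mathcal O(N^{c_1}\poly(\kappa)/\epsilon^{c_2})$, I will build from it a bounded-error quantum algorithm for computing $\bigoplus_{j\in[m]} z_j$ of a string $z\in\{0,1\}^m$ accessed through a bit oracle, and then invoke the known $\Omega(m)$ bounded-error quantum query lower bound for PARITY. The whole argument hinges on instantiating a family of F-SLEP instances, indexed by $m$, on which $N=\Theta(m)$, $\kappa=\mathcal O(1)$, and an accuracy $\epsilon=\Theta(1/m)$ suffices to decide parity; substituting these three scalings into the assumed complexity and comparing with $\Omega(m)$ forces $c_1+c_2\ge 1$.

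Concretely, I would encode $z$ into a rescaled sparse Hermitian ``star''. Let coordinate $0$ (which is $\ket{\bm 0}$, the fixed right-hand side) be a source, place $m$ ``sink'' coordinates $\sigma_j$ in the second half of $[N]$ with $N=\Theta(m)$ a power of two, and set $A\propto\mathds 1-\tfrac{1}{2\sqrt m}\sum_{j\in[m]} z_j\left(\ket 0\bra{\sigma_j}+\ket{\sigma_j}\bra 0\right)$. The Neumann series gives $A^{-1}\ket{\bm 0}\propto\ket 0+\tfrac{1}{2\sqrt m}\sum_{j:z_j=1}\ket{\sigma_j}$, so the normalized solution places probability $\left(1+\tfrac{w}{4m}\right)^{-1}$ on coordinate $0$, where $w=\sum_j z_j$ is the Hamming weight. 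Since the source lies in the first half and every sink in the second half, the fixed observable satisfies $\langle\bm x|M|\bm x\rangle=\left(1+\tfrac{w}{4m}\right)^{-1}$, a strictly monotone function of $w$ whose consecutive values are separated by a gap $\Delta=\Theta(1/m)$. Hence an $\epsilon$-additive estimate with $\epsilon<\Delta/2=\Theta(1/m)$ pins down $w$, and in particular its parity.

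Two routine checks make the reduction admissible and query-preserving. First, the coupling \emph{pattern} is fixed and independent of $z$ (only the nonzero \emph{values} $\tfrac{1}{2\sqrt m}z_j$ depend on $z$), so each call to $O_{A_{\rm val}}$ or $O_{A_{\rm loc}}$ costs $\mathcal O(1)$ bit-oracle queries; the large sparsity $d_A=\Theta(m)$ is irrelevant because it does not enter the assumed complexity. Second, since $\|\tfrac{1}{2\sqrt m}\sum_j z_j(\cdots)\|=\tfrac{\sqrt w}{2\sqrt m}\le\tfrac12$, after a constant rescaling we obtain $\|A\|\le 1$ and $\|A^{-1}\|\le\kappa$ with $\kappa=\mathcal O(1)$ independent of $m$, and rescaling leaves the normalized $\ket{\bm x}$ unchanged. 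With $N=\Theta(m)$, $\kappa=\mathcal O(1)$, and $\epsilon=\Theta(1/m)$, the assumed solver decides PARITY in $\mathcal O(N^{c_1}\poly(\kappa)/\epsilon^{c_2})=\mathcal O(m^{c_1+c_2})$ bit-oracle queries, whence $m^{c_1+c_2}=\Omega(m)$ and $c_1+c_2\ge 1$.

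The delicate point — and the reason the clean $c_1+c_2$ trade-off holds rather than a weaker bound involving the exponent of $\kappa$ — is keeping the condition number constant. A more obvious encoding that threads the bits sequentially (a clock/history chain accumulating the running parity) also exposes the answer, but its inverse accumulates operator norm and forces $\kappa=\Theta(m)$; then the $\poly(\kappa)$ factor contributes an extra uncontrolled power of $m$ and washes out the separation between $c_1$ and $c_2$. The star avoids this because its inverse is effectively a depth-two object confined to the two-dimensional span of the source and the uniform superposition of lit sinks, which is exactly what lets $\kappa$, $N$, and $1/\epsilon$ scale as $\mathcal O(1)$, $\Theta(m)$, and $\Theta(m)$ simultaneously. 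Verifying these three scalings together, under the fixed $M$ and $\bm b$ of F-SLEP, is the main thing to get right.
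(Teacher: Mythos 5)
Note first that the paper contains no proof of this theorem to compare against: it is imported from Ref.~\cite{HHL09}, and the paper's only original contribution here is the remark that the proof in that reference establishes the stronger query-complexity statement rather than a time-complexity statement. Measured against that reference argument, your proof is of the same type — a query-preserving reduction from an $\Omega(m)$-hard oracle problem on a family of instances engineered so that $\kappa$ stays bounded while $N=\Theta(m)$ and $\epsilon=\Theta(1/m)$, which is the only regime the statement's form (a free $\poly(\kappa)$ factor) permits — and it is correct. I checked the computations: with coupling $c=\sqrt{w}/(2\sqrt m)\le 1/2$ the two-dimensional block inversion (or your Neumann series) gives $A^{-1}\ket{0}\propto\ket{0}+\tfrac{1}{2\sqrt m}\sum_{j:z_j=1}\ket{\sigma_j}$, so the fixed observable evaluates to $(1+w/4m)^{-1}$, whose consecutive-$w$ gaps are at least $\tfrac{1}{4m}\cdot\tfrac{16}{25}=\Theta(1/m)$; the spectrum of $A$ lies in $[1/2,3/2]$, so rescaling by $2/3$ meets $\norm{A}\le 1$, $\norm{A^{-1}}\le 3$ without changing $\ket{\bm x}$; and since the sparsity pattern is independent of $z$, each call to $O_{A_{\rm val}}$ or $O_{A_{\rm loc}}$ costs $\mathcal O(1)$ bit-oracle queries (with $d_A=\Theta(m)$ harmless, as it does not appear in the assumed bound). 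Your closing discussion of why a clock/history-chain encoding fails (it forces $\kappa=\Theta(m)$, which the $\poly(\kappa)$ factor absorbs) identifies precisely the obstruction that makes the star gadget necessary. Two minor points you should tighten. First, the hypothesis $\nu\in N^{c_1}\poly(\kappa)/\epsilon^{c_2}$ is, under the paper's multivariate asymptotic convention, a statement in which all parameters grow; since your family pins $\kappa$ at a constant, either note explicitly that any fixed polynomial in $\kappa$ is then a constant factor, or declare the input condition number to be a slowly growing bound such as $\kappa=\log m$ (still valid since $\norm{A^{-1}}\le 3$), so that $\poly(\kappa)=m^{o(1)}$ and the contradiction with $\Omega(m)$ survives. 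Second, your estimate actually recovers the full Hamming weight $w$, not merely its parity, so you could equally cite the $\Omega(m)$ bound for exact counting; invoking parity is fine and is the cleanest citation, but the statement "pins down $w$, and in particular its parity" is doing slightly more work than the lower bound you use.
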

\noindent The original statement of this theorem refers to the time complexity of quantum algorithms for SLEP rather than number of queries \cite{HHL09}. However, the proof given is for a stronger statement in terms of queries to oracles for $A$ that we give in Theorem \ref{lem:HHLlowerbound}. 

The next hardness result for F-SLEP is in terms of time complexity and 
relies on the complexity theoretic assumption $\textbf{BQP} \neq \textbf{PP}$.
\begin{thm}[\cite{HHL09}]
\label{lem:HHLlowerbound2}
If $\textbf{BQP} \neq \textbf{PP}$, then there does not exist a quantum algorithm that solves F-SLEP running in time  
$\poly(\kappa, \log(N), \log(1/\epsilon))$.
\end{thm}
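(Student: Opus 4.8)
The plan is to prove the contrapositive: assuming a quantum algorithm solves F-SLEP in time $\poly(\kappa,\log N,\log(1/\epsilon))$, I would construct a polynomial-time quantum algorithm for a $\textbf{PP}$-complete problem, which places $\textbf{PP}\subseteq\textbf{BQP}$ and hence (since $\textbf{BQP}\subseteq\textbf{PP}$ always holds) forces $\textbf{BQP}=\textbf{PP}$, contradicting the hypothesis. The engine of the reduction is Aaronson's characterization $\textbf{PostBQP}=\textbf{PP}$, so it suffices to simulate a postselected quantum computation, i.e., to decide whether the acceptance probability of a polynomial-size circuit, conditioned on a (possibly exponentially unlikely) postselection event, exceeds $2/3$ or falls below $1/3$.

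The core step is a reduction that encodes such a postselected circuit into an F-SLEP instance. Following the standard propagation (history-state) construction, I would build a sparse, well-conditioned matrix $A$ out of the circuit's gates --- schematically of the form $\mathds 1$ minus a shift operator that advances a clock register while applying the next gate --- so that $\ket{\bm x}\propto A^{-1}\ket{\bm 0}$ is a normalized superposition, over all clock values, of the snapshots of the computation driven from the fixed input $\bm b=\ket{\bm 0}$. Because $A$ is assembled from a polynomial number of unitaries arranged in a triangular clock structure, it is $\poly$-sparse, satisfies $\|A\|\le 1$ after rescaling, and has $\kappa\in\poly(\text{circuit size})$; for a circuit on $\poly(n)$ qubits this gives $\log N\in\poly(n)$ and $\kappa\in\poly(n)$. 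The crux is to arrange the clock and accept/postselection flags so that they align with the \emph{fixed} observable $M=\ket 0\bra 0\otimes\mathds 1$ (the projector onto the first half of the coordinates): the latest-clock, postselection-satisfied, accepting branch must lie in the first half, so that $\langle\bm x|M|\bm x\rangle$ is proportional to the postselected acceptance probability.

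Once this encoding is in place, I would exploit the fact that the postselection event, and hence the quantity $\langle\bm x|M|\bm x\rangle$, may be exponentially small in $n$. Setting the accuracy to $1/\epsilon=2^{\poly(n)}$ makes $\log(1/\epsilon)\in\poly(n)$, so the assumed F-SLEP algorithm runs in $\poly(n)$ time yet resolves $\langle\bm x|M|\bm x\rangle$ to exponential additive precision --- enough to separate the $>2/3$ and $<1/3$ conditional-acceptance cases after renormalizing by the postselection weight. This yields a $\textbf{BQP}$ decision procedure for the $\textbf{PostBQP}=\textbf{PP}$-complete problem, completing the contradiction.

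I expect the main obstacle to be the reduction itself, specifically engineering $A$ so that the \emph{prescribed} $M$ and $\bm b$ still extract the postselected acceptance probability while $A$ remains sparse with $\kappa\in\poly$: one must lay out the clock, input-loading, accept, and postselection registers so that exactly the intended branch falls in the first-half subspace projected by $M$, and verify that the history-state normalization converts an $\epsilon$-\emph{additive} estimate of $\langle\bm x|M|\bm x\rangle$ into the correct \emph{conditional} decision despite the exponentially small postselection probability. Bounding $\kappa$ uniformly across the construction, and confirming that $\|A\|\le 1$ together with efficiently computable sparse-access oracles, are the remaining technical points.
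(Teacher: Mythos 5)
Your proposal is correct and follows essentially the same route as the proof the paper relies on (the one in Ref.~\cite{HHL09}, which the paper cites rather than reproves): assume a $\poly(\kappa,\log N,\log(1/\epsilon))$ algorithm, encode a postselected circuit via the Feynman-clock/history-state construction into a sparse, well-conditioned $A$ with the fixed $M$ and $\bm b$, and run the algorithm at $\epsilon = 2^{-\poly(n)}$ so that $\log(1/\epsilon)\in\poly(n)$, thereby simulating $\textbf{PostBQP}=\textbf{PP}$ inside $\textbf{BQP}$. The technical caveats you flag (aligning the accept/postselection flags with the fixed projector $M$, and handling the exponentially small postselection weight by estimating both the joint and the postselection probabilities to exponential additive precision) are exactly the points handled in that original reduction.
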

\noindent 
In fact, the proof of this theorem proves the complexity $\polylog(N,1/\epsilon)$ for the restriction $A=\mathds{1}$, 
which is a stronger result.
Note that $\textbf{BQP} = \textbf{PP}$ contradicts the widely held conjecture $\textbf{BQP} \ne \textbf{NP}$. 

\subsubsection{Query complexity of approximating the mean of a black-box function}

We now review a query-complexity lower bound for computing the approximate mean of a black-box function~\cite{NW99}.
We use this lower bound to later derive a lower bound for B-SLEP in \S\ref{sec:lowerbound}. As is standard, we say an oracle $O_f$ encodes~\cite{NC} a discrete function  $f:\{0,1\}^{r_1}\to \{0,1\}^{r_2}$ for $r_1,r_2 \in \mathbb{Z}^+$ if
\begin{equation}
    O_f: \ket{j}\ket{s} \mapsto \ket{j}\ket{f(j) \oplus s}, \;\; \forall j \in \{0,1\}^{r_1}, s \in \{0,1\}^{r_2},
\end{equation}
 where $\oplus$ denotes bitwise XOR operation.
 
 We are now ready to state the approximate-mean (AM) problem. 
\begin{problem}[Approximate Mean] 
\label{prob:AM}
Given $N \in \mathbb{Z}^+$ and $\epsilon \in (1/2 N,1]$ and given an oracle $O_f$ encoding a function 
$f:[N] \mapsto [0,1]$, return, with probability at least 2/3, an $\epsilon$-additive approximation of $\mu_f := \left(\sum_{j=0}^{N-1} f(j)\right)/N$.
\end{problem}

A query-complexity lower bound for this problem is given by Nayak and Wu~\cite[Corollary 1.12]{NW99}.
\begin{coro} \label{lem:queryAM}
The query complexity of the problem AM is $\Omega(1/\epsilon)$.
\end{coro}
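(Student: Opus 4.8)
The plan is to prove the lower bound via the polynomial method, following the general strategy of Nayak and Wu~\cite{NW99}. First I would observe that it suffices to lower-bound the query complexity on the \emph{Boolean} sub-family of instances, those $f:[N]\to\{0,1\}$, since these are a special case of $f:[N]\to[0,1]$ and any algorithm solving the general problem also solves this restriction. For such an $f$ the mean is $\mu_f = k/N$, where $k:=|\{j: f(j)=1\}|$ is the Hamming weight, so the task reduces to estimating $k/N$ to additive accuracy $\epsilon$. I would deliberately concentrate the hard instances around weight $k \approx N/2$, i.e.\ $\mu_f \approx 1/2$; as explained below, this placement is exactly what yields the tight $\Omega(1/\epsilon)$ bound rather than a weaker one.

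Next I would pass from estimation to a decision problem and apply symmetrization. Thresholding the estimator's output at $1/2$ turns an $\epsilon$-accurate, success-probability-$\ge 2/3$ mean estimator into an algorithm that, with probability $\ge 2/3$, distinguishes $k \le (1/2-\epsilon)N$ from $k \ge (1/2+\epsilon)N$. The acceptance probability of any $T$-query quantum algorithm is a polynomial of degree $\le 2T$ in the oracle bits $\{f(j)\}$; symmetrizing over permutations of $[N]$ collapses it to a univariate polynomial $q$ of degree $\le 2T$ with $q(k)\in[0,1]$ at every integer $k\in\{0,\dots,N\}$, satisfying $q(k)\le 1/3$ on the low region and $q(k)\ge 2/3$ on the high region. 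Thus $q$ must traverse a gap of size $\ge 1/3$ as $k$ ranges over the window $[(1/2-\epsilon)N,(1/2+\epsilon)N]$ of width $\Theta(\epsilon N)$ centered at $N/2$. (The hypothesis $\epsilon>1/(2N)$ guarantees this window is nondegenerate.) Rescaling $k\mapsto x=2k/N-1\in[-1,1]$ sends the window to an interval of width $\Theta(\epsilon)$ centered at the \emph{midpoint} $x=0$.

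Finally I would invoke a polynomial-derivative inequality. After first controlling $q$ on the whole interval --- converting the boundedness of $q$ at the integer points into an $\mathcal O(1)$ bound on $[-1,1]$ via an Ehlich--Zeller/Coppersmith--Rivlin-type estimate --- the mean value theorem gives a point $x^\ast$ near $0$ with $|q'(x^\ast)|=\Omega(1/\epsilon)$. Since $x^\ast$ lies near the center of $[-1,1]$, Bernstein's inequality $|q'(x)|\le (\deg q)\,\|q\|_\infty/\sqrt{1-x^2}$ bounds the derivative there by $\mathcal O(\deg q)$, whence $\deg q=\Omega(1/\epsilon)$ and so $T=\Omega(1/\epsilon)$. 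The main subtlety, and the reason the instances are placed at the midpoint, is exactly this last step: a crude application of Markov's inequality $\|q'\|_\infty\le(\deg q)^2\|q\|_\infty$ (valid throughout $[-1,1]$ but tight only near the endpoints) yields only $\deg q=\Omega(1/\sqrt\epsilon)$, so the tight bound genuinely requires Bernstein's sharper midpoint estimate. I expect the chief technical obstacle to be rigorously justifying the continuum bound on $q$ from its values at the integers, particularly in the small-$\epsilon$ regime where $\deg q$ may grow with $N$.
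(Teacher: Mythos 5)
The paper itself does not prove this corollary: it imports it verbatim from Nayak and Wu~\cite{NW99} (their Corollary~1.12), so your proposal is in effect a reconstruction of that cited proof. Your outline is faithful to the genuine argument --- restricting to Boolean instances, centering them at Hamming weight $N/2$, symmetrizing the acceptance probability into a univariate polynomial $q$ of degree at most $2T$, and using Bernstein's midpoint estimate rather than Markov's inequality (your remark that Markov alone yields only $\Omega(1/\sqrt{\epsilon})$ is exactly why the hard instances must sit at the midpoint).

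However, the step you flag as the ``chief technical obstacle'' is a genuine gap, and it is precisely where the technical substance of the Nayak--Wu proof lies. The Ehlich--Zeller/Coppersmith--Rivlin estimates convert boundedness at the integers into the bound $|q(x)|\leq C\exp\bigl(a(\deg q)^2/N\bigr)$ on all of $[-1,1]$, and this exponential growth is attainable; hence a global $\mathcal{O}(1)$ bound requires $\deg q \in \mathcal{O}(\sqrt{N})$, which you are not entitled to assume. Since AM allows $\epsilon$ as small as $1/(2N)$, the target bound can be as large as $\Omega(N)$, and a hypothetical algorithm making $T$ queries with $\sqrt{N} \ll T \ll 1/\epsilon$ defeats the argument as written: in that regime your reasoning yields only $\deg q \in \Omega(\min(\sqrt{N},1/\epsilon))$. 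This weakening is not harmless for this paper, because the downstream reductions (Lemma~\ref{lem:querySAM} feeding into Lemma~\ref{thm:SLE2queryB}) invoke the AM bound on instances in which the product of the instance size and the accuracy can be of constant order, i.e.\ $N\in\Theta(1/\epsilon)$ --- exactly the regime where $\min(\sqrt{N},1/\epsilon)\approx\sqrt{1/\epsilon}$ falls short of what Theorem~\ref{thm:queryB-EVHM} needs. The missing ingredient is a \emph{local} grid estimate (the core of Nayak--Wu's lemma, extending Paturi): a degree-$d$ polynomial bounded at the integers of $[0,N]$ remains $\mathcal{O}(1)$-bounded near a point $k$ provided $d\leq c\sqrt{k(N-k)}$, a threshold that is $\Theta(N)$ --- not $\Theta(\sqrt{N})$ --- at the midpoint, mirroring the Bernstein/Markov dichotomy at the discrete level. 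With that local bound the dichotomy closes: either $\deg q > cN \geq (c/2)(1/\epsilon)$, or the local bound applies on a centered subinterval and Bernstein there gives $\deg q \in \Omega(1/\epsilon)$.
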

\noindent 
This concludes our review of relevant hardness results.

\subsection{A reformulation of amplitude-estimation algorithm}
\label{sec:backgroundae}
We now review a reformulation~\cite{KOS07} of the amplitude-estimation algorithm~\cite{BHM+02}.
This reformulation is used in some quantum algorithms for expectation-value estimation that we review in this subsection. 
We begin by reviewing the complexity of the standard phase-estimation algorithm~\cite{NC}. 
Next, we describe how phase estimation is used to construct the amplitude-estimation algorithm. 

Phase estimation is a key subroutine in many quantum algorithms that achieve super-polynomial speedup over classical algorithms. The inputs, output, and complexity of the phase-estimation algorithm are given by the following theorem, which is an adaptation from the standard description of the algorithm \cite{NC}. 
 
\begin{thm}[Phase-estimation Algorithm]
\label{thm:PEA}
There exists a quantum algorithm that accepts  $n$-qubits, an accuracy $\epsilon$, and $n$-qubit unitary black boxes $V$ and $W$, that, with probability at least 2/3, samples an $\epsilon$-additive approximation to the eigenphase $\theta$ of $V$ corresponding to the eigenvector $\ket{\psi}$ with probability $\abs{\braket{\psi| W|0}}^2$. This algorithm makes $
  \mathcal  O\left(1/\epsilon\right)$
queries to $V$, a single query to $W$, 
and  $
  \mathcal O\left(\log^2(1/\epsilon)\right)$
additional 2-qubit gates. 
\end{thm}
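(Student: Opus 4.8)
The plan is to run the textbook phase-estimation circuit of Ref.~\cite{NC}, replacing its assumed eigenstate input by a single application of $W$ to the fiducial state $\ket{0^n}$. Let $\{\ket{\psi_k}\}$ be an orthonormal eigenbasis of $V$ with $V\ket{\psi_k}=\e^{2\pi\mathrm i\theta_k}\ket{\psi_k}$, and expand $W\ket{0^n}=\sum_k c_k\ket{\psi_k}$ with $c_k:=\braket{\psi_k|W|0^n}$. I would allocate a phase register of $t:=\lceil\log_2(1/\epsilon)\rceil+\mathcal O(1)$ ancilla qubits and then execute the three standard stages: put the phase register into uniform superposition with Hadamard gates; for each $j\in[t]$ apply $V^{2^j}$ to the eigenvector register controlled on the $j$-th phase qubit; and finally apply the inverse quantum Fourier transform to the phase register and measure it in the computational basis.

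Because every stage acts linearly and $\{\ket{\psi_k}\}$ simultaneously diagonalizes all powers of $V$, the joint state immediately before measurement decomposes as $\sum_k c_k\,\ket{\Phi_k}\otimes\ket{\psi_k}$, where $\ket{\Phi_k}$ is precisely the phase-register state that ordinary phase estimation produces from the single eigenstate $\ket{\psi_k}$. The single-eigenvector analysis guarantees that, for $t$ chosen as above, measuring $\ket{\Phi_k}$ returns a value that is an $\epsilon$-additive approximation to $\theta_k$ with probability at least $2/3$. Invoking the Born rule together with the orthonormality of $\{\ket{\psi_k}\}$, the probability that the measurement outputs an $\epsilon$-approximation of the particular eigenphase $\theta_k$ equals $\abs{c_k}^2$ times this conditional success probability; substituting $\abs{c_k}^2=\abs{\braket{\psi_k|W|0^n}}^2$ gives exactly the claimed sampling behaviour.

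For the resource count, the controlled powers contribute $\sum_{j=0}^{t-1}2^j=2^t-1\in\mathcal O(1/\epsilon)$ queries to $V$, whereas $W$ is used exactly once; the inverse quantum Fourier transform on $t$ qubits costs $\mathcal O(t^2)=\mathcal O(\log^2(1/\epsilon))$ two-qubit gates, dominating the $\mathcal O(t)$ Hadamards. I expect the main obstacle to be the calibration of $t$: it must be large enough that the resolution $2^{-t}$ falls below $\epsilon$ \emph{and} that the probability of the rounded estimate landing within $\epsilon$ of an arbitrary, possibly non-dyadic, $\theta_k$ is at least $2/3$, which is the standard ``extra ancilla qubits'' tail bound for phase estimation. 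The delicate point is confirming that the additive constant in $t$ required for this success probability leaves $2^t$ in $\mathcal O(1/\epsilon)$, so that the query bound is not inflated.
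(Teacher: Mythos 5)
Your proposal is correct and coincides with the paper's treatment: the paper states Theorem~\ref{thm:PEA} as an adaptation of the standard phase-estimation analysis of Ref.~\cite{NC} without reproving it, and your argument is precisely that textbook analysis, with the input eigenstate replaced by $W\ket{0^n}$, the eigenbasis expansion plus linearity giving the sampling behaviour, and the standard ``extra ancilla qubits'' tail bound ensuring that $t=\lceil\log_2(1/\epsilon)\rceil+\mathcal O(1)$ suffices, so $2^t-1\in\mathcal O(1/\epsilon)$ queries to $V$, one query to $W$, and $\mathcal O(t^2)=\mathcal O(\log^2(1/\epsilon))$ gates for the inverse Fourier transform. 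The only nitpick is that the probability of outputting an $\epsilon$-approximation of $\theta_k$ is in general at least (not exactly equal to) $\abs{c_k}^2$ times the conditional success probability, since other eigenvectors with nearby phases can contribute, but this is the inequality direction the theorem requires.
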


Now we show how phase estimation is used to perform amplitude estimation. In amplitude estimation~\cite{KOS07}, the inputs are $n$ qubits, 
an accuracy $\epsilon \in (0,1)$,  
and two $n$-qubit unitary black boxes  $V$ and $W$. The output is an $\epsilon$-additive approximation to the amplitude $r:=\abs{\braket{0|W^\dagger VW|0}}$ 
with probability at least $2/3$.
Define $\ket{\psi_0} := W\ket 0,\;  
 \ket{\psi_1}:=V\ket{\psi},\; P_0:= \mathds1 -2\ket{0^n}\bra{0^n}$, and
define two reflection operators $S_0:= \mathds1-\ket{\psi_0}\bra{\psi_0}=WP_0W^\dagger$ and $S_1:=\mathds1-\ket{\psi_1}\bra{\psi_1}= VWP_0W^\dagger V^\dagger$.
The composition of these two reflection operators $S:=S_0S_1$ is a rotation operator in the two-dimensional space spanned by $\ket{\psi_0}$ and $\ket{\psi_1}$ that rotates $\ket{\psi_0}$ toward $\ket{\psi_1}$ by angle $2\theta:=4\arccos(r)$.
This implies that the eigenvalues of $S$ are $\text{e}^{\text{i}\theta}$, and the amplitude returned by amplitude estimation is $r=\abs{\cos(\theta/2)}$. Therefore, phase estimation can be used to estimate $r$ by estimating an eigenphase of the operator $S$. Note that this call to phase estimation uses $\con S$, which can be constructed by replacing each factor of $V$ and $W$ in the definition of $S$ with $\con V$ and $\con W$.

\begin{algorithm}[H]
\begin{algorithmic}[1]
\caption{Amplitude-Estimation Algorithm } \label{alg:AEA}
\Require{
A number of qubits $n\in\mathbb{Z}^+$, an accuracy $\epsilon \in (0,1)$, 
 two $n$-qubit unitary black boxes $V,W$
}
\Ensure{
With probability at least 2/3, an $\epsilon$-additive approximation $\tilde r$
to the amplitude $r = \abs{\braket{0 |W^\dagger V W| 0}}$
}

\hspace{-1.3cm}\ \textbf{Procedure:}

   \State Construct a quantum circuit for $\con S$ where $S =  WP_0W^\dagger VWP_0W^\dagger V^\dagger$ and $P_0 = \mathds 1 -2\ket{0^n}\bra{0^n}$.
    \State Apply phase estimation with input $(n,2\epsilon,\con S,W)$ 
    to obtain with probability at least $2/3$ 
    a $(2\epsilon)$-additive approximation $\tilde \theta$ of some eigenphase $\theta$ of $S$. 
    \State \Return $\tilde r=\abs{\cos( \tilde \theta/2)}$.
\end{algorithmic}
\end{algorithm}
The complexity of this algorithm is given in the following lemma, which follows readily from the analysis in Ref.~\cite{KOS07}. 
\begin{lemma}
\label{lem:AEcomplexity}
Algorithm~\ref{alg:AEA} with input ($\epsilon,V,W$) makes $
    \mathcal O\left(\frac 1 \epsilon\right)
$
queries to $V$ and $W$ and uses
$ \mathcal O\left(\frac{n}{\epsilon}\right)$
additional 2-qubit gates.
\end{lemma}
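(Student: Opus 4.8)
The plan is to propagate resource counts through the three steps of Algorithm~\ref{alg:AEA}, identifying the repeated application of $\con S$ inside phase estimation as the dominant cost and invoking Theorem~\ref{thm:PEA} to bound how many such applications occur.

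First I would bound the cost of a single application of $\con S$. By construction $S = WP_0W^\dagger VWP_0W^\dagger V^\dagger$ contains only a constant number of factors of $V$, $V^\dagger$, $W$ and $W^\dagger$, together with two reflections $P_0 = \mathds{1} - 2\ket{0^n}\bra{0^n}$. As noted in the text preceding the algorithm, $\con S$ is obtained by replacing each factor of $V$ and $W$ by its controlled version; by the standing assumption that a query to $\con O$, $O^\dagger$ or $\con O^\dagger$ costs the same as a query to $O$, each application of $\con S$ therefore uses only $\mathcal{O}(1)$ queries to $V$ and $W$. The sole gate cost is implementing $P_0$ and its controlled version, each a reflection about the computational-basis state $\ket{0^n}$, which decomposes into $\mathcal{O}(n)$ two-qubit gates. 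Thus one application of $\con S$ costs $\mathcal{O}(1)$ queries to $V, W$ and $\mathcal{O}(n)$ additional two-qubit gates.

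Next I would apply Theorem~\ref{thm:PEA} to Step~2, which runs phase estimation with accuracy $2\epsilon$ on the unitary $\con S$ and state-preparation unitary $W$. Theorem~\ref{thm:PEA} gives $\mathcal{O}(1/(2\epsilon)) = \mathcal{O}(1/\epsilon)$ queries to $\con S$, a single query to $W$, and $\mathcal{O}(\log^2(1/\epsilon))$ two-qubit gates internal to phase estimation. Composing with the per-application counts above, the total number of queries to $V$ and $W$ is $\mathcal{O}(1/\epsilon)\cdot\mathcal{O}(1) = \mathcal{O}(1/\epsilon)$, absorbing the lone extra query to $W$. The additional two-qubit gate count is $\mathcal{O}(1/\epsilon)\cdot\mathcal{O}(n)$ from the $\con S$ applications plus the $\mathcal{O}(\log^2(1/\epsilon))$ gates from phase estimation; since $\log^2(1/\epsilon) \in \mathcal{O}(1/\epsilon) \subseteq \mathcal{O}(n/\epsilon)$, this sum is $\mathcal{O}(n/\epsilon)$. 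Step~1 is circuit construction and Step~3 is a single classical evaluation of $\abs{\cos(\tilde\theta/2)}$, neither of which contributes to the query or gate counts.

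The main obstacle here is bookkeeping rather than anything conceptual. The points to handle carefully are: that passing from $S$ to $\con S$ (and to the further controlled powers used inside phase estimation) leaves the query count at $\mathcal{O}(1)$ per application, which relies essentially on the standing cost assumption for controlled and adjoint oracles; and that the $\mathcal{O}(n)$ gate cost of the two $P_0$ reflections, accumulated over the $\mathcal{O}(1/\epsilon)$ rounds of phase estimation, genuinely dominates the $\mathcal{O}(\log^2(1/\epsilon))$ phase-estimation overhead, yielding the stated $\mathcal{O}(n/\epsilon)$ bound.
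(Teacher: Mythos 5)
Your proposal is correct and follows essentially the same route as the paper's own proof: invoke Theorem~\ref{thm:PEA} for the $\mathcal{O}(1/\epsilon)$ query count (each application of $\con S$ costing $\mathcal{O}(1)$ queries to $V$ and $W$ under the standing cost assumption for controlled and adjoint oracles), and account for the $\mathcal{O}(n)$ two-qubit gates per $P_0$ reflection accumulated over the $\mathcal{O}(1/\epsilon)$ applications, with the $\mathcal{O}(\log^2(1/\epsilon))$ phase-estimation overhead absorbed into $\mathcal{O}(n/\epsilon)$. Your version merely makes explicit the per-application bookkeeping that the paper states more tersely.
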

\begin{proof}
The complexity for queries to $V$ and $W$ follows directly from the complexities in Theorem \ref{thm:PEA}.
Note that $P_0$ can be implemented using $\mathcal O(n)$ 2-qubit gates \cite[p.\ 251]{NC} which means the call to the phase-estimation algorithm uses a total of $\mathcal O(n/\epsilon)$ 2-qubit gates for queries to $P_0$ and $\log^2(1/\epsilon)$ 2-qubit gates for other operations. 
\end{proof}
\noindent This lemma concludes our review of amplitude estimation.

\subsection{Quantum algorithms for computation of expectation value}
Quantum algorithms for estimating expectation values have been developed both in the context
of SLEP~\cite{WBL12, HHL09, Wan17}  and in a variety of other applications~\cite{KOS07, HWM0W2021, JW2007, Ral2020, KL2021,HWM0W2021}. 
We now review quantum-algorithmic techniques for
estimating expectation values and discuss limitations of these techniques.

\subsubsection{Estimating expectation values in restrictions of SLEP}
We begin by discussing techniques for estimating expectation values that occur in quantum algorithms for SLEP. We review three
techniques. These techniques use single-qubit computational-basis measurements, the swap test, and phase estimation, respectively. 

 We now discuss the first technique. In F-SLEP (Problem~\ref{prob:FSLEP}), $M$ is the matrix representation of the operator  $ \ket{0}\bra{0}\otimes\mathds{1}_{{n-1}}$ which is the orthogonal
projector on the first qubit. As a result, the expectation value of a state $\ket{\bm{x}}$ with respect to $M$ can be estimated by repeatedly preparing $\ket{\bm{x}}$, measuring the first qubit in the computational basis, and averaging the
outcomes~\cite{HHL09}. This approach requires $\mathcal O(1/\epsilon^2)$ preparations of $\ket{\bm x}$ to obtain an estimate
of the expectation value up to additive error $\epsilon$~\cite{HHL09}. 

We now discuss the second technique. If $M = A^{-1}$, where $A$ defines the system of linear equations (\ref{eq:Ax = b}), then a swap test can be used to estimate $\braket{\bm{x}|M|\bm{x}} = \braket{\bm{x}|\bm{b}}$, as was demonstrated in an application of the HHL algorithm to least-square fitting~\cite{WBL12}. This estimation is achieved by 
first preparing the state $\ket{+}\ket{\bm{b}}\ket{\bm{x}}$ using an algorithm for QLSP, with $\ket{+}:=(\ket{0}+\ket{1})/\sqrt{2}$,
followed by a controlled-swap operation, with control on the first qubit, to obtain the state 
$(\ket{0}\ket{\bm{b}}\ket{\bm{x}} + \ket{1}\ket{\bm{x}}\ket{\bm{b}})/\sqrt{2}$. Finally, the first (control) qubit
is measured in the computational basis. The probability that this measurement returns the outcome $0$
is $(1 + \abs{\braket{\bm{x}|\bm{b}}}^2)/2$. To obtain an estimate of $\abs{\braket{\bm{x}|\bm{b}}}^2$
to additive error~$\epsilon$, the swap test needs to be carried out $\mathcal O(1/\epsilon^2)$ times.

If every entry of $M$ is zero except for a single 1 on its $i$th diagonal entry, then the expectation value
$\braket{\bm{x}|M|\bm{x}} = \abs{\braket{\bm{x}|i}}^2$ can be estimated by a direct application of 
the amplitude-estimation algorithm~\cite{BHM+02}. This technique was used for constructing an algorithm
for computing the effective resistance in a given electrical network~\cite{Wan17,Gilyn2019}.
The time complexity of this method scales as $\mathcal O(1/\epsilon)$ with respect to
the additive-error $\epsilon$. 

The key limitation of all three techniques discussed above is that they each apply only to some limited class of matrices; the first technique  applies if $M$ represents the orthogonal projector on the first qubit, the second technique applies  if $M = A^{-1}$, and the final technique applies if $M$ has entries that are all zero except for a single 1 on the diagonal. Moreover,
the first two techniques, which rely on classical sampling, achieve $\mathcal O(1/\epsilon^2)$  rather than
$\mathcal O(1/\epsilon)$.

\subsubsection{Estimating expectation values of simulatable Hermitian matrices}

We now discuss techniques for estimating expectation values, all of which are outside of the context of solving SLEP. 
The four techniques rely on the Hadamard test and algorithms for Hamiltonian simulation, phase estimation and amplitude estimation.

We now review the first technique, which is a quantum algorithm for estimating $\braket{0^n|V^\dagger UV |0^n}$, where $U$ and $V$ 
are given $n$-qubit unitary black boxes. 
One  common method for this estimation is the Hadamard test and proceeds as follows~\cite{AJL2008}.  First prepare the state 
$\ket{+}\ket{\psi}$, and then apply the controlled unitary operator $\con U$ (with control on the
first qubit) to obtain the state $(\ket{0}\ket{\psi}+\ket{1}U\ket{\psi})/\sqrt{2}$, where $\ket \psi := V\ket 0$. Next, 
apply a Hadamard gate $H$ on the first qubit and measure this qubit in the computational basis. 
The expectation value of the output is $\text{Re}\braket{\psi|U|\psi} = \text{Re}\braket{0^n|V^\dagger UV |0^n}$. To obtain an $\epsilon$-additive 
approximation of the real part of the expectation value, $\mathcal O(1/\epsilon^2)$ runs of the Hadamard test are required. 
The imaginary part of the expectation value can be estimated similarly by instead starting with the state 
$(\ket{0}-\text{i}\ket{1})\ket{\psi}/\sqrt{2}$.
    
The second technique for estimating the expectation value of $U$ relies on  amplitude estimation~\cite{KOS07}. Reduction to amplitude estimation makes use of the relations
\begin{eqnarray}
\label{ev2ae}
\text{Re}(\braket{\psi|U|\psi}) &  = & (4\abs{\bra{+}\bra{\psi}\con U \ket{+}\ket{\psi}}^2 - \abs{\braket{\psi|U|\psi}}^2 
- 1)/2, \nonumber \\
\text{Im}(\braket{\psi|U|\psi}) &  = & \left(4\abs{\bra{+}\bra{\psi}(\text{e}^{\text{i}\sigma_z\pi/4}\otimes \mathds{1}_n)\con U \ket{+}\ket{\psi}}^2 
- \abs{\braket{\psi|U|\psi}}^2 - 1\right)/2.
\end{eqnarray}
The terms $\abs{\bra{+}\bra{\psi}\con U \ket{+}\ket{\psi}}^2$, $\abs{\bra{+}\bra{\psi}(\text{e}^{\text{i}\sigma_z\pi/4}\otimes 
\mathds{1}_n)\con U \ket{+}\ket{\psi}}^2$ and $\abs{\braket{\psi|U|\psi}}^2$ 
on the right-hand sides of Eq.\,\eqref{ev2ae} can each be estimated by making one call each to the amplitude-estimation algorithm. 
We review  amplitude estimation  in \S\ref{sec:backgroundae}. This approach yields an algorithm 
for computing the expectation value $\braket{\psi|U|\psi}$ with complexity $\mathcal O(1/\epsilon)$.

Whereas the two techniques above apply only to estimating the expectation value of unitary matrices,
they are used as subroutines in other techniques for estimating expectation values of Hermitian matrices.
We now review a technique for estimating the expectation value of any simulatable matrix $M$~\cite{KOS07}.
A $N \times N$ matrix $M$ is simulatable if, for any $t \in \mathbb{R}^+$ and $\epsilon \in (0,1)$, 
the propagator $\text{e}^{\text{i}Mt}$ can be implemented by a $\poly(\log(N),t,1/\epsilon)$-sized circuit~\cite{ATS2003}. 
The expectation value $\braket{\psi|M|\psi}$ is then obtained by using the approximation~\cite{KOS07}
\begin{equation}
    \abs{\braket{\psi|\text{e}^{-\text{i}Mt}|\psi} - 1 + \text{i}t\braket{\psi|M|\psi}} \in  \mathcal O(t^2).
\end{equation}
The term $\braket{\psi|\text{e}^{-\text{i}Mt}|\psi}$ is computed by using one of the techniques for 
estimating the expectation value of a unitary matrix, which we discussed above.
If $M$ is simulatable by a $\poly(n,t,\log(1/\epsilon))$-sized circuit, then $\braket{\psi|M|\psi}$
can be computed to additive approximation $\epsilon$ in $\soft {\mathcal O}((1/\epsilon)^{1+\alpha})$ time for arbitrarily small $\alpha$
using an improved version of this technique. 
Here $\soft{\mathcal O}$ is called soft O and indicates suppression of logarithmic factors. 
    
An alternative technique for estimating the expectation value of a Hermitian and simulatable $M$
employs the phase-estimation algorithm applied to the operator $\text{e}^{\text{i}Mt}$~\cite{JW2007}. Suppose $M$ has spectral decomposition
$M = \sum_{j}\lambda_j \ket{v_j}\bra{v_j}$, and suppose $\ket \psi$ has decomposition $\ket{\psi} = \sum_j \alpha_j\ket{v_j}$.
One application of the phase-estimation circuit results in a state close to
$\ket{\psi_M} =  \sum_j \alpha_j\ket{v_j}\ket{\lambda_j}$. Measuring the final register in the computational basis
yields $\lambda_j$ with probability approximately $\abs{\alpha_j}^2$ for each $j$. The expectation value
$\braket{\psi|M|\psi} = \sum_j \abs{\alpha_j}^2\lambda_j$ can then be estimated by repeating the above procedure
$\mathcal O(1/\epsilon^2)$ times and averaging the measurement outcomes. Assuming that $M$ is simulatable by a 
$\poly(n,t,\log(1/\epsilon))$-sized circuit, the total complexity of this algorithm is evidently $\soft {\mathcal O}(1/\epsilon^3)$, which is not explicitly stated.

In conclusion, the first two techniques are particularly relevant to our work only for building the latter two techniques. These final two techniques for estimating the expectation value  outside of the context of SLEP 
apply to simulatable Hermitian matrices, which include all sparse matrices~\cite{ATS2003}. The best of these techniques 
achieves close to linear dependence on~$1/\epsilon$.
These techniques have not been combined with the algorithms for QLSP to solve SLEP.

\subsubsection{Estimating expectation values of block-encoded Hermitian matrices}
\label{sec:backgroundbevhm}
In this section, we review an algorithm for estimating the expectation value of a block-encoded Hermitian matrix
with respect to a quantum state given by a unitary black box~\cite{Ral2020}. The algorithms developed for
estimating the expectation value of simulatable Hermitian matrices can also be used for estimating the expectation 
values of block-encoded matrices. However, there is an elegant and more efficient way to estimate the expectation value of
block encoded matrices.

We first adapt the following problem
from Lemma 5 in Ref.~\cite{Ral2020}.
\begin{problem}[Block-access Expectation Value of a Hermitian Matrix (B-EVHM)] \label{pro:B-EVHM}
Given~$n$ qubits,
block access $(\alpha_M, a_M,0, U_M)$ to a $2^n\times 2^n$ Hermitian matrix $M$,  an accuracy $\epsilon \in [\alpha_M/2^n , \alpha_M]$,
and an $n$-qubit unitary black-box~$V$, 
return with probability at least $2/3$ an $\epsilon$-additive approximation to $\braket{{0^n}|V^\dagger M V|{0^n}}$. 
\end{problem}
\noindent We now review an algorithm for B-EVHM given in Ref.~\cite{Ral2020} that achieves linear dependence on $1/\epsilon$. 
We include a full analysis of the complexity of this algorithm by counting queries to 
$V$ and $U_M$, and additional two-qubit gates separately.
To simplify this analysis, we employ the standard amplitude-estimation algorithm, 
as opposed to the more recent amplitude-estimation algorithms which do not require the quantum Fourier transform \cite{GGZW19}. 
This change in amplitude-estimation algorithm has no effect on complexity. 

As the expectation value of $M$ is not always positive,
it cannot be computed by a direct application of amplitude estimation. The problem is first reduced 
to one solvable by amplitude estimation
in the following lemma, using a technique that employs a controlled block encoding~\cite{KOS07}. 
\begin{lemma}
\label{lem:EVtoAE}
For any $n$-qubit Hermitian matrix $M$ with $\norm{M}\le \Lambda$ and for any $n$-qubit state $\ket{\psi}$,
\begin{equation}
\label{eq:McM}
    \braket{\psi|M|\psi} =
    \Lambda\left(2\abs{\bra{+}\bra{\psi} \Con  (M/\Lambda) \ket{+} \ket{\psi}}-1\right).
\end{equation}
\end{lemma}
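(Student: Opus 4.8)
The plan is to prove this identity by a direct computation of the matrix element $\bra{+}\bra{\psi}\Con(M/\Lambda)\ket{+}\ket{\psi}$, followed by an argument that the absolute value is redundant. First I would expand the controlled operator using the convention fixed in the preliminaries, namely $\Con(M/\Lambda) = \ket{0}\bra{0}\otimes\mathds{1}_n + \ket{1}\bra{1}\otimes(M/\Lambda)$, and act on the input state to obtain
\begin{equation}
\Con(M/\Lambda)\ket{+}\ket{\psi} = \frac{1}{\sqrt{2}}\left(\ket{0}\ket{\psi} + \ket{1}\,(M/\Lambda)\ket{\psi}\right).
\end{equation}

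Next I would take the inner product with $\bra{+}\bra{\psi} = \tfrac{1}{\sqrt 2}(\bra{0}+\bra{1})\bra{\psi}$. Because $\braket{0|1}=\braket{1|0}=0$, the two cross terms drop out and only the diagonal terms survive, yielding
\begin{equation}
\bra{+}\bra{\psi}\Con(M/\Lambda)\ket{+}\ket{\psi} = \frac{1}{2}\left(\braket{\psi|\psi} + \frac{\braket{\psi|M|\psi}}{\Lambda}\right) = \frac{1}{2}\left(1 + \frac{\braket{\psi|M|\psi}}{\Lambda}\right),
\end{equation}
where I used that $\ket{\psi}$ is a normalized state. This is the heart of the calculation.

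The only nontrivial step — and the one I expect to be the main (minor) obstacle — is justifying that the absolute value in the lemma acts trivially, i.e.\ that the quantity above is real and nonnegative. Here I would invoke both hypotheses: since $M$ is Hermitian, $\braket{\psi|M|\psi}$ is real, and since $\norm{M}\le\Lambda$ with $\ket{\psi}$ normalized, we have $\abs{\braket{\psi|M|\psi}}\le\norm{M}\le\Lambda$, so $\braket{\psi|M|\psi}/\Lambda \in [-1,1]$ and hence $1 + \braket{\psi|M|\psi}/\Lambda \ge 0$. Consequently the absolute value can be removed, giving $\abs{\bra{+}\bra{\psi}\Con(M/\Lambda)\ket{+}\ket{\psi}} = \tfrac{1}{2}(1 + \braket{\psi|M|\psi}/\Lambda)$. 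Rearranging, $2\abs{\bra{+}\bra{\psi}\Con(M/\Lambda)\ket{+}\ket{\psi}} - 1 = \braket{\psi|M|\psi}/\Lambda$, and multiplying through by $\Lambda$ recovers Eq.~\eqref{eq:McM}, completing the proof.
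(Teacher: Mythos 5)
Your proposal is correct and follows essentially the same route as the paper's proof: expand $\Con(M/\Lambda)$ as $\ket{0}\bra{0}\otimes\mathds{1}+\ket{1}\bra{1}\otimes(M/\Lambda)$, compute the matrix element directly to get $\tfrac{1}{2}\left(1+\braket{\psi|M/\Lambda|\psi}\right)$, and use Hermiticity plus $\norm{M}\le\Lambda$ to drop the absolute value. The only difference is ordering (the paper justifies removing the absolute value before the computation rather than after), which is immaterial.
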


\begin{proof}
The matrix $M$ is Hermitian; therefore, $\braket{\psi|M|\psi}\in\mathbb{R}$ for any $\ket{\psi} \in \mathscr{H}_2^{\otimes n}$.
As $\norm{M/\Lambda} \le 1$, we have $\abs{\braket{\psi|M/\Lambda|\psi}} \le 1$ and
$\braket{\psi|M/\Lambda|\psi}+1 = \abs{1+\braket{\psi|M/\Lambda|\psi}}$.
Eq.~(\ref{eq:McM}) follows from
\begin{align}
    2|\bra{+} \bra{\psi} \Con (M/\Lambda) \ket{+} \ket{\psi}|\nonumber 
    &= 2|\bra{+}\bra{\psi}\left(\ket{0}\bra{0}\otimes \mathds{1} + 
    \ket{1}\bra{1}\otimes M/\Lambda \right) \ket{+}\ket{\psi}| \nonumber \\
    & = |1+\braket{\psi|M/\Lambda|\psi}| \nonumber \\
    & = {1+\braket{\psi|M/\Lambda|\psi}}.
\end{align}
\end{proof}

From the definition of block encoding (Def. \ref{def:block encoding}) it is evident that $\|M\| \leq \alpha_M$ for any block encoding. For our application of Lemma \ref{lem:EVtoAE}, $\alpha_M$, which is included in block access to $M$, is used as the bound $\Lambda$ on the norm of $M$.
The next lemma, which is a special case of Lemma 52 in Ref.~\cite{Gilyn2019}, shows that block access to $\Con (M/\alpha_M)$ can be implemented if block access
to $M$ is given.
\begin{lemma}[\cite{Gilyn2019}]
\label{lem:controlM}
Given $n$ qubits and block access $(\alpha_M,a_M,0, U_M)$ to an $2^n \times 2^n$ Hermitian matrix $M$. Then block access $\left(1, a_{M}, 0, U_{M'}\right)$ to the Hermitian matrix  $M' : = \Con (M/\alpha_M)$
can be implemented such that $U_{M'}$ uses a single query to $U_M$.
\end{lemma}
\begin{proof}
The operator $U_M$ is an $(n+a_M)$-qubit unitary. The desired circuit for  $U_{M'}$ acts on $a_M + 1 + n$ qubits and comprises only a single query to $\con U_M$ in which the $(a_M +1)$th qubit is the control qubit, and the $U_M$ operation controlled by this qubit operates on the first $a_M$ qubits and the last $n$ qubits.
We now show that the operator $U_{M'}$ implemented by this circuit is indeed a $(1, a_{M}, 0)$ block encoding of $\Con (M/\alpha_M)$. Let $s_1$ and $s_2$ be arbitrary $n$-bit strings. Then
\begin{equation}\label{eq: block check 1 c-M}
    \bra{0^{a_M},0,s_1}U_{M'}  \ket{0^{a_M},0,s_2} = \delta_{s_1, s_2},
\end{equation}
as the control qubit is set to $\ket{0}$, so $U_M$ is not applied.
Furthermore, 
\begin{equation}\label{eq: block check 2 c-M}
    \bra{0^{a_M},1,s_1}U_{M'}  \ket{0^{a_M},1,s_2} = M_{s_1,s_2}/\alpha_M,
\end{equation}
by definition of block encoding.
Finally, $\bra{0^{a_M},0,s_1}U_{M'}  \ket{0^{a_M},1,s_2} = 0$.
By equations (\ref{eq: block check 1 c-M}) and (\ref{eq: block check 2 c-M}),
\begin{equation}\label{eq: matrix of C-M}
     \bra{0^{a_M}}U_{M'} \ket{0^{a_M}} = \begin{pmatrix} 
     \mathds 1 & \mymathbb 0 \\
    \mymathbb 0 & M/\alpha
     \end{pmatrix},
\end{equation}
where $\mymathbb 0$ is a $2^n \times 2^n$ zero matrix.
As Eq.~(\ref{eq: matrix of C-M}) gives the matrix representation of $\Con (M/\alpha)$, the operator $U_{M'}$ is the desired block encoding.
\end{proof}

The following theorem states the complexity of solving B-EVHM. 
\begin{thm}
\label{thm:B-EVHM}
Given $n$ qubits, 
block access $(\alpha_M, a_M,0, U_M)$ to a $2^n \times 2^n$ Hermitian matrix $M$, an accuracy~$\epsilon > 0$ and a unitary black box $V$, then
an $\epsilon$-additive approximation $\tilde{u}$ to $u := \braket{0 |V^\dagger M V| 0 }$ 
can be returned, with probability at least 2/3, 
using $\mathcal  O(\alpha_M/\epsilon)$
queries to $U_M$, 
$\mathcal  O(\alpha_M/\epsilon)$
queries to $V$
as well as 
$\mathcal O\left((n+a_M)\alpha_M/\epsilon\right)$
additional 2-qubit gates.
\end{thm}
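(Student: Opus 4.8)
The plan is to reduce the estimation of $u = \braket{0^n|V^\dagger M V|0^n}$ to a single call to the amplitude-estimation routine of Algorithm~\ref{alg:AEA}, using the two structural lemmas already in hand. Set $\ket{\psi} := V\ket{0^n}$, so $u = \braket{\psi|M|\psi}$, and take $\Lambda = \alpha_M$, which is a legitimate norm bound because $\|M\| \le \alpha_M$ holds for every block encoding. Lemma~\ref{lem:EVtoAE} then supplies the exact identity
\begin{equation}
u = \alpha_M\bigl(2\,|\bra{+}\bra{\psi}\,\Con(M/\alpha_M)\,\ket{+}\ket{\psi}| - 1\bigr),
\end{equation}
so it suffices to estimate the single nonnegative amplitude $r := |\bra{+}\bra{\psi}\,\Con(M/\alpha_M)\,\ket{+}\ket{\psi}|$ and rescale.

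First I would apply Lemma~\ref{lem:controlM} to construct, with one query to $U_M$, a $(1,a_M,0)$ block encoding $U_{M'}$ of the Hermitian matrix $M' := \Con(M/\alpha_M)$. The crux is to recognize $r$ as an amplitude of the form handled by Algorithm~\ref{alg:AEA} by folding the block-encoding ancillas into the state-preparation step. On the $(a_M+1+n)$-qubit register define $\ket{\Phi} := \ket{0^{a_M}}\ket{+}\ket{\psi}$; since $\bra{0^{a_M}}U_{M'}\ket{0^{a_M}} = M'$ by definition of block encoding, $\bra{\Phi}U_{M'}\ket{\Phi} = \bra{+}\bra{\psi}M'\ket{+}\ket{\psi}$, whence $r = |\bra{\Phi}U_{M'}\ket{\Phi}|$. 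Taking the state-preparation unitary $W := \mathds{1}_{a_M}\otimes H \otimes V$, which satisfies $W\ket{0^{a_M+1+n}} = \ket{\Phi}$ at a cost of one Hadamard and one query to $V$, gives exactly $r = |\braket{0|W^\dagger U_{M'} W|0}|$, the quantity Algorithm~\ref{alg:AEA} estimates under the substitution (sandwiched unitary $\mapsto U_{M'}$, state-preparation unitary $\mapsto W$).

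The algorithm is therefore: run Algorithm~\ref{alg:AEA} with accuracy $\epsilon' := \epsilon/(2\alpha_M)$ on inputs $U_{M'}$ and $W$ to obtain $\tilde r$, and return $\tilde u := \alpha_M(2\tilde r - 1)$. For correctness, $|\tilde r - r| < \epsilon'$ yields $|\tilde u - u| = 2\alpha_M|\tilde r - r| < 2\alpha_M\epsilon' = \epsilon$, and the success probability $2/3$ is inherited verbatim from Algorithm~\ref{alg:AEA}. For the complexity, Lemma~\ref{lem:AEcomplexity} gives $\mathcal O(1/\epsilon') = \mathcal O(\alpha_M/\epsilon)$ queries to each of $U_{M'}$ and $W$, plus $\mathcal O((n+a_M)/\epsilon')$ additional $2$-qubit gates on the $(n+a_M+1)$-qubit register. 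Since one query to $U_{M'}$ costs one query to $U_M$ (Lemma~\ref{lem:controlM}) and one query to $W$ costs one query to $V$, the query counts to both $U_M$ and $V$ are $\mathcal O(\alpha_M/\epsilon)$, and the gate count is $\mathcal O((n+a_M)\alpha_M/\epsilon)$, as claimed.

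I do not anticipate a serious obstacle, as the argument is mostly bookkeeping once $r$ is cast in amplitude-estimation form. The points deserving care are: verifying that the projection of the block-encoding ancillas onto $\ket{0^{a_M}}$ is already encoded in the inner product $\bra{\Phi}U_{M'}\ket{\Phi}$, so that no separate post-selection is needed; propagating the factor of $2$ from the identity into the choice $\epsilon' = \epsilon/(2\alpha_M)$ so the final error is $\epsilon$ and not $2\epsilon$; and disposing of the trivial regime $\epsilon \ge 2\alpha_M$ (where $|u| \le \alpha_M$ makes returning $0$ a valid $\epsilon$-approximation) so that $\epsilon' \in (0,1)$ whenever Algorithm~\ref{alg:AEA} is actually invoked.
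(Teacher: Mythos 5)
Your proposal is correct and follows essentially the same route as the paper's own proof: construct the $(1,a_M,0)$ block encoding of $\Con(M/\alpha_M)$ via Lemma~\ref{lem:controlM}, run Algorithm~\ref{alg:AEA} with accuracy $\epsilon/(2\alpha_M)$ and state-preparation unitary $\mathds{1}_{a_M}\otimes H\otimes V$, and rescale via Lemma~\ref{lem:EVtoAE}. In fact your write-up is slightly more careful than the paper's, since it spells out why the block-encoding ancillas fold into the estimated amplitude without post-selection and handles the trivial regime $\epsilon \ge 2\alpha_M$ where the amplitude-estimation precondition $\epsilon' \in (0,1)$ would otherwise fail.
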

\begin{proof}
We prove this theorem by constructing Algorithm \ref{alg:AB-EVHM}, and now proceed to prove its correctness and complexity. 
\begin{algorithm}[H]
\begin{algorithmic}[1]
\caption{Algorithm for Block-access Expectation Value of an Hermitian Matrix \label{alg:AB-EVHM}}
\Require{
$n$ qubits, 
 block access $(\alpha_M, a_M,0, U_M)$ to $M$, an accuracy $\epsilon > 0$, and a unitary black box $V$}
\Ensure{
an $\epsilon$-additive approximation $\tilde{u}$ to $u := \braket{0|V^\dagger M V| \psi }$ 
with probability 2/3
}

\hspace{-1.3cm}\ \textbf{Procedure:}
\State Construct block access $(1, a_M,0, U_{M'})$ to  $M' = \Con (M/\alpha_M)$ 
\Comment{use Lemma \ref{lem:controlM}}.
\State Compute $\tilde{r}$ by executing Algorithm \ref{alg:AEA} with input $(n+a_M,\epsilon/2\alpha_M,U_{M'},
\mathds{1}_{a_M}\otimes H \otimes V)$.
\State \Return $\tilde{u} := \alpha_M(2\tilde{r}-1)$.
\end{algorithmic}
\end{algorithm}

\noindent \textit{Correctness}:
Define $r := \abs{\bra{+}\bra 0 V^\dagger M' V\ket{+}\ket{0}}$.
Then $\tilde{r}$ is an $(\epsilon/2\alpha_M)$-additive approximation to $r$ with probability at least 2/3.
 By Lemma~\ref{lem:EVtoAE}, using $\alpha_M$ as an upper bound on $\|M\|$,  we have  $u = \alpha_M(2r-1)$.
Therefore, $\tilde{u}$ is an $\epsilon$-additive approximation to $u$ with probability at least 2/3.

\noindent \textit{Complexity:}
As the algorithm makes only a single call to amplitude-estimation, the complexities follow directly from Lemma~\ref{lem:AEcomplexity} and Lemma~\ref{lem:EVtoAE}.
\end{proof}

Algorithm \ref{alg:AB-EVHM} concludes our review of algorithms for estimating expectation values of block-encoded matrices.

\section{Approach}
\label{sec:approach}

In this section, we describe our formulation of SLEP in the block encoding setting,
our methods for establishing query-complexity lower bounds for this computational problem, 
and the techniques we use in constructing a quantum algorithm that saturates this bound.
The detailed derivation of our results is presented in~\S\ref{sec:results}.

\subsection{Query-complexity lower bound for B-EVHM}
\label{sec:approachlowerbound}

We begin by presenting our computational problem statements, starting with B-SLEP, which is a formulation of SLEP in the block-encoding setting. We then discuss two expectation-value-estimation problems. We close by
describing the reductions we use to derive our lower bounds for B-SLEP.

In contrast to QLSP~(Problem~\ref{prob:SQLSP}), the output of SLEP is a number, not a quantum state. We formulate SLEP in the block-encoding setting.
\begin{problem}[Block-access System of Linear Equations Problem (B-SLEP)]
\label{pro:SLE1}
Given $n$ qubits, 
a $\kappa \ge 1$,
block access $(\alpha_A,a_A,0, U_A)$ to a $2^n\times 2^n$ invertible Hermitian matrix $A$ such that $\norm{A^{-1}}\alpha_A \le  \kappa$, 
block access $(\alpha_M,a_M,0, U_M)$ to a $2^n\times 2^n$ Hermitian matrix $M$, 
an accuracy $\epsilon \in [\alpha_M/2^n,\alpha_M]$,
and an $n$-qubit unitary black box $U_{\bm b}$, 
return with probability at least $2/3$ an $\epsilon$-additive approximation to $\bm x^\dagger M \bm x$, where 
$\bm{x} := A^{-1}{\bm{b}}$ and $\bm{b}$ is the $2^n$-dimensional complex vector with entries $b_i = \braket{i|U_{\bm{b}}|0^n}$.
\end{problem}

We require that $\epsilon \le \alpha_M$ as otherwise the problem is trivial
with $0$ a valid answer. We restrict to instances with $\epsilon \ge \alpha_M/2^n$, as our query-complexity lower bound only applies to these instances. Additionally, by Theorem~\ref{lem:HHLlowerbound2}, any quantum algorithm solving B-SLEP for instances with $\epsilon \le \alpha_M/2^n$ must have complexity super-polynomial in~$n$ across these instances, under complexity-theoretic assumptions. As a result, the requirement $\epsilon \ge \alpha_M/2^n$ restricts only to instances for which a quantum algorithm could have complexity in $\poly(n)$. Note that $\alpha_A$ and $\alpha_M$ bound the norm of $A$ and $M$, as blocks of $U_A$ and $U_M$ do not have norm greater than $1$. 

It follows from the Chernoff bound \cite[p.\,154]{NC} that a constant number of repetitions of 
an algorithm solving Problem \ref{pro:SLE1} is 
sufficient to boost the success probability to a constant $s \in [2/3,1)$ using a number of repetitions that is logarithmic in $1/(1-s)$. Therefore, if the bound $2/3$ on the success probability is replaced by any constant in the interval $[2/3,1)$ no changes in complexity occur.
In any problem where the output is an $\epsilon$-additive approximation, 
we require that the output of the algorithm in the failed attempts is a complex number but not necessarily an $\epsilon$-additive
approximation. Although $A$ is assumed to be Hermitian in B-SLEP, our results also apply to non-Hermitian
matrices, as they can be encoded into Hermitian matrices~\cite{HHL09}.

We aim to establish a lower bound on the number of queries to $U_M$ that any quantum algorithm for B-SLEP must make.
As the solution state $\ket{\bm x}$ can be generated using algorithms for QLSP without making any queries to $U_M$, we focus on the measurement step where queries to $U_M$ are required.
The measurement step is in fact formalized by the problem B-EVHM discussed in \S\ref{sec:backgroundbevhm}.  
This problem is equivalent to B-SLEP restricted to the case $A = \mathds{1}$. 
Therefore, a lower bound for B-EVHM gives a lower bound for B-SLEP. 

Our lower bound for B-EVHM is found by first proving a lower bound on the analog of B-EVHM in the sparse-access setting, which
we state below.
\begin{problem}[Sparse-access Expectation Value of a Hermitian Matrix (S-EVHM)] \label{pro:S-EVHM}
Given  number of qubits $n$, block access $(d_M, \beta_M, O_{M_{\rm val}},O_{M_{\rm loc}})$ to a $2^n \times 2^n$ Hermitian matrix $M$, an accuracy $\epsilon \in [d_M\beta_M/2^n , d_M\beta_M]$, 
and an $n$-qubit unitary black box~$V$, 
return with probability at least $2/3$ an $\epsilon$-additive approximation to $\braket{{0^n}|V^\dagger M V|{0^n}}$. 
\end{problem}
To establish our query-complexity lower bound on S-EVHM, we reduce the problem of calculating the approximate mean 
of a black-box function (Problem~\ref{prob:AM}) to S-EVHM.
This reduction allows us to use the known bounds for AM~\cite{NW99}, given in Corollary \ref{lem:queryAM}, to derive bounds for S-EVHM. Furthermore, as S-EVHM is reducible to B-EVHM, and since B-SLEP is reducible to B-EVHM, 
we can extend these lower bounds to both B-EVHM and B-SLEP. 
The reduction of AM to S-EVHM relies on the observation that the 
expectation value of a Hermitian matrix $M$ with respect to the
uniform superposition state 
$\ket{+^n} := {1}/{\sqrt{N}}\sum_{j\in [N]}\ket{j}$ equals the mean of the entries of $M$, i.e.
\begin{equation}
\bra{+^n}  M  \ket{+^n} = \frac{1}{N}\sum_{i,j \in [N]}M_{ij}.
\end{equation}

Given oracle access to a function $f:[N]\to\{0,1\}$, we construct sparse access 
to a matrix $M$ that encodes the function $f$ in its entries. 
The mean of $f$ can then be calculated by one call to S-EVHM with $U = H^{\otimes n}$, which completes the reduction. 
A function $f$ can be encoded in the entries of a Hermitian matrix in many different ways, with one way being 
encoding the function values along the diagonal entries of the matrix. Whereas such an encoding would 
be sufficient to derive a lower bound for S-EVHM with respect to $\epsilon$ and the max-norm $\beta_M$, we employ a different reduction
to achieve a lower bound with respect to $\epsilon$, $d_M$ and $\beta_M$. We encode a function 
on domain $[d_M2^{n-1}]$ in the entries of a Hermitian matrix of sparsity $d_M$. 
We use this encoding to prove that the query complexity of S-EVHM is $\Omega(d_M \beta_M/\epsilon)$. 
Our query-complexity lower bound for B-EVHM is derived by a reduction
leveraging known methods for constructing  block encodings from  sparse-access 
encodings given in Lemma \ref{lem:block encoding sparse access}.
For B-EVHM, S-EVHM and B-SLEP, the query-complexity lower bound yields $\Omega(1/\epsilon)$ dependence on precision,
which rules out the possibility of any algorithm solving any of these problems in super-logarithmic time.

{ We note that a lower bound on B-EVHM could be established directly from the lower bound for
AM. However, we opt to first prove a lower bound on S-EVHM and then derive our lower bound on B-EVHM. 
Our motivation for this choice is that the lower bound for S-EVHM is important beyond its application to deriving 
a lower bound for B-SLEP, as estimation of expectation value of a sparse Hermitian matrix 
is a standard subroutine in several algorithms~\cite{KOS07, HWM0W2021, JW2007, Ral2020, KL2021,HWM0W2021}.}

\subsection{Tightness of the lower bound}
\label{sec:approachreduction}
Our approach for proving the tightness of our query-complexity lower bounds for B-EVHM, B-SLEP and S-EVHM 
is to give quantum algorithms that achieve these bounds. The algorithm for B-EVHM discussed in \S\ref{sec:backgroundbevhm}
saturates our bound for B-EVHM. We now describe 
how we design our quantum algorithms for B-SLEP and S-EVHM. 
The detailed construction of these algorithms, along with their complexities, 
is given in~\S\ref{subsubsec:B-SLEP} and \S\ref{app:sslep}.

We start by describing a reduction of B-SLEP to B-EVHM, and then describe our quantum algorithm for B-EVHM. 
This reduction employs a block-encoding of $A^{-1}$, which 
is known to be implementable with $\polylog(1/\epsilon)$ queries to $U_A$ using
techniques reviewed in \S\ref{sec:block encodings}.
Let $U_{A^{-1}}$ be an $(\alpha,a,0)$ block encoding of $A^{-1}$.
Observe that
\begin{equation}
\label{reduction}
    \bra{0^{a}}\bra{\bm b} U_{A^{-1}}^\dagger  \left( \ket{0^{a}}\bra{0^{a}} \otimes M \right)U_{A^{-1}}\ket{0^{a}}\ket{\bm b} = \frac{1}{\alpha^2}
     \bra{\bm{b}} {A^{-1}}^\dagger M A^{-1}\ket{\bm{b}} = \frac{1}{\alpha^2}\bm{x}^\dagger M\bm{x}.
\end{equation}
A block encoding of $\ket{0^{a}}\bra{0^{a}} \otimes M$ is implementable by a simple circuit that queries
the block encoding of $M$.
The left-hand side of Eq.~\eqref{reduction} is the expectation value of $\ket{0^{a}}\bra{0^{a}} \otimes M$
with respect to $U_{A^{-1}}\ket{0^{a}}\ket{\bm b}$, 
and therefore can be estimated by one call to an algorithm for B-EVHM. This completes
the reduction of B-SLEP to B-EVHM.

To design our algorithm for solving S-EVHM, we use a known procedure for
generating a block encoding of $M$  
using the sparse-access oracles for $M$ (Lemma~\ref{lem:block encoding sparse access}).
This block encoding of $M$ is used as an input to our algorithm for B-EVHM, 
which completes the design of our algorithm for S-EVHM.
This concludes the discussion of our approach.

\section{Results}
\label{sec:results}
In this section we prove the query complexity of B-SLEP is $\Theta(\alpha_M/\epsilon)$. In \S \ref{sec:lowerbound} we give a lower bound for B-SLEP. Next, in \S \ref{sec:algorithm} we describe a quantum algorithm that saturates this bound. 
\subsection{Lower bound for B-SLEP}
\label{sec:lowerbound}

In this section, we derive query-complexity lower bounds for B-EVHM, S-EVHM and B-SLEP. We first
derive a lower bound on queries to $U_M$ for S-EVHM, which utilizes existing query-complexity lower bound for AM~\cite{NW99}. Next, we use Lemma \ref{lem:block encoding sparse access} to reduce  S-EVHM to B-EVHM and prove a lower bound for B-EVHM.
We extend this lower bound to B-SLEP by reducing B-EVHM to instances of B-SLEP with $A = \mathds 1$.

We now derive a lower bound
for S-EVHM by a reduction of AM to S-EVHM. En route to this reduction, we define a scaled version of AM and 
obtain the dependence of its query complexity on the scaling factor and
the accuracy.
\begin{problem}[Scaled Approximate Mean (SAM)] \label{pro:SAM}
Given $N \in \mathbb{Z}^+$, $\epsilon \in [\beta/2N,\beta)$, $\beta \in \mathbb{R}^+$ and access to an oracle $Q_g$ encoding a function 
$g:[N] \mapsto [0,\beta]$, return an approximation $\tilde \mu_g \in [0,\beta]$ of the mean $\mu_g := \left(\sum_{j=0}^{N-1} g(j)\right)/N$
such that $|\tilde \mu_g-\mu_g| <  \epsilon $ with probability at least  $2/3$.
\end{problem}

\begin{lemma}\label{lem:SAM alg implies AM alg}
Given an algorithm $\mathcal Q$ that solves SAM for any valid input $(N',1/\epsilon', \beta' , O_g)$ 
by making $K$ queries to~$O_g$, an algorithm $\mathcal R$ can be constructed that solves AM for input  
$(N',\beta'/\epsilon', O_f)$ by making $K$ queries to $O_f$.
\end{lemma}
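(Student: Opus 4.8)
The plan is to reduce AM to SAM by a pure rescaling of the target function, so that a single run of the SAM solver $\mathcal{Q}$ answers the AM instance with no query overhead. Given the AM input $(N', \beta'/\epsilon', O_f)$ with $f:[N']\to[0,1]$ (where the precision parameter $\beta'/\epsilon'$ corresponds to additive accuracy $\epsilon'/\beta'$), I would set $g := \beta' f : [N'] \to [0,\beta']$, which is a legitimate SAM function, and invoke $\mathcal{Q}$ on the SAM instance $(N', 1/\epsilon', \beta', O_g)$ at additive accuracy $\epsilon'$. The whole reduction hinges on the identity $\mu_g = \beta'\mu_f$: any additive estimate of $\mu_g$, rescaled by $1/\beta'$, becomes an additive estimate of $\mu_f$ whose error shrinks by exactly that factor $1/\beta'$, which is precisely the factor relating the two precision parameters in the claimed tuples. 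Thus $\mathcal{R}$ runs $\mathcal{Q}$, reads off its output $\tilde\mu_g$, and returns $\tilde\mu_f := \tilde\mu_g/\beta'$.

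The step I expect to be the crux is supplying $\mathcal{Q}$ with an oracle $O_g$ for $g$ while spending only one query to $O_f$ per query to $O_g$, so that the total stays at $K$ rather than doubling. A naive construction, namely querying $O_f$ into an ancilla, multiplying the recorded value by the constant $\beta'$, and uncomputing, would cost a compute/uncompute pair and hence two queries; worse, multiplication by $\beta'$ does not distribute over the bitwise-XOR write convention of the oracle, so it cannot simply be conjugated in. I would instead sidestep all arithmetic by exploiting the freedom in how SAM represents its value register: choose the representation $\mathrm{rep}_g(w) := \mathrm{rep}(w/\beta')$ for $w \in [0,\beta']$. Since relative error is invariant under scaling, $\mathrm{rep}_g$ satisfies the paper's representation requirements, and under it $\mathrm{rep}_g(g(j)) = \mathrm{rep}(\beta' f(j)/\beta') = \mathrm{rep}(f(j))$. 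Hence the very bit string that $O_f$ XORs into the value register is already a correct $O_g$ output, so $O_g$ and $O_f$ act identically and each of $\mathcal{Q}$'s queries is literally a query to $O_f$. This relabeling dissolves the XOR obstruction and pins the query count at exactly $K$.

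It then remains to track accuracy, success probability, and input validity, all routine. The solver $\mathcal{Q}$ returns $\tilde\mu_g \in [0,\beta']$ with $|\tilde\mu_g - \mu_g| < \epsilon'$ and probability at least $2/3$, so $\tilde\mu_f = \tilde\mu_g/\beta' \in [0,1]$ satisfies $|\tilde\mu_f - \mu_f| < \epsilon'/\beta'$ with the same probability, matching the accuracy demanded by the AM input $(N', \beta'/\epsilon', O_f)$. The final division by $\beta'$ is query-free classical post-processing, and the interval constraints transfer directly, since $\epsilon' \in [\beta'/2N', \beta')$ for the SAM instance is equivalent to the AM accuracy $\epsilon'/\beta'$ lying in its admissible range $(1/2N', 1]$. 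Therefore $\mathcal{R}$ solves AM using exactly $K$ queries to $O_f$, which completes the reduction.
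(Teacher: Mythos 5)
Your proposal is correct and follows essentially the same reduction as the paper: construct an oracle for $g = \beta' f$, run $\mathcal{Q}$ on the SAM instance $(N', 1/\epsilon', \beta', O_g)$, and return $\tilde\mu_g/\beta'$, using $\mu_g = \beta'\mu_f$ to convert the accuracy. Your representation-relabeling argument carefully justifies the step the paper merely asserts ("each call to $O_g$ can be implemented with one call to $O_f$"), which is a welcome refinement but not a different approach.
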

\begin{proof}
Using queries to $\mathcal Q$, we design a  algorithm $\mathcal{R}$
to solve the AM instance given by the input in the lemma statement. 
Using the input oracle $O_f$, construct a new oracle $O_g$ encoding the function
$g:[N]\mapsto [0,\beta']: j \mapsto \beta'f(j)$. The algorithm 
$\mathcal R$ first obtains a number $\tilde \mu_g$ by running the algorithm $\mathcal{Q}$ with input $(N',1/\epsilon', \beta',O_g)$, and then returns $\tilde \mu_f = \tilde \mu_g/\beta'$. 

The number $\tilde \mu_g$ satisfies $|\tilde \mu_g-\mu_g| < \epsilon' $ with probability at least $2/3$,
where $\mu_g$ is the mean of $g$ as defined in Problem~\ref{pro:SAM}.
Therefore, with probability at least $2/3$, $\tilde \mu_f$ satisfies $|\tilde \mu_f-\mu_f| < \epsilon'/\beta'$ 
as required. Furthermore, each call to $O_g$ can be implemented with one call to $O_f$, therefore
$\mathcal R$ makes $K$ queries to $O_f$.
\end{proof}
We prove the complexity of AM implies a lower bound for SAM.
\begin{lemma} \label{lem:querySAM}
The query complexity of SAM is $\Omega(\beta/\epsilon)$.
\end{lemma}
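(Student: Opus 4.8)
The plan is to derive the $\Omega(\beta/\epsilon)$ lower bound for SAM by composing the reduction of Lemma~\ref{lem:SAM alg implies AM alg}, which converts a SAM solver into an AM solver, with the known query-complexity lower bound for AM recorded in Corollary~\ref{lem:queryAM}. The guiding idea is that the reduction trades the scaling factor $\beta$ for a corresponding tightening of the AM accuracy from $\epsilon$ to $\epsilon/\beta$, so the $\Omega(1/\delta)$ hardness of AM at accuracy $\delta = \epsilon/\beta$ becomes an $\Omega(\beta/\epsilon)$ hardness for SAM.

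Concretely, I would take an arbitrary algorithm $\mathcal Q$ solving SAM and let $K$ denote the number of oracle queries it makes on an input with domain $[N]$, scaling factor $\beta$ and accuracy $\epsilon$. Matching the parameters of Lemma~\ref{lem:SAM alg implies AM alg} by setting $\epsilon' = \epsilon$ and $\beta' = \beta$, that lemma yields an algorithm $\mathcal R$ solving the AM instance with domain $[N]$ and accuracy $\epsilon/\beta$ using the same number $K$ of queries. I would then apply Corollary~\ref{lem:queryAM}, which states that solving AM at accuracy $\delta$ requires $\Omega(1/\delta)$ queries; with $\delta = \epsilon/\beta$ this forces $K \in \Omega(\beta/\epsilon)$. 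Since $\mathcal Q$ was arbitrary, this establishes the claimed query complexity of SAM.

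The one point demanding care---and the closest thing to an obstacle in an otherwise immediate argument---is verifying that the AM instance produced by the reduction is admissible, i.e.\ that its accuracy $\epsilon/\beta$ lies in the range $(1/2N,1]$ required by Problem~\ref{prob:AM}. The SAM hypothesis $\epsilon \in [\beta/2N,\beta)$ gives $\epsilon/\beta \in [1/2N,1)$, which agrees with the AM range except at the single endpoint $\epsilon = \beta/2N$, where AM demands a strict inequality. Because this discrepancy excludes only one boundary value of the parameters, it leaves the asymptotic lower bound $\Omega(\beta/\epsilon)$ intact. Apart from this bookkeeping, the proof is a direct composition of two previously stated results and involves no further difficulty.
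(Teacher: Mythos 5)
Your proof is correct and takes essentially the same route as the paper's: it composes the reduction of Lemma~\ref{lem:SAM alg implies AM alg} (instantiated with $\beta'=\beta$, $\epsilon'=\epsilon$) with the AM lower bound of Corollary~\ref{lem:queryAM}. The only differences are presentational—the paper casts the argument as a contradiction with the explicit constants $\eta_0, C_0$ required by the multivariate $\Omega$ of Definition~\ref{def: multivar big O}, whereas you argue directly—and your remark about the endpoint $\epsilon=\beta/2N$, where the induced AM accuracy $1/2N$ falls just outside AM's admissible range $(1/2N,1]$, flags a boundary subtlety that the paper's own proof silently shares.
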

\begin{proof}
From Corollary \ref{lem:queryAM}, there exist constants $\eta_0, C_0 \in \mathbb{R}^+$ 
such that any quantum algorithm solving AM on input
$(N, 1/\epsilon, O_f)$, satisfying $N,1/\epsilon > \eta_0$,
makes at least $C_0/\epsilon$ queries to $O_f$.
Our proof is by contradiction, so assume SAM has query complexity not in $\Omega(\beta/\epsilon)$. 
Then, there exists a quantum algorithm $\mathcal{Q}$ that solves SAM for some input $(N',
1/\epsilon', \beta', O_g)$, with  $N',\beta',1/\epsilon' > \max(\eta_0, 1)$, using fewer than 
$\lfloor C_0 \beta'/\epsilon'\rfloor$ queries to $O_g$. By Lemma~\ref{lem:SAM alg implies AM alg}, there exists an algorithm $\mathcal R$ solving AM on input $(N', \beta'/\epsilon',O_g)$ which makes fewer than $\lfloor C_0 \beta'/\epsilon'\rfloor = 
\lfloor C_0/\epsilon'\rfloor$ queries to $O_g$. As $N',\beta'/\epsilon', > \eta_0$, 
we have a contradiction to the first statement of the proof.
\end{proof}
To prove the lower bound on the query complexity of S-EVHM, we make use of  an encoding of a function into a sparse Hermitian matrix. 
We first define a function $\operatorname{ind}$ that maps each pair of a row and a column index of a $2^n\times 2^n$
Hermitian matrix $M$ to an integer in $[2^{2n}]$.
\begin{definition}
\label{def:ind}
For $n \in \mathbb{Z}^+$, we define the index function
\begin{equation}
\label{eq:ind}
\begin{aligned}
   {\rm ind}: [2^n]\times[2^n] &\to \left[2^{2n}\right]
    \\
   (i,j) &\mapsto 2^n\big ((j-i)\bmod{2^n}\big ) + i.
\end{aligned}
\end{equation}
\end{definition}
\noindent Here $\bmod {2^n}:\mathbb{Z} \to [2^n]$ is the modulo operation. For example, for $n=2$, the $2^n\times 2^n$ matrix $B$ defined by $B_{ij} = \text{ind}(i,j)$ takes the form
\begin{equation}
    B = \begin{pmatrix}
    0  &  4  &  8  & 12  \\
    13  &  1  &  5  &  9  \\
    10  &  14  &  2  &  6  \\
    7  &  11  &  15  &  3  
    \end{pmatrix}.
\end{equation}
We prove another lemma before explaining our encoding.
\begin{lemma}
\label{lem:indproperty}
For $n \in \mathbb{Z}^+$ 
\begin{enumerate}
    \item ${\rm ind}$ is invertible.
    \item for $i,j \in [2^n]$ such that $i\ne j$,
\begin{equation}
    {\rm ind}(i,j) + {\rm ind}(j,i) \ge 2^{2n}.
\end{equation}
\end{enumerate}
\end{lemma}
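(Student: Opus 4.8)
The plan is to write $N := 2^n$ and recognize the structural fact driving both parts: the quantity $\mathrm{ind}(i,j) = N q + r$, where $q := (j-i)\bmod N \in [N]$ behaves as a quotient and $r := i \in [N]$ as a remainder. This is exactly the unique two-digit base-$N$ representation of an integer in $[N^2] = [2^{2n}]$, so I expect both claims to reduce to elementary division-with-remainder reasoning. First I would record that $i \ne j$ together with $i,j \in [N]$ forces $0 < |i-j| < N$, hence $j-i \not\equiv 0 \pmod{N}$; this small observation is reused in part 2.

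For invertibility (part 1), since the domain $[2^n]\times[2^n]$ and codomain $[2^{2n}]$ share cardinality $N^2$, it suffices to exhibit a two-sided inverse (injectivity alone would also do). Given $m = \mathrm{ind}(i,j) \in [2^{2n}]$, I recover $r = m \bmod N$ and $q = \lfloor m/N \rfloor$, both in $[N]$, by uniqueness of division with remainder. Then $i = r$, and from $q \equiv j - i \pmod N$ with $j \in [N]$ I recover $j = (i+q)\bmod N$. This map inverts $\mathrm{ind}$, so $\mathrm{ind}$ is a bijection.

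For part 2, I would compute directly
\begin{equation}
\mathrm{ind}(i,j) + \mathrm{ind}(j,i) = N\big(((j-i)\bmod N) + ((i-j)\bmod N)\big) + i + j.
\end{equation}
The key step is the complementarity of the modulo: for any integer $m$ with $m \not\equiv 0 \pmod N$, one has $(m \bmod N) + ((-m)\bmod N) = N$. Applying this with $m = j-i$, which is nonzero mod $N$ by the observation above, collapses the bracketed term to exactly $N$. Hence the sum equals $N^2 + i + j \ge N^2 = 2^{2n}$, using only $i + j \ge 0$, which is the desired inequality.

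The proof is almost entirely routine; the only point requiring a little care is the modulo-complementarity identity and the verification of its hypothesis $j - i \not\equiv 0 \pmod{2^n}$. I expect this non-divisibility check to be the main (and quite mild) obstacle, since it is precisely where the assumption $i \ne j$ is used and where the restriction to residues in $[2^n]$ matters.
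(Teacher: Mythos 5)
Your proposal is correct, and part 2 coincides with the paper's argument: the same expansion
\begin{equation}
{\rm ind}(i,j) + {\rm ind}(j,i) = 2^n\big((j-i)\bmod 2^n + (i-j)\bmod 2^n\big) + i + j,
\end{equation}
followed by the observation that the bracketed sum equals $2^n$ when $i \ne j$; you are somewhat more careful than the paper in isolating the hypothesis ($j-i \not\equiv 0 \pmod{2^n}$, guaranteed by $0 < |i-j| < 2^n$) under which this complementarity identity holds, which is a genuine point of rigor rather than padding. For part 1 your route differs from the paper's: the paper assumes ${\rm ind}(i,j) = {\rm ind}(k,l)$, rearranges to $2^n\left((j-i)\bmod 2^n - (l-k)\bmod 2^n\right) = k-i$, and concludes $i=k$, $j=l$ by noting that a multiple of $2^n$ with absolute value below $2^n$ must vanish (the paper writes the bound as $|k-i| \le 2^n$, though the argument really needs, and has, the strict bound $|k-i| < 2^n$ since $i,k \in [2^n]$). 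You instead read ${\rm ind}(i,j) = 2^n q + r$ with $q = (j-i)\bmod 2^n$ and $r = i$ as the two-digit base-$2^n$ representation and invert by Euclidean division: $r = m \bmod 2^n$, $q = \lfloor m/2^n \rfloor$, then $i = r$ and $j = (i+q)\bmod 2^n$. Both are elementary and rest on the same digit structure, but your version buys two things: it sidesteps the bounding step entirely (uniqueness of quotient and remainder does the work), and it produces an explicit formula for ${\rm ind}^{-1}$, which the paper actually relies on later (the proofs of Lemma~\ref{lem:matrixencodingproperty}, part 3, and Lemma~\ref{lem:oraclegtoA} both invoke the inverse of ${\rm ind}$ without writing it down). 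One small loose end: you promise a two-sided inverse but only verify the left-inverse direction; as you yourself note, injectivity plus equal finite cardinalities already gives bijectivity, so nothing is missing, but the claim of a two-sided inverse would strictly require also checking ${\rm ind}\big(r, (r+q)\bmod 2^n\big) = 2^n q + r$ for arbitrary $m = 2^n q + r$.
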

\begin{proof}
We prove the two statements separately. 
\begin{enumerate}
    \item 
The function ${\rm ind}$ is invertible if ${\rm ind}(i,j) = {\rm ind}(k,l) $ implies $ i=k, j=l$, which is 
equivalent to 
\begin{equation}
    2^n\left((j-i) \bmod{2^n} - (l-k)\bmod{2^n}\right) = k-i.
\end{equation}
Comparing the left- and right-hand sides of this equation together with the constraint $\abs{k-i} \le 2^n$
readily yields $i=k$ and $j=l$
\item By definition of the ${\rm ind}$ function, 
\begin{equation}
    {\rm ind}(i,j) + {\rm ind}(j,i) = 2^n\big ((j-i)\bmod{2^n} + (i-j)\bmod{2^n}\big ) + i+j
\end{equation}
The result follows readily from the observation 
$((j-i)\bmod{2^n} + (i-j)\bmod{2^n}\big ) = 2^n$ for $i \ne j$.
\end{enumerate}\end{proof}

We are now ready to state the encoding of a function into a matrix that we use for our reduction. 
\begin{definition}
\label{def:encodinggtoA}
For $n \in \mathbb{Z}^+$, $d_M \in [2^n]$, $\beta \in \mathbb{R}^+$ and a function 
$g:\left[d_M2^{n-1}\right] \to [0,\beta]$,
the $(n,d_M)$-matrix encoding of $g$ is the $2^n\times 2^n$ matrix $M$ with entries
\begin{equation} \label{eq:A encoding}
\arraycolsep=5pt\def\arraystretch{1.4}
    M_{ij} := \left\{ \begin{array}{lcl} g({\rm ind}(i,i)) & \text{if} & 
    {\rm ind}(i,i) \in \left[d_M2^{n-1}\right] \; \text{and} \;\; i=j\\
    g({\rm ind}(i,j))/2 & \text{if} & {\rm ind}(i,j) \in \left[d_M2^{n-1}\right] \; \text{and} \;\; i \ne j\\
    g({\rm ind}(j,i))/2 & \text{if} & {\rm ind}(j,i) \in \left[d_M2^{n-1}\right] \; \text{and} \;\; i \ne j\\
    0 &  \text{if} & {\rm ind}(i,j),{\rm ind}(j,i)  \notin \left[d_M2^{n-1}\right] \; \text{and} \;\; i \ne j.
    \end{array}\right.
\end{equation}
\end{definition}
Lemma~\ref{lem:indproperty} implies that for any $i \neq j$, both ${\rm ind}(i,j)$ and ${\rm ind}(j,i)$ cannot simultaneously
belong to $\left[d_M2^{n-1}\right]$, so each entry of $M$ is unique.
We now derive relevant properties of this matrix encoding.
\begin{lemma}
\label{lem:matrixencodingproperty}
For $n \in \mathbb{Z}^+$, $d_M \in [2^n]$, $\beta \in \mathbb{R}^+$ and a function 
$g:\left[d_M2^{n-1}\right] \to [0,\beta]$, let $2^n \times 2^n$ $d_M$-sparse matrix $M$ be the
$(n,d_M)$-matrix encoding of $g$. Then
\begin{enumerate}
    \item $M$ is a Hermitian matrix,
    \item $M$ is $d_M$-sparse, and 
    \item the entries of $M$ satisfy
    \begin{equation}
        \sum_{i,j \in [2^{n}]} M_{ij} =\sum_{k \in [d_M2^{n-1}]} g(k).
    \end{equation}
\end{enumerate}
\end{lemma}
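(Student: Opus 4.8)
The plan is to verify the three properties in turn, each resting on the two parts of Lemma~\ref{lem:indproperty}. Throughout I write $S := \{(i,j)\in[2^n]\times[2^n] : {\rm ind}(i,j)\in[d_M2^{n-1}]\}$ for the set of index pairs whose ${\rm ind}$-value lies in the encoded range. The first thing to record is that the four cases in Definition~\ref{def:encodinggtoA} are mutually exclusive and cover every off-diagonal pair: by part~2 of Lemma~\ref{lem:indproperty}, for $i\ne j$ at most one of ${\rm ind}(i,j)$ and ${\rm ind}(j,i)$ lies in $[d_M2^{n-1}]$ (their sum is at least $2^{2n}$, whereas each element of $[d_M2^{n-1}]$ is at most $d_M2^{n-1}-1<2^{2n-1}$), so $M$ is well defined. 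On the diagonal ${\rm ind}(i,i)=i$, so $M_{ii}=g(i)$ when $i<d_M2^{n-1}$ and $M_{ii}=0$ otherwise.

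For Hermiticity, since $g$ takes values in $[0,\beta]\subset\mathbb R$ every entry of $M$ is real, so it suffices to show $M_{ij}=M_{ji}$. For $i=j$ this is immediate. For $i\ne j$ I would compare the case defining $M_{ij}$ with that defining $M_{ji}$: if ${\rm ind}(i,j)\in[d_M2^{n-1}]$ then the second case gives $M_{ij}=g({\rm ind}(i,j))/2$ while the third case gives $M_{ji}=g({\rm ind}(i,j))/2$; the situation with ${\rm ind}(j,i)\in[d_M2^{n-1}]$ is symmetric; and if neither index lies in the range both entries vanish. Hence $M_{ij}=M_{ji}$ in every case and $M$ is Hermitian.

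The sparsity bound is where the real work lies, and I expect it to be the main obstacle. I would organise the count by the offset $r:=(j-i)\bmod 2^n$, since on the positions of a fixed offset $r$ the value ${\rm ind}(i,j)=2^nr+i$ runs over the block $\{2^nr,\dots,2^nr+2^n-1\}$ exactly once as $i$ ranges over $[2^n]$. Consequently $S$ consists of all positions with offset $r\le\lfloor d_M/2\rfloor-1$, together with, when $d_M$ is odd, the positions of offset $r=\lfloor d_M/2\rfloor$ with $i<2^{n-1}$. Each offset meets every row in exactly one position, so the number of nonzeros in a row equals the number of distinct offsets occurring in the symmetrised support $S\cup S^{\mathsf T}$. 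Transposition sends offset $r$ to offset $(2^n-r)\bmod 2^n$, fixing $r=0$, so $S\cup S^{\mathsf T}$ uses the offsets $0,1,\dots,\lfloor d_M/2\rfloor-1$ and their reflections $2^n-1,\dots,2^n-\lfloor d_M/2\rfloor+1$, plus at most the two reflected partial offsets $\lfloor d_M/2\rfloor$ and $2^n-\lfloor d_M/2\rfloor$ in the odd case. Counting these, with offset $0$ being self-paired, gives at most $d_M-1$ offsets when $d_M$ is even and exactly $d_M$ in the tight odd case, hence at most $d_M$ nonzeros per row. The delicate points are to confirm that the reflected offsets are distinct from the originals, which uses $d_M\le 2^n$, and to handle the boundary cases $r=0$ and $d_M=1$, where a reflection coincides with its original.

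Finally, for the entry sum I would use the invertibility of ${\rm ind}$ (part~1 of Lemma~\ref{lem:indproperty}). The diagonal contributes $\sum_{i\in[2^n],\,i<d_M2^{n-1}}g(i)$, which is exactly $\sum_{k\in[d_M2^{n-1}],\,k<2^n}g(k)$. For the off-diagonal part, the second and third cases each supply a term $g(k)/2$: as $(i,j)$ ranges over off-diagonal pairs with ${\rm ind}(i,j)\in[d_M2^{n-1}]$, the value ${\rm ind}(i,j)=k$ hits each off-diagonal $k\in[d_M2^{n-1}]$ once, and the same $k$ is hit a second time through the third case at the transposed position, so these two half-weighted appearances combine to $g(k)$ for each off-diagonal $k$. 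Since ${\rm ind}^{-1}(k)$ is diagonal precisely when $k<2^n$, the diagonal and off-diagonal contributions partition $[d_M2^{n-1}]$, and summing them yields $\sum_{k\in[d_M2^{n-1}]}g(k)$, as claimed.
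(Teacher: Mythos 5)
Your proposal is correct and follows essentially the same route as the paper's proof: Hermiticity by the case analysis enabled by part~2 of Lemma~\ref{lem:indproperty}, sparsity by observing that offsets $(j-i)\bmod 2^n$ exceeding roughly $d_M/2$ fall outside $[d_M2^{n-1}]$ in each of the two directions (with the diagonal shared between them), and the entry sum via invertibility of ${\rm ind}$ together with the half-weight pairing of $M_{ij}$ and $M_{ji}$. Your offset-and-reflection bookkeeping for the sparsity count is a more explicit rendering of the paper's per-row count of at most $\lceil d_M/2\rceil$ columns in each direction, but it is the same underlying idea.
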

\begin{proof}
We provide a separate proof for each of the three statements. 
\begin{enumerate}
    \item First we prove that $M$ is Hermitian. As all entries of $M$ are real, we only need to prove that
    $M_{ij} = M_{ji}$ for $i \ne j$. Suppose $i,j$ are such that ${\rm ind}(i,j) \in \left[d_M2^{n-1}\right]$.
    Then ${\rm ind}(j,i) \notin \left[d_M2^{n-1}\right]$ by Lemma \ref{lem:indproperty}. Therefore by Eq.~\eqref{eq:A encoding},
    $M_{ij} = M_{ji} = g({\rm ind}(i,j))/2$. A similar argument yields $M_{ij} = M_{ji}$ if 
    ${\rm ind}(i,j) \notin \left[d_M2^{n-1}\right]$.  
    \item To prove that $M$ is $d_M$-sparse, we count, for each row $i$, the number of column indices $\{j\}$ 
   for which $M_{ij}$ is non-zero. For an entry $M_{ij}$ to be non-zero, either ${\rm ind}(i,j)$
    or ${\rm ind}(j,i)$ must belong to $[d_M2^{n-1}]$. We first count the number of $\{j\}$ such that ${\rm ind}(i,j) \in [d_M2^{n-1}]$.
    Note that for $(j-i) \mod{2^n} > d_M/2$, we have
    ${\rm ind}(i,j)>d_M2^{n-1}$, therefore there are at most
    $\lceil d_M/2\rceil$ values of $j$ satisfying ${\rm ind}(i,j) \in [d_M2^{n-1}]$ for any $i$. Similar argument can be made
    to show that there are at most $\lceil d_M/2\rceil$ values of $j$ for which ${\rm ind}(j,i) \in [d_M2^{n-1}]$. Considering
    that $j=i$ features in both these lists, we conclude that any row $i$ has at most $d_M$ non-zero entries.
    \item This property is a consequence of the invertibility of ${\rm ind}$ and the Hermiticity of $M$. Every $k \in [d_M2^{n-1}]$
    in the domain of $g$ is mapped to a pair of a row and a column index $(i,j)$ by the inverse of the ${\rm ind}$ function. 
    Furthermore, if $i \ne j$, then $M_{ij} = g(k)/2$ so that $M_{ij} + M_{ji} = g(k)$ by Hermiticity of $M$.
\end{enumerate}
\end{proof}
\noindent The $(n,d_M)$-matrix encoding $M$ of $g$
can be constructed with a single query to the function oracle, as given by the following lemma. 
\begin{lemma}
\label{lem:oraclegtoA}
Given $n$ qubits, a sparsity parameter $d_M \in [2^n]$, a number $\beta_M \in \mathbb{R}^+$ and an oracle $O_g$ encoding a function 
$g:\left[d_M2^{n-1}\right] \to [0,\beta]$, sparse access $(d_M, \beta_M, O_{M_{\rm val}},O_{M_{\rm loc}})$ to the $(n,d_M)$-matrix encoding of $g$ can be constructed with  $O_{M_{\rm val}}$ making a single query to $O_g$ and $O_{M_{\rm loc}}$ making zero queries to $O_g$.
\end{lemma}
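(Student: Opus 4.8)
The plan is to construct the two oracles separately and to exploit that the \emph{locations} of the nonzero entries of the $(n,d_M)$-matrix encoding $M$ (Definition~\ref{def:encodinggtoA}) depend only on $n$, $d_M$, and the index function of Definition~\ref{def:ind}, whereas $g$ enters only through the \emph{values}. This separation is exactly what lets $O_{M_{\rm loc}}$ avoid $O_g$ entirely while letting $O_{M_{\rm val}}$ get by with a single query. The sparsity parameter $d_M$ and the max-norm bound $\beta_M$ simply carry over, the former from Lemma~\ref{lem:matrixencodingproperty} and the latter because every entry of $M$ has absolute value at most $\beta_M$.

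First I would build $O_{M_{\rm loc}}$. Fixing a row $i$, the proof of Lemma~\ref{lem:matrixencodingproperty} shows that $M_{ij}\neq 0$ exactly when ${\rm ind}(i,j)\in[d_M2^{n-1}]$ or ${\rm ind}(j,i)\in[d_M2^{n-1}]$, and that these conditions amount to $(j-i)\bmod 2^n$ or $(i-j)\bmod 2^n$ being at most $\lceil d_M/2\rceil$. Hence the nonzero columns of row $i$ form two short circulant bands, an upper band with $j=(i+t)\bmod 2^n$ and a lower band with $j=(i-s)\bmod 2^n$, meeting at the diagonal $j=i$. I would give an explicit arithmetic rule sending $l\in[d_M]$ to the $l$th nonzero column: the first values of $l$ sweep the upper band in increasing $t$ and the remaining values sweep the lower band in increasing $s$, with the diagonal counted once and the endpoints governed by $\lceil d_M/2\rceil$. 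Since this rule reads off only $i$, $l$, $d_M$ and the arithmetic of ${\rm ind}$, it is a query-free in-place reversible permutation of the location register, so $O_{M_{\rm loc}}$ makes zero calls to $O_g$.

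Next I would build $O_{M_{\rm val}}$, which must act as $\ket{i}\ket{j}\ket{z}\mapsto\ket{i}\ket{j}\ket{z\oplus M_{ij}}$. Because ${\rm ind}$ is invertible and, for $i\neq j$, at most one of ${\rm ind}(i,j),{\rm ind}(j,i)$ lies in $[d_M2^{n-1}]$ (Lemma~\ref{lem:indproperty}), I would first compute reversibly, and without any query, a single fetch index $m$ together with a short flag recording which case of Definition~\ref{def:encodinggtoA} applies, i.e.\ factor $1$ on the diagonal, factor $1/2$ off the diagonal, or the zero case. A single controlled query $\con O_g$, applied conditioned on the ``nonzero'' flag, then loads $g(m)$; by the standing convention that $\con O_g$ counts as a single query to $O_g$, this is the only call. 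The remaining subtlety is the factor $1/2$: rather than load $g(m)$ into scratch and rescale, which would force a second query to clear the scratch, I would route the output wires of the query, conditioned on the flag, so that $O_g$ XORs $g(m)$ directly into the value register in the diagonal case and into the value register offset by one bit position in the off-diagonal case, thereby depositing $g(m)/2$ to the stated $r$-bit precision. The routing and the flag are query-free and are uncomputed afterwards, yielding a clean value oracle with a single query to $O_g$.

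I expect the main obstacle to be the bookkeeping in $O_{M_{\rm loc}}$: pinning down the enumeration of the two bands and the shared diagonal at the boundary set by $\lceil d_M/2\rceil$, and checking that the resulting map extends to a genuine bijection of $[2^n]$ so that the in-place oracle is unitary. The single-query implementation of $O_{M_{\rm val}}$ is the more delicate conceptual point, but once the output wires are routed the factor-$1/2$ scaling becomes ordinary query-free reversible arithmetic on the number representation, so no second query is incurred.
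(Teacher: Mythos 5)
Your construction is correct and follows essentially the same route as the paper's proof: both build $O_{M_{\rm loc}}$ from a purely arithmetic, query-free formula for the positions of the nonzero band around the diagonal, and both build $O_{M_{\rm val}}$ by reversibly computing the fetch index (using the invertibility of ${\rm ind}$ and Lemma~\ref{lem:indproperty} to decide which of ${\rm ind}(i,j)$, ${\rm ind}(j,i)$ is in range), making one query to $O_g$, and uncomputing. In fact you are more careful than the paper on the one genuinely delicate point. The paper's proof says only ``compute ${\rm ind}(i,j)$, query $O_g$, uncompute ${\rm ind}(i,j)$,'' which as stated yields $g({\rm ind}(i,j))$ rather than the required off-diagonal entry $M_{ij}=g({\rm ind}(i,j))/2$; the naive fix of loading $g$ into scratch and rescaling would cost a second query to erase the scratch, exactly as you observe. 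Your bit-shift routing---XOR-ing the oracle output into the value register offset by one binary place, conditioned on the off-diagonal flag---deposits $g(m)/2$ with a single controlled query and is consistent with the paper's number-representation conventions (at worst one truncated low-order bit, within the allowed $\mathcal{O}(2^{-r})$ relative error). Likewise, for $O_{M_{\rm loc}}$ the paper simply asserts the closed form $(i+l-\lfloor d_M/2\rfloor)\bmod 2^n$, silently ignoring boundary effects (some rows can have fewer than $d_M$ nonzeros, so a padding convention is needed for the in-place map to be a genuine permutation of $[2^n]$); the bookkeeping you flag as the remaining obstacle is exactly this, and it is real but routine. So: same approach, with your version closing two small gaps that the paper leaves implicit.
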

\begin{proof} The oracle for $M_{\rm val}$  
can be constructed using the explicit formula  
\eqref{eq:A encoding} for the entries of $M$. 
This construction requires first computing ${\rm ind}(i,j)$, then computing $g({\rm ind}(i,j))$
by making one query to $O_g$,
and finally uncomputing ${\rm ind}(i,j)$. The function ${\rm ind}(i,j)$ can be computed
using the explicit formula \eqref{eq:ind} and requires no queries to $O_g$.

To construct an oracle for $M_{\rm loc}$, it suffices to provide an explicit formula for 
computing $M_{\rm loc}$. Recall that
\begin{equation}
\begin{aligned}
    M_{\rm loc}:[2^n]\times [d_M] &\to [2^n]\\(i,l)&\mapsto M_{\rm loc}(i,l),
\end{aligned}
\end{equation}
where $M_{\rm loc}(i,l)$ is the $l$th non-zero entry in the $i$th row of $M$.
From the proof of the second statement of Lemma~\ref{lem:matrixencodingproperty}, we deduce that $M_{\rm loc}$ is given by 
$M_{\rm loc}(j,l) = i+l-\lfloor d_M/2 \rfloor \bmod{2^n}$. 
\end{proof}

The function to matrix encoding in Def.\ \ref{def:encodinggtoA} ensures that the mean of a function $g$
can be inferred from the expectation value of the $(n,d_M)$-matrix encoding of $g$ with respect to the equal superposition state, as stated in the following lemma. 
\begin{lemma}
\label{lem:meangtoA}
Let $n \in \mathbb{Z}^+$, $d_M \in [2^n]$, $\beta \in \mathbb{R}^+$, 
$g:[d_M2^{n-1}] \to [0,\beta]$ and let $M$ be the $(n,d_M)$-matrix encoding of~$g$. Then the mean $\mu_g$ of $g$ satisfies
\begin{equation}
    \mu_g = \frac{2}{d_M}\braket{+^{n} | M | +^{n}},
\end{equation}
where $\ket{+^{n}} = H^{n}\ket{0^n}$.
\end{lemma}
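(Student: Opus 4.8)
The plan is to compute $\braket{+^n|M|+^n}$ directly and relate it to the sum of all entries of $M$, which the earlier machinery already connects to $\mu_g$. The starting observation, stated in the reduction discussion preceding this lemma, is that the expectation value of any matrix with respect to the uniform superposition equals the average of all its entries. I would begin by writing $\ket{+^n} = \frac{1}{\sqrt{2^n}}\sum_{i \in [2^n]}\ket{i}$ and expanding
\begin{equation}
\braket{+^n|M|+^n} = \frac{1}{2^n}\sum_{i,j \in [2^n]} M_{ij}.
\end{equation}
This reduces the claim to evaluating the double sum $\sum_{i,j} M_{ij}$.

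The key step is then to invoke the third statement of Lemma~\ref{lem:matrixencodingproperty}, which already establishes that $\sum_{i,j \in [2^n]} M_{ij} = \sum_{k \in [d_M 2^{n-1}]} g(k)$. Substituting this identity gives
\begin{equation}
\braket{+^n|M|+^n} = \frac{1}{2^n}\sum_{k \in [d_M2^{n-1}]} g(k).
\end{equation}
The remaining task is purely arithmetic bookkeeping: the mean $\mu_g$ is defined in Problem~\ref{pro:SAM} as the sum over the domain divided by the domain size, here $\mu_g = \left(\sum_{k \in [d_M2^{n-1}]} g(k)\right)/(d_M 2^{n-1})$. Solving this for the sum and substituting yields $\sum_k g(k) = d_M 2^{n-1}\mu_g$, so that $\braket{+^n|M|+^n} = \frac{1}{2^n} \cdot d_M 2^{n-1}\mu_g = \frac{d_M}{2}\mu_g$. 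Rearranging gives exactly $\mu_g = \frac{2}{d_M}\braket{+^n|M|+^n}$, as claimed.

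I expect no genuine obstacle here, since all the substantive content has been front-loaded into Lemma~\ref{lem:matrixencodingproperty}. The only point requiring mild care is confirming that the domain of $g$ has size exactly $d_M 2^{n-1}$, so that the normalization factor in the definition of $\mu_g$ matches; this is immediate from the domain $\left[d_M 2^{n-1}\right]$ specified in the lemma hypotheses. The proof is therefore a short chain of substitutions rather than a calculation with any real difficulty.
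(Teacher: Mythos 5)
Your proof is correct and follows the same route as the paper's own proof: expand $\braket{+^n|M|+^n}$ as the normalized sum of all entries of $M$, invoke the third statement of Lemma~\ref{lem:matrixencodingproperty} to replace that sum with $\sum_{k} g(k)$, and finish with the arithmetic relating the sum to $\mu_g$ via the domain size $d_M 2^{n-1}$. No gaps; the two arguments are essentially identical.
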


\begin{proof}
By Lemma~\ref{lem:matrixencodingproperty}, $\sum_{i,j \in [2^n]}M_{ij} = \sum_{i\in \left[d_M2^{n-1}\right]}g(i)$. Therefore,
\begin{align}
    \braket{+^{n} | M | +^{n}} &= \frac{1}{2^n}\sum_{i,j \in [2^{n}]} M_{ij} 
    \nonumber\\&= \frac{1}{2^n}\sum_{i \in [d_M2^{n}]} g(i)
    \nonumber\\ &= \frac{d_M2^{n-1}}{2^n}\mu_g \\&= \frac{d_M}{2}\mu_g,
\end{align}
where the second equality follows from Lemma~\ref{lem:matrixencodingproperty}.
\end{proof}

We now give our reduction from SAM to S-EVHM.
\begin{lemma}\label{lem:EVFSO alg implies SAM alg}
Given an algorithm $\mathcal Q$ that solves the problem S-EVHM on any instance 
\begin{equation} \label{eq:EVFSO input for lower bound proof}
(n, 1/\epsilon, (d_{M}, \beta_{M}, O_{M_{ \rm val}},O_{M_{ \rm loc}}), H^{\otimes n})
\end{equation}
by making $K$ queries to the sparse-access oracles encoding $M$,
an algorithm $\mathcal R$ can be constructed that solves the SAM instance with input  
\begin{equation} \label{eq:SAM input for lower bound proof}
(d_{M}2^{n-1}, d_{M}/2\epsilon, \beta_{M}, O_g),
\end{equation}
by making $K$ queries to $O_g$.
\end{lemma}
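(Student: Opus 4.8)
The plan is to build $\mathcal R$ by composing the function-to-matrix encoding of Lemma~\ref{lem:oraclegtoA} with a single call to $\mathcal Q$ and then rescaling the output via the mean identity of Lemma~\ref{lem:meangtoA}. Given the SAM input $(d_M2^{n-1}, d_M/2\epsilon, \beta_M, O_g)$, the function $g:[d_M2^{n-1}]\to[0,\beta_M]$ has exactly the domain and codomain required to form the $(n,d_M)$-matrix encoding $M$ of Definition~\ref{def:encodinggtoA} with $\beta=\beta_M$. First I would apply Lemma~\ref{lem:oraclegtoA} to construct sparse access $(d_M,\beta_M,O_{M_{\rm val}},O_{M_{\rm loc}})$ to $M$, where $O_{M_{\rm val}}$ makes one query to $O_g$ and $O_{M_{\rm loc}}$ makes none. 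This sparse access, together with $V=H^{\otimes n}$ and accuracy parameter $1/\epsilon$, is precisely the S-EVHM instance on which $\mathcal Q$ is assumed to succeed, so $\mathcal R$ invokes $\mathcal Q$ once to obtain an estimate $\widetilde E$.

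Next I would verify correctness. Since $H^{\otimes n}$ is self-inverse, the quantity $\mathcal Q$ approximates is $\braket{0^n|(H^{\otimes n})^\dagger M H^{\otimes n}|0^n}=\braket{+^n|M|+^n}$, and with probability at least $2/3$ the estimate obeys $|\widetilde E-\braket{+^n|M|+^n}|<\epsilon$. Lemma~\ref{lem:meangtoA} gives $\mu_g=(2/d_M)\braket{+^n|M|+^n}$, so setting $\tilde\mu_g:=(2/d_M)\widetilde E$ yields $|\tilde\mu_g-\mu_g|<2\epsilon/d_M$, which is exactly the accuracy $\epsilon'=2\epsilon/d_M$ demanded by the SAM input (whose accuracy parameter is $1/\epsilon'=d_M/2\epsilon$). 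To meet the SAM requirement that the returned value lie in $[0,\beta_M]$, I would have $\mathcal R$ output the projection of $(2/d_M)\widetilde E$ onto $[0,\beta_M]$; since the true mean $\mu_g\in[0,\beta_M]$, this projection cannot increase the error, so the bound $|\tilde\mu_g-\mu_g|<\epsilon'$ and the success probability $2/3$ are both preserved.

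Finally I would account for queries: each query $\mathcal Q$ makes to $O_{M_{\rm val}}$ (and, by the paper's convention on controlled and inverse oracles, each of its controlled or inverse forms) consumes exactly one query to $O_g$, whereas each query to $O_{M_{\rm loc}}$ consumes none, so $\mathcal R$ makes at most $K$ queries to $O_g$ in total. I do not expect a genuine conceptual obstacle here; the work is almost entirely bookkeeping, and the part demanding the most care is the chain of parameter identifications — matching the factor $2/d_M$ produced by Lemma~\ref{lem:meangtoA} against the stated accuracy $1/\epsilon'=d_M/2\epsilon$, and confirming that $\beta_M$ is a legitimate max-norm bound for $M$ (its diagonal entries are $g(\cdot)\le\beta_M$ and its off-diagonal entries are $g(\cdot)/2\le\beta_M/2$) so that the constructed S-EVHM instance is valid and $\mathcal Q$'s guarantee indeed applies.
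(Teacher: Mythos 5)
Your proof is correct and takes essentially the same route as the paper's own proof: encode $g$ into the $(n,d_M)$-matrix encoding $M$ via Lemma~\ref{lem:oraclegtoA}, make a single call to $\mathcal Q$ with $V=H^{\otimes n}$ and accuracy parameter $1/\epsilon$, and rescale the estimate by $2/d_M$ using Lemma~\ref{lem:meangtoA}. Your two additional checks --- projecting the returned value onto $[0,\beta_M]$ so the SAM output constraint is met, and verifying that $\beta_M$ is a valid bound on $\norm{M}_{\mathrm{max}}$ so the constructed S-EVHM instance is legitimate --- are details the paper leaves implicit, and they only strengthen the argument.
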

\begin{proof}
We show how to construct an algorithm $\mathcal R$ as described in the Lemma.  
Let $M$ be the $(n,d_M)$-matrix encoding of $g$, which is given by Def. \ref{def:encodinggtoA} and has sparse-access oracles given in  Lemma \ref{lem:oraclegtoA}. Let $\mathcal R$ be the algorithm that queries $\mathcal Q$ with input given by Eq.~\ref{eq:EVFSO input for lower bound proof} to compute $\tilde{u}$ and 
returns $2\tilde{u}/d_{M}$. 
By Lemma~\ref{lem:meangtoA},
$\braket{+^{n} | M | +^{n}} = d_{M}\mu_g/2$.
Then 
$\left|\tilde{u} - d_{M}\mu_g/2\right| < \epsilon$
with probability at least $2/3$,
and so
$\left|2\tilde{u}/d_{M} - \mu_g \right| <  2\epsilon/d_{M}$ 
with probability $2/3$.
Therefore, $\mathcal R$ solves the SAM instance defined by the inputs in Eq.~\eqref{eq:SAM input for lower bound proof}.
Also, $\mathcal R$ makes $K$ queries to $O_g$ as $\mathcal Q$ makes~$K$ queries to oracles encoding $M$, and each query to $M$
requires only one query to $O_g$ by Lemma~\ref{lem:oraclegtoA}.
\end{proof}

Using this reduction, we derive the following query-complexity lower bound. 
\begin{lemma}

\label{thm:SLE2queryB}
Given $n$ qubits, sparse access $(d_M, \beta_M, O_{M_{\rm val}},O_{M_{\rm loc}})$  to a $2^n \times 2^n$ Hermitian matrix $M$, an accuracy $\epsilon \in (d_M\beta/2^n,d_M\beta)$, and an $n$-qubit unitary black box $V$,
any quantum algorithm that returns with probability at least $2/3$ an $\epsilon$-additive approximation 
of $\braket{0|V^\dagger M V|0}$ for $V = H^{\otimes n}$ makes $\Omega(d_M\beta/\epsilon)$ queries $U_M$. 
\end{lemma}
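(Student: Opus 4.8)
The plan is to derive this lower bound for S-EVHM (restricted to $V = H^{\otimes n}$) by composing the reduction of Lemma~\ref{lem:EVFSO alg implies SAM alg} with the SAM lower bound of Lemma~\ref{lem:querySAM}, in direct analogy to how Lemma~\ref{lem:querySAM} was itself obtained from Corollary~\ref{lem:queryAM}. Concretely, I would argue by contradiction: assume some algorithm $\mathcal Q$ solves the stated problem using a number of queries to the oracles encoding $M$ that is \emph{not} in $\Omega(d_M\beta_M/\epsilon)$, and then transport $\mathcal Q$ through the reduction to produce a SAM algorithm that violates Lemma~\ref{lem:querySAM}.

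First I would fix the constants $\eta_1, C_1 \in \mathbb{R}^+$ guaranteed by Lemma~\ref{lem:querySAM}, so that every algorithm solving SAM on an instance $(N, 1/\epsilon', \beta', O_g)$ with $N, \beta', 1/\epsilon' > \eta_1$ makes at least $C_1 \beta'/\epsilon'$ queries to $O_g$. Negating the multivariate $\Omega$ of Definition~\ref{def: multivar big O} for the query count of $\mathcal Q$, with hidden constant $C_1/2$, then yields an S-EVHM instance whose parameters $n, d_M, \beta_M, 1/\epsilon$ each exceed any prescribed threshold and on which $\mathcal Q$ uses fewer than $(C_1/2)\, d_M\beta_M/\epsilon$ queries. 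For this instance I would take $M$ to be the $(n,d_M)$-matrix encoding (Definition~\ref{def:encodinggtoA}) of a function $g:[d_M 2^{n-1}]\to[0,\beta_M]$; Lemma~\ref{lem:matrixencodingproperty} certifies that $M$ is Hermitian and $d_M$-sparse with $\beta_M \ge \|M\|_{\max}$, hence a legitimate S-EVHM input, while Lemma~\ref{lem:oraclegtoA} supplies its sparse-access oracles using a single query to $O_g$.

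Next I would feed this instance into Lemma~\ref{lem:EVFSO alg implies SAM alg} to obtain an algorithm $\mathcal R$ that solves the SAM instance $(d_M 2^{n-1},\, d_M/(2\epsilon),\, \beta_M,\, O_g)$ using the same number of queries as $\mathcal Q$. The decisive bookkeeping is that the induced SAM accuracy is $\epsilon_{\mathrm{SAM}} = 2\epsilon/d_M$ at scale $\beta_M$, so Lemma~\ref{lem:querySAM} forces $\mathcal R$ to make at least $C_1\beta_M/\epsilon_{\mathrm{SAM}} = (C_1/2)\, d_M\beta_M/\epsilon$ queries, contradicting the bound from the previous paragraph. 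Before invoking the SAM bound I must verify its hypotheses: that $\epsilon_{\mathrm{SAM}}$ lies in the admissible window $[\beta_M/(d_M 2^n),\beta_M)$, which follows from the assumed range $\epsilon \in (d_M\beta_M/2^n, d_M\beta_M)$ after accounting for the factor $2/d_M$ (only the small-$\epsilon$ end is relevant asymptotically), and that the SAM parameters $d_M 2^{n-1}$, $\beta_M$, and $d_M/(2\epsilon)$ all exceed $\eta_1$, which is arranged by choosing the threshold in the negation step large enough.

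The step I expect to demand the most care is precisely the propagation of the scaling factor $d_M/2$ through the three-variable asymptotics: I must ensure that the instance selected by the negated $\Omega(d_M\beta_M/\epsilon)$ statement induces SAM parameters that simultaneously clear $\eta_1$, and that $C_1\beta_M/\epsilon_{\mathrm{SAM}}$ collapses exactly to $(C_1/2)\, d_M\beta_M/\epsilon$ so that the two bounds genuinely conflict. The arithmetic is elementary, but matching hidden constants and admissible ranges on both sides of the reduction is where an off-by-a-factor slip would hide. Everything else---Hermiticity, sparsity, the max-norm bound, and the single-query oracle construction---is already packaged into Lemmas~\ref{lem:matrixencodingproperty} and~\ref{lem:oraclegtoA}, so no further computation is needed there.
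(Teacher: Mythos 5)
Your proposal is correct and follows essentially the same route as the paper: a proof by contradiction that fixes the constants from Lemma~\ref{lem:querySAM}, negates the multivariate $\Omega$ to extract a cheap instance, pushes it through the reduction of Lemma~\ref{lem:EVFSO alg implies SAM alg}, and observes that the rescaling $\epsilon \mapsto 2\epsilon/d_M$, $\beta \mapsto \beta_M$ makes the two query bounds collide exactly as $C\beta_M/\epsilon_{\mathrm{SAM}} = (C/2)\,d_M\beta_M/\epsilon$. Your additional check that the induced SAM accuracy lies in the admissible window $[\beta_M/(d_M 2^n),\beta_M)$ is a point the paper's proof silently skips, and your observation that a large enough threshold forces $\epsilon \ll d_M\beta_M$ handles it correctly.
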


\begin{proof}
From Lemma \ref{lem:querySAM}, there exist constants $C_0, \eta_0 \in \mathbb{R}^+$
such that any quantum algorithm solving the SAM problem with input 
$(N, 1/\epsilon, \beta, O_g)$, where $N,1/\epsilon, \beta > \eta_0$,
makes at least $\lfloor C_0 \beta/\epsilon \rfloor$ queries to $O_g$. We proceed by contradiction.
Assume Lemma~\ref{thm:SLE2queryB} is false, then there exists a quantum algorithm $\mathcal{Q}$ which for some input 
\begin{equation}
    (n',  (d'_M, \beta'_M, O_{M_{\rm val}},O_{M_{ \rm loc}}), 1/\epsilon')
\end{equation}
with  $n', d'_M,1/\epsilon',\beta'_M > 2\eta_0$,
returns an $\epsilon$-additive approximation of $\braket{0|V^\dagger M V|0}$ using a number of queries to oracles encoding $M$ fewer than $\lfloor C_0d'_M\beta'_M/2\epsilon' \rfloor$.
Let $N'' := d'_M2^{n'-1}, 1/\epsilon'' := d'_{M}/2\epsilon_0$ and $\beta'' := \beta'_{M}$.
Then by Lemma~\ref{lem:EVFSO alg implies SAM alg} there exists an algorithm $\mathcal R$ solving SAM on input 
\begin{equation}
    (N'' ,1/\epsilon'', \beta'', O_g),
\end{equation}
which makes fewer than $\lfloor C_0d'_M\beta'_M/2\epsilon' \rfloor= \lfloor C_0\beta''/\epsilon'' \rfloor$ queries  to $O_g$. 
Using $n', d'_M,1/\epsilon',\beta'_M > 2\eta_0$ we have $N'', 1/\epsilon'', \beta'' > \eta_0$. Therefore, the existence of $\mathcal R$ contradicts the first statement of the proof.
\end{proof}

Lemma \ref{thm:SLE2queryB} addresses S-EVHM under the restriction $V = H^{\otimes n}$. This lemma implies the following lower bound for S-EVHM. 
\begin{coro} \label{col:LB for S-EVHM}
Any algorithm that solves S-EVHM makes $\Omega(d_M \beta_M/\epsilon)$ queries to sparse-access oracles for $M$. 
\end{coro}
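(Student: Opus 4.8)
The plan is to observe that S-EVHM (Problem~\ref{pro:S-EVHM}) contains, as a special case, exactly the restricted problem whose query complexity is bounded below in Lemma~\ref{thm:SLE2queryB}. Lemma~\ref{thm:SLE2queryB} pins the input unitary to $V = H^{\otimes n}$, whereas S-EVHM accepts an arbitrary $n$-qubit unitary black box $V$. Since $H^{\otimes n}$ is one admissible choice of $V$, any algorithm solving S-EVHM for every valid input must in particular solve those instances with $V = H^{\otimes n}$, and on these instances it returns an $\epsilon$-additive approximation of precisely the quantity $\braket{0^n|V^\dagger M V|0^n}$ to which Lemma~\ref{thm:SLE2queryB} applies.

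Concretely, I would phrase this as a restriction (special-casing) reduction so that the query count is explicit. Suppose $\mathcal A$ is any algorithm solving S-EVHM. From $\mathcal A$ construct an algorithm $\mathcal A'$ for the restricted problem of Lemma~\ref{thm:SLE2queryB}: given sparse access $(d_M,\beta_M, O_{M_{\rm val}}, O_{M_{\rm loc}})$ and accuracy $\epsilon$, the algorithm $\mathcal A'$ supplies the black box $V = H^{\otimes n}$, which it implements directly with $n$ Hadamard gates at no cost in queries to the oracles encoding $M$, and then runs $\mathcal A$ on the resulting S-EVHM instance. Because $H^{\otimes n}$ is a valid input to S-EVHM, $\mathcal A$ returns an $\epsilon$-additive approximation of $\braket{0^n|(H^{\otimes n})^\dagger M H^{\otimes n}|0^n}$ with probability at least $2/3$, so $\mathcal A'$ solves the restricted problem. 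Crucially, $\mathcal A'$ makes exactly as many queries to the oracles encoding $M$ as $\mathcal A$ does, since preparing $V = H^{\otimes n}$ requires no such queries.

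Applying Lemma~\ref{thm:SLE2queryB} to $\mathcal A'$ then forces $\mathcal A'$, and hence $\mathcal A$, to make $\Omega(d_M\beta_M/\epsilon)$ queries to the sparse-access oracles for $M$, which is the claimed bound. The only bookkeeping point is the slight mismatch in the admissible accuracy range---$\epsilon \in [d_M\beta_M/2^n, d_M\beta_M]$ in Problem~\ref{pro:S-EVHM} versus the open interval in Lemma~\ref{thm:SLE2queryB}---but this is immaterial for an $\Omega$ statement, which concerns only the asymptotic regime in which all parameters are large. I expect no genuine obstacle here: the substantive work already resides in Lemma~\ref{thm:SLE2queryB} and its chain of reductions back to the approximate-mean lower bound of Corollary~\ref{lem:queryAM}, and this corollary merely removes the artificial restriction $V = H^{\otimes n}$ by noting that a lower bound established for a single hard choice of $V$ is inherited by any algorithm that must handle all $V$.
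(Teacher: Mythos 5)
Your proposal is correct and matches the paper's reasoning: the paper derives this corollary in one line by observing that Lemma~\ref{thm:SLE2queryB} already handles the hard instances with $V = H^{\otimes n}$, and any algorithm for S-EVHM must in particular solve those instances, so the lower bound carries over. Your explicit restriction reduction (and the remark that the accuracy-interval mismatch is immaterial for an asymptotic lower bound) simply spells out what the paper leaves implicit.
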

\noindent We next use Corollary \ref{col:LB for S-EVHM} to prove a similar query-complexity lower bound for B-EVHM. 
\begin{thm}
\label{thm:queryB-EVHM}
Any algorithm that solves B-EVHM makes $\Omega(\alpha_M/\epsilon)$ queries to the block encoding of $M$.
\end{thm}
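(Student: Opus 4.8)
The plan is to prove this bound by reducing S-EVHM to B-EVHM and transferring the bound $\Omega(d_M\beta_M/\epsilon)$ from Corollary~\ref{col:LB for S-EVHM}. The engine of the reduction is Lemma~\ref{lem:block encoding sparse access}, which turns sparse access to a Hermitian matrix $M$ into block access with scale factor $\alpha_M = d_M\beta_M$ at a cost of only $\mathcal{O}(1)$ queries to the sparse-access oracles per application of the resulting block encoding $U_M$. Since B-EVHM and S-EVHM both request an $\epsilon$-additive approximation of the \emph{same} quantity $\braket{0^n|V^\dagger M V|0^n}$ with the \emph{same} ancillary unitary $V$, and the constructed encoding represents $M$ with scale $\alpha_M = d_M\beta_M$, the accuracy $\epsilon$ transfers with no rescaling; moreover the admissible ranges $\epsilon\in[d_M\beta_M/2^n, d_M\beta_M]$ and $\epsilon\in[\alpha_M/2^n,\alpha_M]$ coincide under $\alpha_M = d_M\beta_M$.

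Concretely, I would argue by contradiction in the style of Lemmas~\ref{lem:querySAM} and~\ref{thm:SLE2queryB}. Assuming B-EVHM has query complexity not in $\Omega(\alpha_M/\epsilon)$, there is an algorithm $\mathcal{A}$ that, on some B-EVHM instance whose parameters all exceed the threshold from Corollary~\ref{col:LB for S-EVHM}, uses too few queries to $U_M$. Given an S-EVHM instance, I would assemble block access $(d_M\beta_M, a_M, 0, U_M)$ from its sparse-access oracles via Lemma~\ref{lem:block encoding sparse access}, run $\mathcal{A}$ on the induced B-EVHM instance, and return $\mathcal{A}$'s output. Because each query $\mathcal{A}$ makes to $U_M$ expands into at most $c'=\mathcal{O}(1)$ queries to the sparse-access oracles, the composite algorithm solves S-EVHM using at most $c'$ times $\mathcal{A}$'s query count; tuning the constant in the negated B-EVHM bound below $C_0/c'$ then pushes this count under the floor $\lfloor C_0 d_M\beta_M/\epsilon\rfloor$ certified by Corollary~\ref{col:LB for S-EVHM}, the desired contradiction, whence the claimed $\Omega(\alpha_M/\epsilon)$ follows.

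The step I expect to be the main obstacle is reconciling the block-encoding error parameter: Problem~\ref{pro:B-EVHM} demands an \emph{exact} encoding (error $0$), whereas Lemma~\ref{lem:block encoding sparse access} nominally yields an encoding with positive error $\delta$. The resolution I would present is that $\delta$ stems solely from decomposing the state-preparation unitary $T$, and the reflection $R_T = 2TT^\dagger - \mathds{1}$ built from it, into a finite gate set; indeed the sole $\delta$-dependence in Lemma~\ref{lem:block encoding sparse access} is the benign $\log^{2.5}(d_M\beta_M/\delta)$ term in the \emph{gate} count, not in the query count. In the query model, where one counts only oracle calls and leaves all intervening unitaries unrestricted, $T$ is realizable exactly with $\mathcal{O}(1)$ oracle queries, so $U_M = T^\dagger U_{\rm swap} R_T T$ is an exact $(d_M\beta_M, a_M, 0)$ block encoding of $M$---precisely the object B-EVHM consumes. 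Once this identification is in place, the reduction and the contradiction go through verbatim.
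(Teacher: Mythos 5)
Your proposal is correct and takes essentially the same route as the paper: reduce S-EVHM to B-EVHM via the sparse-to-block conversion of Lemma~\ref{lem:block encoding sparse access} and transfer the $\Omega(d_M\beta_M/\epsilon)$ bound of Corollary~\ref{col:LB for S-EVHM} by a contradiction with explicit constants, where the paper additionally pins down the factorization $d'_M = \eta_0+1$, $\beta'_M = \alpha'_M/(\eta_0+1)$ so that both sparse parameters individually clear the threshold while $d'_M\beta'_M=\alpha'_M$. The only substantive difference is the treatment of the encoding-error parameter: you argue the constructed block encoding is exact in the query model (which is sound, since the only inexactness in Lemma~\ref{lem:block encoding sparse access} comes from gate synthesis of $T$, not from oracle queries), whereas the paper keeps the error $\epsilon'/2$, runs the B-EVHM algorithm at accuracy $\epsilon'/2$ on the matrix exactly encoded by $U_M$, and absorbs the discrepancy via a triangle inequality.
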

\begin{proof}
From Theorem \ref{thm:SLE2queryB}, there exist constants $\eta_0, C_0 \in \mathbb{Z}^+$ 
such that any quantum algorithm solving the problem S-EVHM on input 
$(n,(d_M,\beta_M, O_{M_{\rm val}},O_{M_{\rm loc}}),1/\epsilon,V)$, satisfying $n,d_M,\beta_M,1/\epsilon > \eta_0$   
makes at least $\lfloor C_0d_M\beta_M/\epsilon \rfloor$ queries to $M$.
Our proof is  by contradiction. Assume that a quantum algorithm $\mathcal Q$ solves 
B-EVHM with queries to $U_M$ 
not in $\Omega(\alpha_M/\epsilon)$. 
Then there exists an input $(n', (\alpha'_M, a'_{M}, 0, U_{M}), 2/\epsilon', V)$, 
with $n', \alpha'_{M}, a'_M, 1/\epsilon' > (\eta_0+1)^2$ and $a'_M \geq 2$, for which $\mathcal{Q}$ solves B-EVHM using fewer than 
$\lfloor C_0 \alpha'_{M}/\epsilon' \rfloor$ 
queries to $U_M$. For 
\begin{equation} d'_{M} = \eta_0+1\text{ and }\beta'_M = \alpha'_{M}/(\eta_0 +1), \end{equation} and 
following Lemma \ref{lem:block encoding sparse access}, block access $(d'_{M} \beta'_M,a'_M,\epsilon'/2, U_M)$ can be constructed such that $U_M$ 
makes a single query to sparse-access oracles for $M$.
Applying $\mathcal Q$ on the instance defined by input $(n', (d'_M \beta'_{M},\alpha'_M,0, U_M), 2/\epsilon',V)$ gives a quantum algorithm that solves S-EVHM with input
$(n', (d'_{M}, \beta'_{M},  O_{M_{\rm val}},O_{M_{\rm loc}} ), 1/\epsilon',V)$ by making 
fewer than $\lfloor C_0d'_M\beta'_M/\epsilon'\rfloor = \lfloor C_0 \alpha_M/\epsilon_0 \rfloor $
queries to $M$. As $n', d'_{M}, \beta'_M, 1/\epsilon' > \eta_0$ in the input to this instance of S-EVHM, 
we have a contradiction to the first statement of the proof.
\end{proof}

Theorem~\ref{thm:queryB-EVHM} establishes a query complexity lower bound for the problem B-EVHM, in which the matrix $M$ is given via block encoding, and the state is generated by a unitary black box~$V$. 
In Appendix \ref{sec:appendix}, 
we show that the lower bound in Theorem~\ref{thm:queryB-EVHM} also applies for the problem of estimating 
the expectation value even if $V$ is any fixed 
unitary operator and not a black box. In particular for the restriction $V=\mathds{1}_n$, the output of this problem
is an estimate  of $\braket{0^n|M|0^n}$. Interestingly, the restriction of S-EVHM
to $V=\mathds{1}_n$ can be solved by making just one query to the sparse oracles for~$M$.

As B-EVHM is a restriction of B-SLEP to $A=\mathds{1}_n$, 
the lower bound in Theorem~\ref{thm:queryB-EVHM} immediately yields a lower bound for B-SLEP.
\begin{thm}
\label{coro:SLE2queryB}
Any quantum algorithm that solves B-SLEP must make $\Omega(\alpha_M /\epsilon)$
queries to $U_M$.
\end{thm}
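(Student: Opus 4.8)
The plan is to prove the theorem by a single reduction: I will show that B-EVHM (Problem~\ref{pro:B-EVHM}) coincides with the special case of B-SLEP (Problem~\ref{pro:SLE1}) in which $A=\mathds{1}_n$, and then transfer the query-complexity lower bound already established for B-EVHM in Theorem~\ref{thm:queryB-EVHM}. Suppose $\mathcal Q$ is any algorithm solving B-SLEP. From $\mathcal Q$ I would build an algorithm $\mathcal R$ for B-EVHM as follows. Given an arbitrary B-EVHM instance $(n,(\alpha_M,a_M,0,U_M),\epsilon,V)$, the algorithm $\mathcal R$ assembles the B-SLEP instance that inherits $n$, the block access $(\alpha_M,a_M,0,U_M)$ to $M$, and the accuracy $\epsilon$ unchanged, sets the state-preparation black box to $U_{\bm{b}}:=V$, sets the condition number to $\kappa=1$, and supplies $A=\mathds{1}_n$ through a fixed trivial block access, for instance $(\alpha_A,a_A,\delta_A,U_A)=(1,1,0,\mathds{1}_{n+1})$. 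Then $\mathcal R$ runs $\mathcal Q$ on this instance and returns its output verbatim.

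The key identity to verify is that fixing $A=\mathds{1}_n$ collapses the solution vector to $\bm{x}=A^{-1}\bm{b}=\bm{b}$, where $\bm{b}$ has entries $b_i=\braket{i|U_{\bm{b}}|0^n}=\braket{i|V|0^n}$ and is therefore the unit vector $V\ket{0^n}$. Consequently the quantity returned by B-SLEP becomes
\begin{equation}
\bm{x}^\dagger M \bm{x} = \bra{0^n}V^\dagger M V\ket{0^n},
\end{equation}
which is exactly the target of B-EVHM. Hence an $\epsilon$-additive approximation produced by $\mathcal Q$ is simultaneously an $\epsilon$-additive approximation for the B-EVHM instance, and $\mathcal R$ inherits the success probability at least $2/3$.

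Before concluding I would check that the constructed B-SLEP instance is admissible and that the query accounting is clean. The accuracy window $\epsilon\in[\alpha_M/2^n,\alpha_M]$ is identical in both problems, so it transfers without modification; the condition-number requirement $\norm{A^{-1}}\alpha_A\le\kappa$ holds because $\norm{\mathds{1}_n^{-1}}\cdot 1 = 1 \le 1 = \kappa$; and $(1,1,0,\mathds{1}_{n+1})$ is a valid $(\alpha_A,a_A,\delta_A)$-block encoding of $\mathds{1}_n$. Crucially, both the block access to $A=\mathds{1}_n$ and the black box $V=U_{\bm{b}}$ are chosen independently of $M$ and make no calls to $U_M$, so every query $\mathcal R$ issues to $U_M$ is exactly one query that $\mathcal Q$ issues to $U_M$ on this instance. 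Applying Theorem~\ref{thm:queryB-EVHM}, which guarantees that any B-EVHM algorithm makes $\Omega(\alpha_M/\epsilon)$ queries to $U_M$, forces $\mathcal Q$ to make $\Omega(\alpha_M/\epsilon)$ queries to $U_M$ on this subfamily of B-SLEP inputs, and hence on B-SLEP in general. I do not anticipate a genuine technical obstacle here: the entire content is the observation that setting $A=\mathds{1}_n$ identifies the two problems, and the only care required is the bookkeeping that ensures the auxiliary inputs $U_A$, $\kappa$, and $V$ are both valid and query-free in $U_M$.
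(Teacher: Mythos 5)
Your proposal is correct and takes essentially the same route as the paper: the paper's proof is exactly the observation that B-EVHM is B-SLEP restricted to $A=\mathds{1}_n$, so the $\Omega(\alpha_M/\epsilon)$ bound of Theorem~\ref{thm:queryB-EVHM} transfers immediately. Your extra bookkeeping (the trivial $(1,1,0,\mathds{1}_{n+1})$ block encoding of the identity, $\kappa=1$, $U_{\bm b}:=V$, and the check that no queries to $U_M$ are hidden in the auxiliary inputs) just makes explicit what the paper states as an immediate consequence.
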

\noindent This theorem concludes our derivation of lower bound for B-SLEP.

\subsection{Tightness of  bound}
\label{sec:algorithm}
\label{subsubsec:B-SLEP}

In this section, we construct quantum algorithms for solving B-SLEP that achieves the query-complexity
lower bounds derived in the previous section. 
Our algorithm for B-SLEP follows the approach outlined in \S\ref{sec:approachreduction}. 
We use the algorithm for B-EVHM discussed in \S\ref{sec:backgroundbevhm} 
as a subroutine to design an algorithm for B-SLEP in~\S\ref{subsubsec:B-SLEP}. 
We first show how to construct a block encoding of 
$A^{-1}$ using techniques discussed in \S\ref{sec:backgroundlowerbound}. Our algorithm for B-SLEP 
makes one query to our algorithm for B-EVHM (Algorithm~\ref{alg:AB-EVHM}) and queries the block encoding of
$A^{-1}$ as an input to this single query.

First we combine Theorem \ref{thm:P(H)} and Corollary \ref{thm:poly approx of 1/x} to show how block access to $A^{-1}$ can be constructed using block access to $A$. A more general result in this direction is given in Ref.~\cite{CGJ19}. We provide a different 
derivation in our setting and report complexities for queries to block encoding of $A$ and 2-qubit gates separately.
\begin{coro} \label{coro:A-1}
Given $n$ qubits, an accuracy $\epsilon \in (0,2]$, a $\kappa \ge 1$, and block access $(\alpha_A, a_A, 0, U_A)$ to an invertible $2^n\times 2^n$ Hermitian matrix $A$ such that $\norm{A^{-1}}\alpha_A \le  \kappa$, then block access $(8\kappa/3, a_A+2, \epsilon, U_{A^{-1}})$ to  $A^{-1}$ can be constructed such that $U_{A^{-1}}$ makes
$\mathcal  O\left(\kappa\log(\kappa /\epsilon)\right)$
queries to $U_A$  and
$\mathcal O\left(\kappa a_A\log(\kappa /\epsilon)\right)$
additional 2-qubit gates.  

\end{coro}
\begin{proof}
Invoking Corollary \ref{thm:poly approx of 1/x} with the theorem's $\delta$ parameter equal to $1/\kappa$ and the theorem's $\epsilon$ parameter equal to $3\epsilon/(8\kappa)$,
there exists a polynomial $\mathcal P$ with degree in $\mathcal O(\kappa\log(\kappa/\epsilon))$ such that for all 
\begin{equation}
   x \in  [-1,1] \setminus \left[-1/\kappa, 1/\kappa\right],
\end{equation}
$|\mathcal P(x)/2| \leq 1/2$ and 
\begin{equation}
    \left| \frac{\mathcal P(x)}{2} - \frac{3}{8\kappa } \frac{1}{x}\right| < \frac{3\epsilon}{16\kappa}.
\end{equation} 
As the spectrum of $A/\alpha_A$ lies in $[1,-1] \setminus \left[-1/\kappa, 1/\kappa \right]$, 
each eigenvector $\ket{\lambda}$ of $A$ satisfies
\begin{equation}
   \abs{\Big\langle \lambda \Big|\left (\frac{3}{8\kappa} A^{-1} - \frac{\mathcal P(A/\alpha_A)} 2\right)\Big|\lambda \Big \rangle} \leq \frac{ 3\epsilon}{16\kappa},
\end{equation}
and consequently
\begin{equation}
\label{eq:blockA^{-1}}
    \left\|\frac{3}{8\kappa } A^{-1} -\frac{ \mathcal P(A/\alpha_A)}{2}\right \| \leq \frac{ 3\epsilon}{16\kappa}. 
\end{equation}
Next, Theorem \ref{thm:P(H)} with $\sigma = 3\epsilon/16\kappa$ guarantees that a $(1, a_A+2, 3\epsilon/16\kappa )$ 
block encoding of $\mathcal P(A/\alpha_A)/2$ can be constructed using $\mathcal  O\left( \kappa \log(\kappa /\epsilon)\right)$
queries to $U_A$  and $\mathcal   O\left(\kappa a_A\log(\kappa /\epsilon)\right)$
additional 2-qubit gates. From Eq.~\eqref{eq:blockA^{-1}}, this block encoding of $\mathcal P(A/\alpha_A)/2$ also block encodes
$A^{-1}$ with the prefactor $8\kappa/3$ and error bounded by
\begin{equation}
    (8\kappa/3)(3\epsilon/16\kappa + 3\epsilon/16\kappa) = \epsilon.
\end{equation}
Therefore, the block encoding of $\mathcal P(A/\alpha_A)/2$ given by Theorem \ref{thm:P(H)} is the required block access
$(8\kappa/3, a_A+2,\epsilon, U_{A^{-1}})$  to $A^{-1}$.
\end{proof}

We next prove that block access to $M$ can be used to construct block access to $M^{(m)}:= M \otimes \ket{0^m} \bra{0^m}$
for any given $m \in \mathbb{Z}^+$.
\begin{lemma}\label{lem:block encoding c-H}
Given $n$-qubits, block access $(\alpha_M,a_M,0,U_M)$ to a $2^n \times 2^n$ Hermitian matrix $M$, and given an $m \in \mathbb{Z}^+$, 
then block access $(\alpha_{M},a_M+1,0,U_{M^{(m)}})$ to $M^{(m)}$  can be constructed such that $U_{M^{(m)}}$ queries $U_M$ once and 
employs $\mathcal O(m^2)$ 
additional 2-qubit gates. 
\end{lemma}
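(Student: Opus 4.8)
The plan is to build $U_{M^{(m)}}$ as a tensor product of the given $U_M$ with a small circuit $W$ that exactly block-encodes the rank-one projector $\ket{0^m}\bra{0^m}$ using a single fresh ancilla qubit. The guiding observation is that $M^{(m)}/\alpha_M = (M/\alpha_M)\otimes\ket{0^m}\bra{0^m}$ and that $\|M^{(m)}\| = \|M\|\,\|\ket{0^m}\bra{0^m}\| \le \alpha_M$, so $\alpha_M$ remains a valid subnormalization and no rescaling is required.

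First I would introduce an operator $W$ acting on one new ancilla qubit $A$ together with the $m$ ``projector'' qubits, defined by $W\ket{0}_{A}\ket{s} = \ket{0}_{A}\ket{s}$ when $s=0^m$ and $W\ket{0}_{A}\ket{s} = \ket{1}_{A}\ket{s}$ otherwise; that is, $W$ flips $A$ precisely when the $m$ qubits are not all zero. A direct computation then yields $\bra{0}_{A} W \ket{0}_{A} = \ket{0^m}\bra{0^m}$, so $W$ is a $(1,1,0)$-block encoding of the projector. Concretely $W$ is a multiply-controlled NOT (flag the ancilla unless the register is $\ket{0^m}$), which, using only the single added ancilla and no borrowed clean ancillas, can be compiled with $\mathcal{O}(m^2)$ two-qubit gates via the standard decomposition of multiply-controlled single-qubit gates.

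Next I would set $U_{M^{(m)}} := U_M \otimes W$, with $U_M$ acting on its $a_M$ ancillas and the $n$ system qubits and $W$ acting on the new ancilla $A$ and the $m$ projector qubits (disjoint registers), after which I reindex so that the $a_M+1$ ancilla qubits are grouped ahead of the $n+m$ system qubits to match the block-encoding convention. Since $\delta_M = 0$ we have $\bra{0^{a_M}} U_M \ket{0^{a_M}} = M/\alpha_M$ exactly, and because the two factors act on disjoint registers the $(a_M+1)$-ancilla block factorizes as $\bra{0^{a_M}}\bra{0}_{A}(U_M\otimes W)\ket{0^{a_M}}\ket{0}_{A} = (M/\alpha_M)\otimes\ket{0^m}\bra{0^m} = M^{(m)}/\alpha_M$. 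This shows $U_{M^{(m)}}$ is an $(\alpha_M, a_M+1, 0)$-block encoding, uses exactly one query to $U_M$, and incurs only the $\mathcal{O}(m^2)$ extra gates of $W$.

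The only real subtlety, and the step I would treat most carefully, is the gate count for $W$: because only the single added ancilla is available, the multiply-controlled NOT must be compiled without borrowing clean ancillas, which is what fixes the $\mathcal{O}(m^2)$ (rather than $\mathcal{O}(m)$) two-qubit-gate cost. I would also confirm that $A$ serves as a genuine block-encoding ancilla rather than scratch space, namely that it is flagged and then projected back onto $\ket{0}$, so that the encoding is exact with $\delta=0$, matching the claimed $(\alpha_M, a_M+1, 0, U_{M^{(m)}})$ block access.
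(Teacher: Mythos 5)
Your construction is correct and is essentially the paper's own proof: the paper likewise builds $U_{M^{(m)}}$ by tensoring $U_M$ with a one-fresh-ancilla gadget (a $0^m$-controlled NOT followed by an $X$ on the ancilla, i.e.\ exactly your $W$ that flags the ancilla unless the $m$ qubits are all zero), verifies the block factorizes to $(M/\alpha_M)\otimes\ket{0^m}\bra{0^m}$, and charges $\mathcal{O}(m^2)$ two-qubit gates for the ancilla-free decomposition of the multiply-controlled NOT. Your explicit note that the $\mathcal{O}(m^2)$ (rather than $\mathcal{O}(m)$) count comes from compiling without borrowed clean ancillas is a slightly more careful statement of the same cost argument the paper makes.
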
 
\begin{proof}
We show that the desired block encoding of~$M^{(m)}$ is given by the $(1 + a_M + n + m)$-qubit circuit 
\begin{center}
$U_{M^{(m)}} =$ \begin{quantikz}
 &\qw &\targ{}& \gate{X} & \qw &\\
& \qw \qwbundle{a_M}& \qw &\gate[wires = 2][2cm]{U_M} &\qw &\\
 & \qw \qwbundle{n} &\qw &  &\qw &\\
 &\qw \qwbundle{m}& \octrl{-3}&\qw &\qw &
\end{quantikz}.
\end{center}
The CNOT gate in this circuit has $m$ control qubits, and only flips the target qubit if the control qubits are in the state $\ket{0^m}$. For any computational basis states $\ket{0^{a_M+1},s_1,0^m}$ and 
  $\ket{0^{a_M+1},s_2,0^m}$, where $s_1$ and $s_2$ are
  $n$-bit strings, 
\begin{equation}
\label{eq:M^(m)1}
   \bra{0^{a_M+1},s_1,0^m}U_{M^{(m)}}\ket{0^{a_M+1},s_2,0^m} = M_{s_1s_2}/\alpha_M
\end{equation}
is true. Furthermore, for any states $\ket{0^{a_M+1},s_1,r_1}$ and $\ket{0^{a_M+1},s_2,r_2}$, where $r_1$ and $r_2$ are $m$-bit strings such that at least one of $r_1$ and $r_2$ is not equal to $0^m$, 
\begin{equation}
\label{eq:M^(m)2}
   \bra{0^{a_M+1},s_1,r_1}U_{M^{(m)}}\ket{0^{a_M+1},s_2,r_2} = 0.
\end{equation}
Equations~\eqref{eq:M^(m)1} and \eqref{eq:M^(m)2} together imply
\begin{equation}
    \bra{0^{a_M+1}}U_{M^{(m)}}\ket{0^{a_M+1}} =   (M \otimes \ket{0^m}\bra{0^m})/\alpha_M,
\end{equation}
as required. 
The required number of 2-qubit gates scales as $\mathcal{O}(m^2)$, which 
follows from the fact that
the $0^{m}$-CNOT gate can be implemented using $\mathcal O(m)$ 2-qubit gates
and the Toffoli gate~\cite{NC}. 
\end{proof}

\begin{thm} 
\label{thm:SLE--$f$}
A quantum algorithm can be constructed that takes $n$ qubits, 
a $\kappa \ge 1$,
block access $(\alpha_A,a_A,0, U_A)$ to a $2^n\times 2^n$ invertible Hermitian matrix $A$ 
such that $\norm{A^{-1}}\alpha_A \le  \kappa$, 
block access $(\alpha_M,a_M,0, U_M)$ to a $2^n\times 2^n$ Hermitian matrix $M$, 
an accuracy $\epsilon \in [\alpha_M/2^n,\alpha_M]$,
and an $n$-qubit unitary black box $U_{\bm b}$, and returns with probability at least $2/3$ an $\epsilon$-additive approximation to $\bm x^\dagger M \bm x$, 
where $\bm{x} := A^{-1}{\bm{b}}$ and $\bm{b}$ is the $2^n$-dimensional vector with entries $b_i =\braket{i| U_{\bm b}|0}$, by making
$
\mathcal O(\alpha_M\kappa^2/\epsilon)
$
queries to $U_M$, $
 \mathcal O\left(\alpha_M  \kappa^3 \log\left(\alpha_M\kappa^2/\epsilon\right)/\epsilon\right)$
queries to  $U_A$, $
\mathcal O(\alpha_M\kappa^2/\epsilon)
$ queries to $U_{\bm b}$, and  
\begin{equation}\label{eq:final 2 qubit complexity}
    \mathcal O\left(\frac{\alpha_M \kappa^2  }{\epsilon} \left( n + a_M + a_A \kappa\log\left(\frac{ \alpha_M \kappa^2}{\epsilon}\right) + a_A \right)\right),
 \end{equation} 
additional 2-qubit gates. 
\end{thm}

\begin{proof}
We describe an algorithm (Alg.\ \ref{alg:AB-SLEP}) for B-SLEP, and  prove the correctness 
of this algorithm. This algorithm makes reference to subroutines described in other work.
We end the proof by analyzing the algorithm's complexity.
The algorithm is as follows.
\begin{algorithm}[H]
\begin{algorithmic}[1]
\caption{Algorithm for Block-access System of Linear Equations Problem \label{alg:AB-SLEP}}
\Require{
$(n,\kappa,(\alpha_A,a_A,0,U_A),(\alpha_M,a_M,0,U_M),\epsilon,U_{\bm b})$ }
\Ensure{
With probability at least $2/3$ an $\epsilon$-additive approximation to $\bm x^\dagger M \bm x$, 
where $\bm{x} := A^{-1}{\bm{b}}$ and $\bm{b}$ is the vector with entries $b_i = \braket{i| U_{\bm b}|0}$
}

\hspace{-1.3cm}\ \textbf{Procedure:}
    \State
    \label{line:gamma}
$
        \gamma = \begin{cases} \alpha_M\text{ if }\alpha_M \geq1\\
           \sqrt{\alpha_M}\text{ if }\alpha_M < 1.
           \end{cases}
  $
    \State Construct  block access $(8\kappa/3, a_A+2, \epsilon/8\gamma\kappa, U_{A^{-1}} )$ to $A^{-1}$.
    \Comment{Use Corollary \ref{coro:A-1} }
    \State  Construct block access $(\alpha_{M}, a_{M}+1, 0, U_{ M^{(a_A+2)}})$  
    to $M^{(a_A+2)}:= \ket{0^{a_A+2}}\bra{0^{a_A+2}} \otimes M$.
    \Comment{Use Lemma~\ref{lem:block encoding c-H}}
    \State Compute $\tilde r$ by executing the algorithm for B-EVHM with input \Comment{Use Algorithm \ref{alg:AB-EVHM}}
    $$\left(n +a_A+2,\ (\alpha_M,  a_M+1, 0,  U_{ M^{(a_A+2)}}),\  (\epsilon/2)(8\kappa/3)^{-2},\  
    \left( U_{A^{-1}}\right)(\mathds{1}_{{a_A+2}}\otimes U_{\bm b})\right).$$
    \State \Return $\tilde{u}=(8\kappa/3)^2\tilde{r}$.
\end{algorithmic}
\end{algorithm}
\begin{proof}[Correctness]
Let 
\begin{align}
\label{eq:final SLEP output}
   r := \bra{0^{n +a_A+2}}\left(\mathds{1}_{a_A+2}\otimes U_{\bm b}^\dagger\right) \left(U_{A^{-1}}^\dagger \right)  
    \Big(\ket{0^{a_A+2}}\bra{0^{a_A+2}} \otimes M\Big )  \Big( U_{A^{-1}}\Big)\Big(\mathds{1}_{{a_A+2 }}\otimes U_{\bm b}\Big)\ket{0^{n +a_A+2}}.
\end{align}
The output of B-EVHM (Algorithm \ref{alg:AB-EVHM}) satisfies
\begin{equation}
    |r - \tilde{r}| \le (\epsilon/2)(8\kappa/3)^{-2},
\end{equation}
with probability at least $2/3$.
Let $u := \bra{\bm x}  M  \ket{\bm x} $, where $\ket {\bm x}$ is the solution state to~$A\bm x = \bm b$, which can be expressed as 
\begin{align}
    u = \bra{\bm b}  {A^{-1}}^\dagger M A^{-1}  \ket{\bm b}
    = \bra{0^{n}}U_{\bm b}^\dagger {A^{-1}}^\dagger M A^{-1}U_{\bm b}\ket{0^{n}}.
\end{align}
Define $\tilde{E} :=  (8\kappa/3)\bra{0^{a_A + 2}}  U_{A^{-1}} \ket{0^{a_A + 2}}$. Then, by the definition of block encoding (Def.~\ref{def:block encoding}), we have
\begin{equation}
    { \left\| A^{-1}  - \tilde{E} \right\| \leq \epsilon/8\gamma\kappa}.
\end{equation}
Furthermore, this bound gives $
     \norm{\tilde{E}\ket{\bm b} - A^{-1}\ket{\bm b}} \nonumber 
    \le \epsilon/8\gamma\kappa$.
Analyzing Eq.~\eqref{eq:final SLEP output} yields $ (8\kappa/3)^2 r = \bra{\bm b} \tilde E^\dagger M \tilde E\ket{\bm b}$.    
Therefore,
\begin{align}
  \left|u - (8\kappa/3)^2r\right| 
   &= \left| \bra{\bm b}{A^{-1}}^\dagger  M A^{-1}\ket{\bm b} - \bra{\bm b}  \tilde E^\dagger M \tilde E\ket{ {\bm b}}\right| \nonumber \\
   & \le \norm{M} \left( \left|\|\tilde E\ket{ {\bm b}}\|^2 - \|A^{-1}\ket{\bm b}\|^2 \right| \right)  \nonumber\\
   & \le \|M\| \left( \left|\|\tilde E\ket{ {\bm b}}\| - \|A^{-1}\ket{\bm b}\| \right| \right)   \left( \|\tilde E\ket{ {\bm b}}\| + \|A^{-1}\ket{\bm b}\| \right)  \nonumber \\
   & \le \alpha_M \frac{\epsilon}{8\gamma \kappa}  \left(\left(\frac{\epsilon}{8\gamma \kappa} +\kappa\right)+\kappa\right)  \nonumber\\
   & \le   \frac{\alpha_M}{\gamma^2}\frac{\epsilon^2}{64  \kappa^2}   + \frac{\alpha_M}{\gamma}\frac{\epsilon}{4}    \nonumber \\
   & \le \epsilon/2,
\end{align}
where the inequalities $\alpha_M/\gamma, \alpha_M/\gamma^2 \leq 1$ are used in the last step.
By the triangle inequality, 
\begin{align}
    |\tilde{u}-u| &\le |\tilde{u} - (8\kappa/3)^2r| + |u - (8\kappa /3)^2r | \nonumber \\
    & =  |(8\kappa /3)^2\tilde{r} - (8\kappa /3)^2r| + |u - (8\kappa /3)^2r| \nonumber \\
    & \le \epsilon/2 + \epsilon/2 = \epsilon,
\end{align}
with probability at least $2/3$.
\renewcommand{\qedsymbol}{}
\end{proof}

\begin{proof}[Complexity]
{
We consider the instances with $\alpha_M \geq 1$, which are sufficient to determine big-O scaling
for queries and 2-qubit gates used by our algorithm. For these instances, $\gamma = \alpha_M $ by the first line of Algorithm~\ref{alg:AB-SLEP}. 
{ Using the complexities from Theorem \ref{thm:B-EVHM}, the single call made to Algorithm~\ref{alg:AB-EVHM} makes
\begin{equation}
\mathcal    O(\alpha_M2(8\kappa/3)^{2}/\epsilon) =\mathcal O(\alpha_M\kappa^2/\epsilon),
\end{equation}
queries to $U_{A^{-1}}(\mathds{1}_{{a_A+2}}\otimes U_{\bm b})$ and $U_{M^{(a_A+2)}}$.
Therefore, using the complexity of $U_{A^{-1}}$ given by Corollary~\ref{coro:A-1}, the algorithm makes $\mathcal O(\alpha_M\kappa^2/\epsilon)$ 
queries to each of $U_{\bm b}$, $U_M$ and $U_{A^{-1}}$.
Each query to $U_{A^{-1}}$ comprises 
\begin{equation}
   \mathcal O\left({\kappa}\log(8\alpha_M\kappa^2/\epsilon)\right) = \mathcal  O(\kappa\log(\alpha_M\kappa^2/\epsilon)),
\end{equation}
queries to $U_A$.} Thus, our algorithm makes
$\mathcal O\left(\alpha_M  \kappa^3\log\left(\alpha_M\kappa^2/\epsilon\right)/\epsilon\right)$
queries to $U_A$. 
 
Now we count the additional 2-qubit gates used by our algorithm. 
We separately count 2-qubit gates used directly by Algorithm~\ref{alg:AB-EVHM}, 
2-qubit gates used by queries to $U_{A^{-1}}(\mathds{1}_{{a_A+2}}\otimes U_{\bm b}))$, and 2-qubit gates used by queries to $U_{M^{(a_A+2)}}$. 
We then combine these three contributions to yield the total number of 2-qubit gates used. 
First, the number of 2-qubit gates used directly by Algorithm~\ref{alg:AB-EVHM} is 
 \begin{align} \label{eq:2qg EVHM}
     \mathcal O\left(\alpha_M \kappa^2(n + a_A+2 + a_M)/\epsilon\right) =
     \mathcal O\left(\alpha_M \kappa^2(n + a_A + a_M)/\epsilon\right).
\end{align}
Next, $U_{A^{-1}}(\mathds{1}_{{a_A+2}}\otimes U_{\bm b})$ is queried
$\mathcal O(\alpha_M\kappa^2/\epsilon)$ times by B-EVHM. By Corollary~\ref{coro:A-1},  $U_{A^{-1}}$ uses
\begin{equation}
\mathcal O\left(\kappa a_A\log( 8\alpha_M\kappa^2/\epsilon)\right) \in 
\mathcal O\left(\kappa a_A\log( \alpha_M \kappa^2/\epsilon)\right)
\end{equation}
additional 2-qubit gates.
Consequently,
  \begin{equation} \label{eq:2qg U_A}
    \mathcal  O\left(\alpha_M  \kappa^3 a_A\log\left(\alpha_M\kappa^2/\epsilon\right) /\epsilon\right)
 \end{equation}
2-qubit gates are used throughout all queries to $U_{A^{-1}}(\mathds{1}_{{(a_A+2)}}\otimes U_{\bm b}))$. 
Finally, we count the number of 2-qubit gates used in queries to $U_{M^{(a_A+2)}}$. 
By Lemma~\ref{lem:block encoding c-H}, each query to the oracle $U_{M^{(a_A+2)}}$ requires $\mathcal O(a_A)$ gates, and this oracle is queried $\mathcal O(\kappa^2\alpha_M/\epsilon)$ times by B-EVHM, which yields a total of
\begin{equation} \label{eq: 2qg U}
    \mathcal O\left(a_A \alpha_M \kappa^2 /\epsilon\right)
\end{equation}
additional 2-qubit gates used throughout calls to $U_{M^{(a_A+2)}}$. 
Combining these three contributions  (\ref{eq:2qg EVHM}--\ref{eq: 2qg U})
yields the total number of additional 2-qubit gates given by Eq.~(\ref{eq:final 2 qubit complexity}).}
\end{proof}
\renewcommand{\qedsymbol}{}
\end{proof}

{In comparison with algorithms for QLSP \cite{CKS17}, which have linear dependence on $\kappa$, our algorithm has cubic dependence on $\kappa$. This additional dependence on $\kappa$ comes from the added challenge of solving B-SLEP over QLSP. In particular, the proportionality in the encoding of $\bm x$ into the quantum state $\ket {\bm x}$ is linearly dependent on $\kappa$. Thus, the proportionality of $\bra {\bm x} M \ket {\bm x}$ is quadratically dependent on $\kappa$, which increases the accuracy to which $\ket {\bm x}$ must be computed. 
}
We end this section by noting that the complexity of Algorithm~\ref{alg:AB-SLEP} has no dependence on 
$\alpha_A$.
This is merely a consequence
of the inequality
$\kappa \ge \alpha_A\norm{A^{-1}}$. Therefore,
a block encoding of $A$ with large $\alpha_A$ in the input is always accompanied by a proportionally large value of $\kappa$,
thereby indirectly increasing the query- and 2-qubit gate cost of our algorithm.

\section{Discussion}
\label{sec:discussion}
We proved two main results in this paper. First, for the formulation of SLEP in which $M$ is provided via block access, namely B-SLEP,
we established a lower bound on queries to $M$. Second, we constructed a quantum algorithm 
for solving B-SLEP that saturates this lower bound, thereby proving that
the lower bound is tight. We now discuss the implications of these two results.

We begin by comparing our lower bound with two previously known hardness results 
for restrictions of SLEP, as discussed in \S\ref{sec:backgroundlowerbound}.
The first of these results, presented in Theorem \ref{lem:HHLlowerbound2}~\cite{HHL09},
rules out the existence of a quantum algorithm solving SLEP with $\polylog(N,1/\epsilon)$ scaling of the total run-time. 
This result relies on no assumptions about how the input $M$ is given, but instead requires the complexity-theoretic conjecture
$\textbf{BQP} \neq \textbf{PP}$. Our derivation of the lower bound does not make any complexity-theoretic assumption, 
but instead relies on the assumption that block access to $M$ is given. Furthermore, our lower bound establishes a more stringent constraint $\Omega(1/\epsilon)$ on the scaling in~$\epsilon$ of queries to $M$, 
thereby ruling out a
sub-linear scaling in $\epsilon$. 
As state-of-the-art algorithms for generation of $\ket{\bm{x}}$ run in $\polylog(N,1/\epsilon)$ time,
our result shows that the expectation-value-estimation step is exponentially harder
with respect to scaling in $\epsilon$ than the $\ket{\bm{x}}$-generation step.

The second previously known hardness result, namely Theorem~\ref{lem:HHLlowerbound}, states that if $A$ is given by oracles,
then no quantum algorithm for SLEP 
with $\poly(\log N,\kappa)$ dependence can have query complexity in $\mathcal{O}(1/\epsilon)$.
This result only applies to those cases in which $A$ is 
given by oracles, and does not apply, for instance, if $A$ is fixed or if $A$ is given by a description of a quantum circuit. 
In contrast, we establish our lower bound for B-SLEP by first deriving a lower bound for the case in which $A = \mathds{1}_n$ is fixed. 
Therefore, our result shows that linear scaling on $1/\epsilon$ is optimal if either of the inputs $A$ and $M$ is given by oracles.
SLEP for fixed $A$ arises, for instance, in solving Poisson equation using the finite-element method with a 
fixed choice of basis function~\cite{MP2016}.

En route to deriving our lower bound for queries
to B-SLEP, we derive a lower bound for S-EVHM (Problem~\ref{pro:S-EVHM}), which is the problem of estimating the 
expectation value of a sparse-access Hermitian matrix $M$ with respect to the state generated by a given 
unitary black box. 
Our lower bound for S-EVHM is $\Omega(d_M\beta_M/\epsilon)$, where
$d_M$ is the sparsity of $M$ and $\beta_M$ is an upper bound on the max-norm of $M$. 
As S-EVHM is equivalent to the restriction of SLEP
to the case in which $A=\mathds{1}_n$ is fixed and access to $M$ is given by sparse-access oracles,
the lower bound for S-EVHM also applies to SLEP in the sparse-access setting. 
We also derive a lower bound for B-EVHM (Problem~\ref{pro:B-EVHM}),
which is the problem of estimating the 
expectation value of a Hermitian matrix $M$ provided by block access with respect to the state generated by a given 
unitary black box. Our lower bound for B-EVHM is $\Omega(\alpha_M/\epsilon)$,
where $\alpha_M$ is the scaling factor
for the block encoding of $M$. 
As $d_M\beta_M \ge \norm{M}$ and $\alpha_M \geq \|M\|$,
a common $\Omega(\norm{M}/\epsilon)$ 
lower bound holds for estimating the expectation value of a black-box Hermitian matrix $M$
irrespective of the oracle encoding of~$M$.
Our lower bound complements the lower bound on queries to $V$ given by Corollary 4 in Ref.~\cite{HWM0W2021}.
Interestingly, whereas our lower bound for B-EVHM holds under the restriction $V = \mathds{1}_n$, 
our lower bound for S-EVHM is violated under the same restriction.
As previous works did not consider the formulation of SLEP in which block-access to $M$
is given, the optimal scaling with respect to $\alpha_M$ was not known. 
Our lower bound for B-SLEP, which follows from our lower bound for B-EVHM, establishes a linear scaling in $\alpha_M$.

To close, we discuss the implications of our algorithm for B-SLEP. Most importantly, the query complexity of our
algorithm proves that our lower bound for queries to $M$ for B-SLEP is tight up to a constant prefactor. Our algorithm 
for B-SLEP builds on the algorithm for B-EVHM in Ref.~\cite{Ral2020}. 
Our algorithm for B-SLEP also achieves
linear scaling with respect to $\alpha_M$ for queries to $M$, which is optimal, and saturates the constraint on queries to $A$ given by Theorem~\ref{lem:HHLlowerbound} up to logarithmic factors. 
Our algorithm for B-SLEP can be used to compute the expectation value of a smooth function of a Hermitian matrix 
with respect to the solution vector $\bm{x}$ 
by combining our algorithm with existing quantum algorithms for matrix arithmetic \cite{LC19,CGJ19,CSTWZ2019}. 
Finally, our algorithm for S-EVHM shows that our lower bound for the same problem is tight. This algorithm can be 
used as a subroutine along with Lemma~\ref{lem:block encoding sparse access} to construct an algorithm 
for SLEP in the sparse-access setting with optimal queries to $M$, although we do not detail this algorithm in our paper.

\section{Conclusion}
\label{sec:conclusion}
Systems of linear equations arise in nearly all areas of science and engineering. 
The HHL algorithm and its improvements generate a quantum encoding of the solution vector $\bm x$ to a system of
$N$ linear equations $A\bm x = \bm b$
in cost $\polylog(N)$~\cite{HHL09}. This logarithmic dependence on~$N$ can be translated 
into efficient solutions for practical problems
if the value $\bm x^\dagger M \bm x$ can be estimated for a given $M$ in cost $\polylog(N)$. 
However, the cost of this estimation can be prohibitively expensive~\cite{Aaronson2015}. 
To address this caveat, we determine the complexity of the System of Linear Equations Problem (SLEP), 
in which the task is to estimate $\bm x^\dagger M \bm x$ to an additive accuracy $\epsilon$ 
given $A$, $M$, and $\bm b$. We consider the setting in which $M$ is given as a 
block encoding~\cite{Gilyn2019}. Block encoding is a common method 
of specifying matrix input to computational problems,
and it also allows derivation of lower bounds on the query cost.

We analyze the cost of solving SLEP in terms of queries to the block encoding of $M$.
Our main result is that any quantum algorithm for solving SLEP has a cost that 
is of order $\alpha_M/\epsilon$ if $M$ is provided as a block encoding, where $\alpha_M$ is the proportionality factor of the block encoding. 
As $\log(\alpha_M/\epsilon)+1$ digits are sufficient to provide an estimate  
to additive accuracy $\epsilon$, our results imply that
the cost of solving SLEP scales exponentially with the number of digits to which $\bm x^\dagger M \bm x$ 
is estimated. 
We also prove that our bound on the cost is tight 
with respect to $\epsilon$ by constructing a quantum algorithm that saturates our bound.
This bound for SLEP can be used in conjuction with previous hardness results, which
bound the queries to the encoding of~$A$ with respect to~$\epsilon$~\cite{HHL09}. 
To prove our lower bound for SLEP, we first prove a lower bound  
for the problem B-EVHM, in which the task is to estimate the expectation value of a block-encoded matrix $M$ 
 with respect to a given quantum state. To derive a lower bound for B-EVHM, we leverage a known lower bound 
for estimating the mean of a black-box function~\cite{NW99}.
Our algorithm for SLEP relies on a known algorithm for B-EVHM~\cite{Ral2020}. 

Our results rigorously prove that if $M$ is provided by a block encoding and if~$\alpha_M/\epsilon$ is super-polylogarithmic in~$N$, 
then the cost of any quantum algorithm for solving SLEP is also super-polylogarithmic in~$N$. Consequently, quantum algorithms 
that employ a block encoding of $M$
show promise only for those applications for which an accuracy with $\alpha_M/\epsilon \in \polylog(N)$ is sufficient. 
If the scaling of $\alpha_M/\epsilon$ with respect to $N$ is faster than $\polylog(N)$,
then an efficient solution is unlikely unless $M$ is further restricted.
An interesting question in this direction is whether such an efficient solution is possible if constraints on the rank or the eigenvalue distribution of $M$ are known, in addition to restrictions on $A$ and $\bm{b}$.

\acknowledgements
We acknowledge the traditional owners of the land on which this work was undertaken at the University of Calgary: the Treaty 7 First Nations. This project is supported by the Government of Alberta and 
by the Natural Sciences and Engineering Research Council of Canada (NSERC).
A.\ A.\ acknowledges support through the Killam 2020 Postdoctoral Fellowship. This work was partially completed while R. R. Nerem was at the Quantum Algorithms Institute, where he was funded by a Mitacs Accelerate Scholarship.

\appendix

\section{Lower bound for B-EVHM for fixed $V$}
\label{sec:appendix}
In this appendix, we show that the lower bound B-EVHM given by Theorem~\ref{thm:queryB-EVHM} also applies to 
the restriction of B-EVHM in which $V$ is a fixed input, and not a unitary black box. We first adapt a theorem which shows that block access to two matrices $A$ and $B$ can be used to construct block access to $AB$.
\begin{lemma}
[\cite{Gilyn2019}]\label{thm:multiplying block encodings}
Given $n$ qubits,  block access $(\alpha_A, a_A, 0, U_A)$ to $2^n \times 2^n$ Hermitian matrix $A$ and block access  $(\alpha_B, a_B, 0, U_B)$ to $2^n \times 2^n$ Hermitian matrix $B$, then block access
$(\alpha_A\alpha_B, a_A + a_B, 0, U_{AB})$ to $AB$ can be constructed such that $U_{AB}$ uses only a single call to $U_A$ and a single call to $U_B$. 
\end{lemma}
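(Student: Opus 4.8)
The plan is to realize $U_{AB}$ as the product of the two given block encodings, each trivially extended to act on the ancilla register of the other. Concretely, I would order the registers as [ancilla of $A$, $a_A$ qubits][ancilla of $B$, $a_B$ qubits][system, $n$ qubits]. Let $\tilde U_A$ denote $U_A$ acting on the $a_A$ ancilla qubits of $A$ together with the $n$ system qubits, and as the identity on the $a_B$ ancilla qubits of $B$; symmetrically let $\tilde U_B$ denote $U_B$ acting on the $a_B$ ancilla qubits of $B$ together with the system qubits, and as the identity on the ancilla of $A$. I would then set $U_{AB} := \tilde U_A \tilde U_B$, which is manifestly an $(a_A + a_B + n)$-qubit unitary using exactly one query to each of $U_A$ and $U_B$, matching the claimed ancilla count $a_A + a_B$ and query budget.

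To verify the block-encoding property of Definition~\ref{def:block encoding}, I would compute the action of the $(\ket{0^{a_A}}\ket{0^{a_B}})$-block $\bra{0^{a_A}}\bra{0^{a_B}} U_{AB} \ket{0^{a_A}}\ket{0^{a_B}}$ on an arbitrary system state $\ket{\psi}$. The key step is to apply $\tilde U_B$ first: since $\tilde U_B$ is the identity on the ancilla of $A$, it leaves that register in $\ket{0^{a_A}}$ and, by the definition of block encoding, writes $U_B\ket{0^{a_B}}\ket{\psi} = \ket{0^{a_B}}(B\ket{\psi}/\alpha_B) + \ket{\perp}$, where $\ket{\perp}$ has no support on $\ket{0^{a_B}}$ in the ancilla of $B$. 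Applying $\tilde U_A$, which leaves the ancilla of $B$ untouched, and then projecting onto $\bra{0^{a_A}}\bra{0^{a_B}}$ annihilates $\ket{\perp}$ and reduces the surviving term to $\bra{0^{a_A}} U_A \ket{0^{a_A}}(B\ket{\psi}/\alpha_B) = (A/\alpha_A)(B/\alpha_B)\ket{\psi}$. Hence the block equals $AB/(\alpha_A\alpha_B)$ exactly, so $U_{AB}$ is an $(\alpha_A\alpha_B,\, a_A+a_B,\, 0)$-block encoding of $AB$, and the error vanishes because $\delta_A = \delta_B = 0$.

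The only real subtlety, and thus the step I expect to require the most care, is the bookkeeping around the shared system register: because both $U_A$ and $U_B$ act nontrivially on the system qubits, the extensions $\tilde U_A$ and $\tilde U_B$ are not a clean tensor product, and one must track precisely which operator touches which ancilla in order to confirm that the cross terms with a nonzero ancilla (the $\ket{\perp}$ contribution above) are genuinely killed by the final projection rather than feeding back into the $\ket{0^{a_A}}\ket{0^{a_B}}$ block. This hinges on the fact that $\tilde U_A$ acts as the identity on the ancilla of $B$, so it commutes with the projector onto $\ket{0^{a_B}}$ on that register; once this is made explicit the cancellation is automatic. I would also remark that Hermiticity of $A$ and $B$ is never invoked, so the construction applies verbatim to the generally non-Hermitian product $AB$. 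The query counts then follow immediately from the definition of $U_{AB}$.
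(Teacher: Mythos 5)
Your construction is correct, and it is precisely the standard product-of-block-encodings argument from the cited reference~\cite{Gilyn2019}; note that the paper states this lemma without proof, importing it from that citation, so there is nothing to diverge from. Your treatment of the cross terms --- using that $\tilde U_A$ acts as the identity on the ancilla of $B$ and hence commutes with the projector onto $\ket{0^{a_B}}$, so the $\ket{\perp}$ component cannot feed back into the $\ket{0^{a_A}}\ket{0^{a_B}}$ block --- is exactly the key point, and your remark that Hermiticity of $A$ and $B$ is never used is also accurate.
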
 

We now show that Theorem~\ref{thm:queryB-EVHM} applies when to B-EVHM under any restriction that fixes $V$. 
\begin{lemma}
\label{lem:B-EVHMfixedV}
Let $\{V_n \in \mathcal{U}(\mathscr{H}_2^{\otimes n}): \ n \in \mathbb{Z}^+\}$ be a set of unitary operators. 
Then any algorithm $\mathcal{Q}$ that solves all instances of B-EVHM with input $(n,(\alpha_M,a_M,0,U_M),1/\epsilon,V=V_n)$
makes $\Omega(\alpha_M/\epsilon)$ queries to the block encoding of $M$. 
\end{lemma}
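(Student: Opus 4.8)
The plan is to prove the lemma by a reduction that transfers the already-established lower bound to an arbitrary fixed family. The bound of Theorem~\ref{thm:queryB-EVHM} was in fact derived (through Theorem~\ref{thm:SLE2queryB}) for the \emph{specific} fixed choice $V=H^{\otimes n}$, so it suffices to show that an algorithm $\mathcal{Q}$ solving B-EVHM for the given family $\{V_n\}$ can be converted, with no asymptotic increase in the number of queries to the block encoding, into an algorithm solving B-EVHM for the family $\{H^{\otimes n}\}$. Any such conversion then forces the $\Omega(\alpha_M/\epsilon)$ query lower bound onto $\mathcal{Q}$ by contradiction. The essential point exploited throughout is that $V_n$ is a \emph{fixed}, directly-implementable operator, so applying it (or any fixed unitary built from it) does not count as a query to $U_M$.

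The core of the reduction is an operator-conjugation identity. For each fixed $V_n$, define the fixed unitary $G_n := V_n H^{\otimes n}$ and set $\tilde M := G_n M G_n^\dagger$. Because $H^{\otimes n}$ is both Hermitian and unitary, a short computation gives
\begin{equation}
  \braket{0^n|V_n^\dagger \tilde M V_n|0^n} = \braket{0^n|H^{\otimes n} M H^{\otimes n}|0^n} = \braket{+^n|M|+^n},
\end{equation}
so an $\epsilon$-additive estimate of the expectation value of $\tilde M$ against $V_n\ket{0^n}$ is exactly an $\epsilon$-additive estimate of the expectation value of $M$ against $H^{\otimes n}\ket{0^n}$. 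Moreover $\tilde M$ is Hermitian and satisfies $\norm{\tilde M}=\norm{M}\le\alpha_M$, so it admits a block encoding with the same normalization as $M$.

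Next I would construct block access to $\tilde M$ from block access to $M$. Since $G_n$ is a fixed unitary it is a $(1,0,0)$-block encoding of itself, and likewise for $G_n^\dagger$; two applications of Lemma~\ref{thm:multiplying block encodings} then yield block access $(\alpha_M,a_M,0,U_{\tilde M})$ to $\tilde M$ in which $U_{\tilde M}$ makes a single query to $U_M$ together with a constant number of applications of the directly-implementable unitaries $G_n,G_n^\dagger$. Crucially the normalization $\alpha_M$ and the exactness $\delta=0$ are preserved, and each query to $U_{\tilde M}$ costs exactly one query to $U_M$. Assembling the reduction, given a B-EVHM instance $(n,(\alpha_M,a_M,0,U_M),1/\epsilon,H^{\otimes n})$ I would build $U_{\tilde M}$ and run $\mathcal{Q}$ on $(n,(\alpha_M,a_M,0,U_{\tilde M}),1/\epsilon,V_n)$; by the displayed identity its output is a valid estimate of $\braket{+^n|M|+^n}$ with probability at least $2/3$, and the total number of queries to $U_M$ equals the number of queries $\mathcal{Q}$ makes to $U_{\tilde M}$. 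Proceeding by contradiction as in the proof of Theorem~\ref{thm:queryB-EVHM}, if $\mathcal{Q}$ used $o(\alpha_M/\epsilon)$ queries on these instances then the reduction would solve B-EVHM with $V=H^{\otimes n}$ in $o(\alpha_M/\epsilon)$ queries, contradicting the bound already established for that fixed $V$.

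I expect the only delicate points to be bookkeeping rather than conceptual. First I must verify that treating the fixed unitaries $G_n$ as trivial $(1,0,0)$-block encodings is legitimate within Lemma~\ref{thm:multiplying block encodings} and that this leaves the parameters $(\alpha_M,a_M,0)$ of the resulting block encoding unchanged, so that both the query count and the normalization transfer exactly. Second, I should state explicitly that Theorem~\ref{thm:queryB-EVHM} does supply a lower bound for the fixed instance $V=H^{\otimes n}$ (its proof reduces to Theorem~\ref{thm:SLE2queryB}, which fixes $V=H^{\otimes n}$), so invoking it here to conclude the general fixed-family case is not circular.
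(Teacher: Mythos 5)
Your proof is correct, and its engine is the same as the paper's: conjugate the block-encoded matrix by a unitary chosen so that the fixed-$V_n$ solver computes the expectation value you actually want, with Lemma~\ref{thm:multiplying block encodings} ensuring the conjugated block encoding retains the parameters $(\alpha_M, a_M, 0)$ and costs one query to $U_M$ per use. The difference is the direction of the reduction. The paper reduces the \emph{full black-box} problem to the fixed-$V_n$ problem: for an arbitrary instance with black box $V$ it forms $F = V_n V^\dagger M V V_n^\dagger$ --- your $\tilde M$ is exactly this $F$ specialized to $V = H^{\otimes n}$ --- and then invokes Theorem~\ref{thm:queryB-EVHM} exactly as stated; the conjugator depends on the black box $V$, which is harmless because only queries to $U_M$ are counted. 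You instead reduce fixed-$H^{\otimes n}$ to fixed-$V_n$, so every unitary in your reduction is fixed and directly implementable, but you must invoke the strengthened claim that Theorem~\ref{thm:queryB-EVHM} already holds under the restriction $V = H^{\otimes n}$. You justify this correctly: the reduction inside the proof of Theorem~\ref{thm:queryB-EVHM} never alters $V$, and its source, Lemma~\ref{thm:SLE2queryB}, is stated precisely for $V = H^{\otimes n}$, so the hardness does originate at that fixed instance. In short, the paper's route uses its prior theorems as stated at the price of a black-box-dependent conjugation, while yours keeps all conjugating unitaries fixed at the price of re-opening the proof of Theorem~\ref{thm:queryB-EVHM}; both are sound, and the query accounting is identical. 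One caveat you share with the paper: Lemma~\ref{thm:multiplying block encodings} is stated for Hermitian factors, whereas your $G_n$ (and the paper's $V_nV^\dagger$) is not Hermitian; the underlying product-of-block-encodings result of Ref.~\cite{Gilyn2019} holds for arbitrary matrices, so both proofs implicitly rely on that more general form.
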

\begin{proof}
We first prove that given an algorithm $\mathcal Q$ as described in the Lemma~\ref{lem:B-EVHMfixedV}, any instance of B-EVHM
can be solved by making one query to $\mathcal{Q}$ and without making any additional uses of the block encoding of $M$. 
The desired lower bound then follows immediately from Theorem~\ref{thm:queryB-EVHM}. 
Note that 
\begin{equation}
    U_F = (V_n V^\dagger \otimes \ket{0^{a_M}}\bra{0^{a_M}})U_M (V V_n^\dagger\otimes \ket{0^{a_M}}\bra{0^{a_M}})
\end{equation}
is a $(\alpha_M,a_M,0)$ block encoding of $F = V_n V^\dagger M V V_n^\dagger$. This block encoding can be instructed by Lemma~\ref{thm:multiplying block encodings}. 
Then any instance $(n,(\alpha_M,a_M, 0,U_M),1/\epsilon,V)$  of B-EVHM can be solved by
making one query to $\mathcal{Q}$ with input 
$(n,(\alpha_M,a_M, 0,U_F),1/\epsilon)$. The output $r$ of this algorithm satisfies
\begin{equation}
    \abs{r - \braket{0^n|V_n^\dagger V_n V^\dagger M V V_n^\dagger V_n|0^n}} < 2/3 
\end{equation}
which implies
\begin{equation}
    \abs{r - \braket{0^n|V^\dagger M V |0^n}} < 2/3
\end{equation}
with probability at least $2/3$ as required.
\end{proof}

\section{Algorithm for S-EVHM}
\label{app:sslep}
In this appendix, we construct an algorithm for S-EVHM to show that the query-complexity lower bound 
for S-EVHM given by Corollary.~\ref{col:LB for S-EVHM} is tight.
\begin{thm}
\label{thm:S-EVHM}
Given $n$ qubits, 
sparse access $(d_M, \beta_M,O_{M_{\rm val}}, O_{M_{\rm loc}})$ to a $2^n \times 2^n$ Hermitian matrix $M$, 
an accuracy~$\epsilon > 0$ and a unitary black box $V$, then
an $\epsilon$-additive approximation $\tilde{u}$ to $u := \braket{0 |V^\dagger M V| 0 }$ 
can be returned, with probability at least 2/3,
using $\mathcal  O(d_M\beta_M/\epsilon)$ queries to
sparse-access oracles for $M$, 
$\mathcal  O(d_M\beta_M/\epsilon)$
queries to $V$
as well as 
\begin{equation}
\label{eq:S-EVHM 2qg}
   \mathcal O\left(\frac{d_M \beta_M }{\epsilon} \left(n+\log ^{2.5}\left(\frac{d_M\beta_M}{\epsilon}\right)\right)\right),
\end{equation}
additional 2-qubit gates.
\end{thm}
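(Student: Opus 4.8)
The plan is to reduce S-EVHM to B-EVHM: first convert the sparse-access encoding of $M$ into a block encoding, and then run Algorithm~\ref{alg:AB-EVHM} on the resulting block encoding. Concretely, I would invoke Lemma~\ref{lem:block encoding sparse access} with the block-encoding error set to $\delta := \epsilon/2$ to obtain block access $(d_M\beta_M, 2, \epsilon/2, U_M)$ to $M$, where each use of $U_M$ costs $\mathcal{O}(1)$ queries to the sparse-access oracles and $\mathcal{O}(n + \log^{2.5}(d_M\beta_M/\epsilon))$ additional 2-qubit gates. I would then feed this block encoding, together with the black box $V$ and target accuracy $\epsilon/2$, into the B-EVHM routine of Theorem~\ref{thm:B-EVHM}, using normalization $\alpha_M = d_M\beta_M$, and return its output as $\tilde u$.

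The one step requiring genuine care, and the main obstacle, is the error analysis: Theorem~\ref{thm:B-EVHM} assumes an exact ($\delta = 0$) block encoding, whereas Lemma~\ref{lem:block encoding sparse access} only guarantees one with error $\delta = \epsilon/2$. To handle this, I would let $\tilde M := d_M\beta_M \bra{0^{2}} U_M \ket{0^{2}}$ denote the matrix that $U_M$ encodes exactly, so that Algorithm~\ref{alg:AB-EVHM} returns an $(\epsilon/2)$-additive approximation to $\braket{0|V^\dagger \tilde M V|0}$. Since $\norm{\tilde M - M} \le \delta = \epsilon/2$ by Definition~\ref{def:block encoding}, and since an expectation value is bounded by the operator norm, $\abs{\braket{0|V^\dagger \tilde M V|0} - \braket{0|V^\dagger M V|0}} \le \epsilon/2$. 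The triangle inequality then yields a total error of at most $\epsilon$, so $\tilde u$ is an $\epsilon$-additive approximation to $u = \braket{0|V^\dagger M V|0}$ with probability at least $2/3$. This additive, norm-controlled propagation of the block-encoding error is the crux of the correctness argument, and it is precisely why choosing $\delta$ to be a constant fraction of $\epsilon$ suffices.

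For the complexity I would compose the two bounds. By Theorem~\ref{thm:B-EVHM} at accuracy $\epsilon/2$ with $\alpha_M = d_M\beta_M$ and $a_M = 2$, the algorithm makes $\mathcal{O}(d_M\beta_M/\epsilon)$ queries to $U_M$, $\mathcal{O}(d_M\beta_M/\epsilon)$ queries to $V$, and $\mathcal{O}(n\, d_M\beta_M/\epsilon)$ of its own additional 2-qubit gates. Each of the $\mathcal{O}(d_M\beta_M/\epsilon)$ calls to $U_M$ in turn costs $\mathcal{O}(1)$ sparse-access queries and $\mathcal{O}(n + \log^{2.5}(d_M\beta_M/\epsilon))$ 2-qubit gates by Lemma~\ref{lem:block encoding sparse access}. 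Multiplying and adding gives $\mathcal{O}(d_M\beta_M/\epsilon)$ total sparse-access queries, $\mathcal{O}(d_M\beta_M/\epsilon)$ queries to $V$, and a 2-qubit gate count of $\mathcal{O}\left(\frac{d_M\beta_M}{\epsilon}\bigl(n + \log^{2.5}(d_M\beta_M/\epsilon)\bigr)\right)$, in which the B-EVHM routine's own $\mathcal{O}(n\, d_M\beta_M/\epsilon)$ gates are absorbed; this matches Eq.~\eqref{eq:S-EVHM 2qg}. I would also note in passing that replacing $\epsilon$ by $\epsilon/2$ inside the logarithm does not change the asymptotics, so the choice of $\delta$ is invisible to the stated bounds.
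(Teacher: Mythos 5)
Your proposal is correct and follows essentially the same route as the paper's proof: the paper likewise invokes Lemma~\ref{lem:block encoding sparse access} to build block access with error $\epsilon/2$, runs Algorithm~\ref{alg:AB-EVHM} at accuracy $\epsilon/2$, and handles the inexact encoding exactly as you do, by defining $\hat{M} := d_M\beta_M\braket{0^2|U_M|0^2}$ (your $\tilde M$) and combining $\norm{\hat M - M} \le \epsilon/2$ with the triangle inequality. Your complexity composition, including absorbing the B-EVHM routine's own $\mathcal O(n\, d_M\beta_M/\epsilon)$ gates into Eq.~\eqref{eq:S-EVHM 2qg}, also matches the paper's accounting.
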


\begin{proof}
The algorithm is as follows: 
\begin{algorithm}[H]
\begin{algorithmic}[1]
\caption{Algorithm for Sparse-access Expectation Value of an Hermitian Matrix \label{alg:S-EVHM}}
\Require{
$n$ qubits, 
sparse access $(d_M, \beta_M,O_{M_{\rm val}}, O_{M_{\rm loc}})$ to $M$, an accuracy $\epsilon > 0$, and a black box unitary $V$}
\Ensure{
an $\epsilon$-additive approximation $\tilde{u}$ to $u := \braket{0|V^\dagger M V| \psi }$ 
with probability 2/3
}

\hspace{-1.3cm}\ \textbf{Procedure:}
\State Construct block access $(d_M\beta_M,2,\epsilon,U_M)$ to  $M$. 
\Comment{use Lemma \ref{lem:block encoding sparse access}}.
\State Compute $\tilde{u}$ by executing Algorithm \ref{alg:AB-EVHM} with input 
$(n,(d_M\beta_M,2,\epsilon/2,U_M),\epsilon/2,V)$.
\State \Return $\tilde{u}$.
\end{algorithmic}
\end{algorithm}
\begin{proof}[Correctness]
Let $\hat{M} := d_M\beta_M\braket{0^2|U_M|0^2}$, so that $\norm{\hat{M}-M} < \epsilon/2$.
Then Algorithm~\ref{alg:AB-EVHM} guarantees that
\begin{equation}
    \abs{\tilde{u} - \braket{0^n|V^\dagger \hat{M} V|0^n}} < \epsilon/2,
\end{equation}
with probability at least $2/3$. We then have 
\begin{align}
    \abs{\tilde{u} - u} &\le \abs{\tilde{u} - \braket{0^n|V^\dagger \hat{M} V|0^n}} + 
    \abs{\braket{0^n|V^\dagger \hat{M} V|0^n} - \braket{0^n|V^\dagger M V|0^n}} \nonumber\\
    & \le \epsilon/2 + \norm{\hat{M}-M} < \epsilon,
\end{align}
with probability at least 2/3 as required.
\renewcommand{\qedsymbol}{}
\end{proof}

\begin{proof}[Complexity] 
The complexity of queries to sparse-access oracles follows directly from the complexities of Algorithm~\ref{alg:AB-EVHM} with $\alpha_M$ set to $d_M \beta_M$. By Lemma \ref{lem:block encoding sparse access}, the block encoding $U_M$ uses $\mathcal{O}\left(n+\log ^{2.5}\left(d_M\beta_M/\epsilon\right)\right)$ additional 2-qubit gates. Furthermore, $U_M$ is queried $\mathcal O(d_M \beta_M /\epsilon) $ 
times by Algorithm \ref{alg:AB-EVHM}. Therefore, our algorithm for S-EVHM uses 
the number of 2-qubit gates given in Eq. (\ref{eq:S-EVHM 2qg}). 
\end{proof}
\renewcommand{\qedsymbol}{}
\end{proof}
\noindent The queries to the block encoding of $M$ given in Theorem~\ref{thm:B-EVHM} saturates the lower bound for B-EVHM in
Theorem~\ref{thm:queryB-EVHM}. This theorem concludes the proof of the tightness of our lower bound
for B-EVHM.

\bibliography{References3}

\end{document}